\documentclass[journal]{IEEEtran}
\usepackage{amsmath,amsfonts}
\usepackage{algorithmic}
\usepackage{algorithm}
\usepackage{array}
\usepackage[caption=false,font=normalsize,labelfont=sf,textfont=sf]{subfig}
\usepackage{textcomp}
\usepackage{stfloats}
\usepackage{url}
\usepackage{diagbox}
\usepackage{verbatim}
\usepackage{graphicx}
\usepackage{cite}
\usepackage{footnote}
\usepackage{pifont}
\usepackage{makecell}
\usepackage{multirow}
\hyphenation{op-tical net-works semi-conduc-tor IEEE-Xplore}
\newtheorem{lemma}{Lemma}
\newtheorem{definition}{Definition}
\newtheorem{theorem}{Theorem}
\newtheorem{proof}{Proof}

\begin{document}

\title{Efficient Quantum-resistant Delegable Data Analysis Scheme with Revocation and Keyword Search in Mobile Cloud Computing}

\author{Yue Han,~\IEEEmembership{}Jinguang Han,~\IEEEmembership{Senior Member,~IEEE}, and Jianying Zhou
        
\thanks{Yue Han is with the School of Cyber Science and Engineering, Southeast University, Nanjing 210096, China (e-mail: yuehan@seu.edu.cn).}
\thanks{Jinguang Han is with the School of Cyber Science and Engineering,
Southeast University, Nanjing 210096, China, and also with the Wuxi Campus,
Southeast University, Wuxi 214125, China (e-mail: jghan@seu.edu.cn).}
\thanks{Jianying Zhou is with the Singapore University of Technology and Design,
Singapore 487372 (e-mail: jianying\_zhou@sutd.edu.sg).}}



\maketitle

\begin{abstract}
With the rapid growth of smart devices and mobile internet, large-scale data processing is becoming increasingly important, 
while mobile devices remain resource-constrained. 
Mobile Cloud Computing (MCC) addresses this limitation by offloading tasks to the cloud. 
Nevertheless, the widespread adoption of MCC also raises challenges such as 
data privacy, selective computation, efficient revocation, and keyword search.
Additionally, the development of quantum computers also threatens data security in MCC. 
To address these challenges, we propose an efficient quantum-resistant delegable data analysis scheme with revocation and keyword search (EQDDA-RKS) for MCC. 
In the proposed scheme, an authorised mobile device can perform keyword searches and compute inner product values over encrypted data without disclosing any additional information. 
Meanwhile, if a user's function key is compromised, it can be revoked.
To alleviate the burden on mobile devices,  
most of the computation which should be executed by the mobile device is outsourced to a cloud server.
Furthermore, an authorised mobile device can temporarily delegate its keyword search and function computation rights 
to a delegatee in case the device becomes unavailable due to power depletion, going offline, etc.
Our scheme is formally proven secure in the standard model against quantum attacks, chosen plaintext attacks, chosen keyword attacks, and outside keyword guessing attacks. 
Furthermore, the analysis demonstrates that the number of interactions 
between a mobile device and the central authority is $O(1)$ in our scheme, rather than growing linearly with the number of functions, which is well-suited for MCC scenarios.
\end{abstract}

\begin{IEEEkeywords}
Functional encryption, Inner product, Mobile cloud computing, Keyword search, Quantum resistance.
\end{IEEEkeywords}

\section{Introduction}
\IEEEPARstart{I}{n} recent years, the widespread proliferation of smartphones, tablets, and wearable devices has positioned mobile terminals 
as one of the primary platforms for information access and task processing. 
Forecasts indicate that global mobile phone shipments will reach approximately 1.4378 billion units by 2028 \cite{wbpf2028}. 
Nevertheless, mobile devices remain inherently constrained by processor performance, storage capacity, and battery life, 
which results in pronounced performance bottlenecks when executing computation-intensive or data-intensive applications \cite{kasmi2018algorithms}. 
The rise of mobile cloud computing (MCC) represents a compelling paradigm for tackling these challenges.
By offloading part or all of the computation and storage workloads from mobile terminals to cloud servers, 
MCC effectively alleviates the limitations imposed by terminal hardware.
However, cloud servers cannot be fully trusted, and MCC applications may involve a substantial amount of sensitive user data, including location information, 
medical records, and behavioural patterns, etc \cite{othman2013survey}. 
Therefore, ensuring the confidentiality of data uploaded to cloud servers is essential for protecting user privacy.

Currently, various secure data sharing schemes \cite{weng2025efficient,zhao2024revocable,Guo2024abd,Xue2021tcs,li2025EPR} built on traditional public key encryption (PKE) have been proposed for MCC environments. 
Although these schemes preserve data confidentiality, their decryption mechanisms are all-or-nothing and do not support selective computation over encrypted data.
This all-or-nothing property is restrictive because many MCC applications, including medical data mining \cite{wu2020fedhome} and encrypted data filtering \cite{huang2024efficient}, require more fine-grained data access control. 

As an emerging public-key paradigm, functional encryption (FE) is extensively employed to enable selective computation over ciphertexts in data analysis applications \cite{cui2020pyd,chang2023privacy,qiu2025privacy}.
Compared to traditional PKE, FE enables authorised parties to compute specific functions on encrypted data without revealing extra information, supporting more fine-grained access control.
Inner product computation serves as a key primitive underpinning numerous MCC application scenarios, such as social networks \cite{marra2017blind,lu2012spoc}, 
health data mining \cite{kadota2008weighted,li2017indexing} and 
recommendation systems \cite{li2017fexipro,aouali2022reward}, etc.
To securely compute inner products, the inner-product functional encryption (IPFE) primitive was proposed.
IPFE restricts authorised users to learning only the inner product between the vector encoded in the function key and that encrypted in the ciphertext, revealing no additional information.
Therefore, IPFE can balance the confidentiality and the availability of data in MCC. 

Leveraging these benefits, some flexible data analysis schemes that support selective computation on encrypted data based on IPFE \cite{han2023privacy,li2025function,wang2025bidirectional,zhu2025revocable,luo2022generic} have been proposed.
However, the existing schemes are unsuitable for the MCC environment due to the following limitations: (1) In existing schemes, each data user must interact with the central authority (CA) to get authorisation. As the number of interactions required by a data user grows linearly with the number of functions, such methods are impractical for resource-constrained mobile devices.
(2) Existing schemes do not support keyword search.
A large amount of user privacy is usually involved in MCC applications, such as personal photos, behavioural data, health data, etc.
To preserve user privacy, sensitive data is typically encrypted before uploading to cloud servers, which prevents users from performing searches over it. 
(3) Existing delegable flexible data analysis schemes rely on traditional hardness assumptions, such as the discrete problem and subgroup decision problem, which are vulnerable to quantum attacks.

To tackle these issues, we propose an efficient quantum-resistant delegable data analysis scheme with revocation and keyword
search (EQDDA-RKS) for MCC. 
In EQDDA-RKS, mobile users can perform keyword searches over ciphertexts and compute inner product values of encrypted data without revealing the underlying data, thereby protecting sensitive information while enhancing data usability.
Since the entropy of keywords is limited in MCC, an outside adversary can launch offline keyword guessing attacks (OKGA) after obtaining the keyword trapdoor. In our scheme, to resist this attack, a mobile user can designate a cloud server to perform keyword searching. 
Meanwhile, our scheme supports fine-grained revocation, namely, revoking a user’s specific function computation rights, instead of all computation rights.
To reduce the overhead on mobile devices, most of the workloads for mobile users are outsourced to a cloud server. Specifically, the cloud server uses a short-term transformation key tied to a function to convert a ciphertext into a transformed ciphertext, which non-revoked users can then decrypt with their corresponding function keys. 
Notably, users only need to interact with the CA once and then can independently generate function keys.
Furthermore, when a mobile device is offline or out of power, it can temporarily delegate its function computation and keyword search rights to another device.
Our scheme is constructed on lattice-based cryptography and provides effective protection for sensitive data in MCC against the escalating threats of quantum computing. 

\subsection{Contributions}
We propose EQDDA-RKS, which provides the following interesting features: 
\begin{itemize}
    \item Keyword search and selective computation: data users can search over encrypted data and compute the inner product values of encrypted data, while ensuring that no additional information is disclosed.
    \item Fine-grained revocation: data users' specific function computation rights can be revoked.
    \item Outsourcing computation: Most of the data users' workloads are outsourced to the server, and a data user only needs to interact with the CA once.
    \item Delegation: data users can temporarily delegate their keyword search and function computation rights to a delegatee.
    
\end{itemize} 

Our contributions are summarised as follows: 
(1) we formalise the definitions and security models of the proposed EQDDA-RKS scheme;
(2) we present the first concrete instantiation of EQDDA-RKS based on lattice-based cryptography;
(3) we prove that our EQDDA-RKS is secure against quantum attacks, chosen plaintext attacks, chosen keyword attacks, and OKGA in the standard model;
(4) we analyse the efficiency of our proposed scheme through comprehensive theoretical analysis and implemented experiments, and compare it with related schemes.

\subsection{Challenges and Techniques}
$Challenges$: The following issues must be addressed in designing our scheme.

\begin{enumerate}
    \item Since the computational resources of mobile devices are limited, a challenging problem is how to reduce the user’s computational overhead and the number of interactions with the CA while achieving authorisation and revocation.
    
    \item Considering that mobile devices may go offline or run out of power, it is necessary to delegate users’ keyword search and function evaluation rights to other users. However, how to achieve temporary delegation remains a challenging problem.

    \item Existing post-quantum keyword search schemes that resist OKGA are secure only in the random oracle model. It is difficult to design a post-quantum keyword search scheme resistant to OKGA in the standard model.

\end{enumerate}

$Techniques$: To address the above issues, we employ the following techniques.

\begin{enumerate}
    \item To address the first issue, outsourcing computation is applied to enable mobile devices to outsource heavy computation to cloud servers. Meanwhile, since function authorisation is controlled through update keys issued by the CA to the server, each user only needs to interact with the central authority once.

    \item To realise temporary delegation, we embed timestamps into the keyword trapdoors, function keys, and ciphertexts. As a result, keyword search and function computation over encrypted data can only be performed when the timestamps match.

    \item To construct a post-quantum OKGA-resistant keyword search in the standard model, we build upon the scheme in \cite{agrawal2010efficient}. Our scheme employs a lattice composed of a left and right lattice: the left-lattice trapdoor acts as the master secret for generating all keyword trapdoors, while the right-lattice trapdoor is used in the security proof to simulate keyword trapdoors. Additionally, the server's public key is embedded in the keyword trapdoors, ensuring that only the designated server can execute keyword searches, thereby providing OKGA resistance.
\end{enumerate}

\subsection{Organization}
The rest of the paper is structured as follows. Section \ref{section:RW} surveys related work. Section \ref{section:Pre} presents the necessary preliminaries. Section \ref{section:SA2D} introduces the system framework, formal definitions, and security models. Section \ref{section:CC} describes the concrete construction of our scheme. Section \ref{section:PA} presents both theoretical and experimental performance analyses. Section \ref{section:SP} gives the security proofs. Section \ref{section:Con} concludes the paper.

\section{Related work} \label{section:RW}
     
\subsection{Inner-Product Functional Encryption (IPFE)}
IPFE allows an authorised user to compute the inner product between the vector in their secret key and that in a ciphertext, without disclosing any additional information about the plaintext.
Abdalla et al. \cite{abdalla2015simple} first proposed a generic IPFE construction with selective security. Later, Agrawal et al. \cite{agrawal2016fully} presented adaptively secure IPFE schemes based on standard assumptions. Considering function privacy, Agrawal et al. \cite{agrawal2015practical} introduced a function-hiding IPFE. 
To tackle concerns regarding master key privacy and encryption vector privacy in IPFE, Yang et al. \cite{yang2020privacy} proposed a privacy-preserving IPFE scheme enabling users to request an unrestricted number of function keys.
Additionally, the scheme supports outsourced decryption, with the user’s storage and computation overhead independent of the vector length. 
However, the schemes in \cite{abdalla2015simple,agrawal2016fully,agrawal2015practical,yang2020privacy} support inner product computations only on data from a single encryptor. To overcome this limitation, Abdalla et al. \cite{abdalla2017multi} introduced multi-input IPFE (MI-IPFE), allowing each encryptor to independently produce encrypted data.

To prevent key abuse by malicious users, Luo et al. \cite{luo2022generic} proposed a trace-and-revoke IPFE scheme, which identifies malicious users and allows the encryptor to revoke their computation rights.
However, the original construction \cite{luo2022generic} only supports bounded collusion, requiring a predetermined limit on the number of secret keys a pirate decoder may acquire. To overcome this, Luo et al. \cite{luo2024fully} proposed a fully collusion-resistant trace-and-revoke IPFE scheme.
In both \cite{luo2022generic,luo2024fully}, user revocation is implemented via a revocation list specified during encryption, referred to as direct revocation. 
To achieve a balance between security and efficiency for health data, Zhu et al. \cite{zhu2025revocable} introduced a revocable hierarchical-identity-based IPFE scheme, 
where patients send private information to a central authority, which generates ciphertexts for authenticated users and uploads them to cloud servers. When user revocation is required, the central authority updates the corresponding ciphertext to dismiss the user’s computation capability.
Considering that data users may need to verify the origin of data, Yao et al. \cite{yao2025revocable}  proposed a server-aided revocable attribute-based matchmaking IPFE, where both encryptors and users can define expressive policies, thereby realising bilateral access control to ensure data source authentication.
Meanwhile, the scheme supports user revocation and is resistant to key exposure attacks.
Although schemes \cite{luo2022generic,luo2024fully,zhu2025revocable,yao2025revocable} enable revocation, they can only revoke a user's entire function computation rights rather than specific function computing rights. To solve this issue, Han et al. \cite{han2025inner} proposed the IPFE scheme supporting fine-grained revocation, where users' specific function computing rights can be revoked. 

\subsection{Searchable Encryption (SE)}
Searchable encryption (SE) \cite{song2000practical} enables keyword searches over encrypted data and is primarily categorised into symmetric SE (SSE) and public key SE (PKSE). In SSE, the same secret key is used to create both ciphertexts and keyword trapdoors. By contrast, PKSE generates ciphertexts with the public key, while trapdoors are derived from the associated secret key \cite{boneh2004public}. 

By combining outsourced attribute-based encryption (ABE) with PKSE, Li et al. \cite{li2017out} proposed ABE with keyword search, supporting fine-grained access control and offloading computationally intensive tasks to the server, thereby reducing the user’s local workload.
Considering that cloud servers may be untrusted or prone to software errors, Chen et al. \cite{10123958} proposed a publicly verifiable PKSE scheme based on blockchain technology. 
Han et al. \cite{2024Blockchain} proposed a privacy-preserving PKSE with strong traceability, where users obtain trapdoors without revealing identities and keywords, and a tracer can disclose them if required.
In PKSE, since anyone can generate keyword ciphertexts, an outside adversary who obtains a keyword trapdoor can perform an exhaustive search on the ciphertexts to launch OKGA \cite{byun2006off}.
To resist OKGA, Rhee et al. \cite{rhee2010trapdoor} introduced trapdoor indistinguishability and demonstrated that it ensures resistance to OKGA. 
Byun et al. \cite{byun2006off} identified insider keyword-guess attacks (IKGA) in PKSE, where an insider adversary, such as a cloud server, can exhaustively search keyword ciphertexts to determine if a trapdoor matches a keyword. To mitigate IKGA, Huang et al. \cite{huang2017efficient} proposed public key authenticated encryption with keyword search (PAEKS), in which keyword ciphertexts are generated using the sender’s secret key, providing authentication.

As quantum computing develops, there is an increasing focus on quantum-resistant searchable encryption schemes, and several lattice-based PKSE schemes have been proposed \cite{behnia2018lattice,zhang2019fs}. 
Considering the issues of key leakage and IKGA attacks, Xu et al. \cite{xu2025lattice} proposed a lattice-based forward-secure PAEKS scheme that also supports multi-user scenarios. 
To address the OKGA problem, Fan et al. \cite{fan2023lattice} introduced a designated-server PAEKS scheme that allows the sender to specify which server performs the keyword test.

In Table \ref{tab:compare}, we compare the features of EQDDA-RKS with related schemes.
Schemes \cite{xu2025lattice,fan2023lattice} support keyword search but do not allow computation over encrypted data. Although scheme \cite{fan2023lattice} can resist OKGA, it is secure only in the ROM, while our construction is proven secure in the standard model. 
In schemes \cite{zhu2025revocable,luo2022generic,luo2024fully,yao2025revocable,han2025inner}, users can perform computations over encrypted data, but they cannot conduct keyword searches. Moreover, schemes \cite{zhu2025revocable,luo2022generic,luo2024fully,yao2025revocable} can only support user revocation, but cannot revoke a user’s specific function computation rights. 
However, our scheme supports fine-grained revocation, which enables the revocation of a user’s specific computing rights, 
instead of all computation rights.
Since the security of schemes \cite{zhu2025revocable,yao2025revocable}  relies on the discrete logarithm problem, they cannot resist quantum attacks. In contrast, our scheme achieves post-quantum security by leveraging lattice-based cryptography.
Furthermore, our scheme supports delegation, namely, both search rights and function computation rights can be delegated. This feature has not been considered in the existing schemes.

\section{Preliminaries} \label{section:Pre}

\subsection{Lattice}
$\mathbf{Lattice.}$ 
A lattice $\Lambda$ generated by $n$ linearly independent vectors $\mathbf{b}_1,\ldots ,\mathbf{b}_n \in \mathbb{R}^n$ is defined as
$\Lambda = \{ \sum_{i=1}^{n} x_i \mathbf{b}_i : x_i \in \mathbb{Z} \}$. 
Its dual lattice is
$\Lambda^{\bot} = \{ \mathbf{y} \in \mathrm{span}(\Lambda) : \forall \mathbf{x} \in \Lambda, \langle \mathbf{x}, \mathbf{y} \rangle \in \mathbb{Z}\}$.
For $\mathbf{A} \in \mathbb{Z}_q^{n \times m}$, define
$\Lambda_q^{\bot}(\mathbf{A}) = \{ \mathbf{u} \in \mathbb{Z}^m : \mathbf{A}\mathbf{u} = \mathbf{0} \ (\mathrm{mod\ } q) \}, \quad
\Lambda_q^{\mathbf{z}}(\mathbf{A}) = \{ \mathbf{u} \in \mathbb{Z}^m : \mathbf{A}\mathbf{u} = \mathbf{z} \ (\mathrm{mod\ } q) \}$, 
where $q$ is prime.

$\mathbf{Matrix\ Norms.}$ Let $\|\mathbf{u}\|$ and $\|\mathbf{u}\|_\infty$ denote the $\ell_2$ and chebyshev norms of a vector $\mathbf{u}$. For a matrix $\mathbf{A}$, let $\widetilde{\mathbf{A}}$ be the Gram-Schmidt orthogonalisation of its columns. Additionally, $\|\mathbf{A}\|$, $\|\mathbf{A}\|_2$, and $s_1(\mathbf{A})$ denote the length of the longest column, the operator norm, and the spectral norm of $\mathbf{A}$, respectively.

\begin{table*}[!ht] 
    \caption{Feature Comparison with Related Schemes}
    \label{tab:compare}
    \centering
    \normalsize
    \resizebox{0.8\textwidth}{!}{
    \begin{tabular}{|c|c|c|c|c|c|c|c|c|}
    \hline
    Scheme&\makecell{Selective\\ computing}&\makecell{Quantum\\ resistance}&Revocation type&\makecell{Keyword\\ search}&\makecell{OKGA\\ resistance} & \makecell{Without\\ ROM}& \makecell{Delegation} \\
    \hline
    \cite{xu2025lattice}&\ding{55}&\ding{51}&\ding{55}&\ding{51}&\ding{55}&\ding{55}&\ding{55}\\
     \hline
     \cite{fan2023lattice}&\ding{55}&\ding{51}&\ding{55}&\ding{51}&\ding{51}&\ding{55}&\ding{55}\\
    \hline
    \cite{zhu2025revocable}&\ding{51}&\ding{55}&User revocation&\ding{55}&\ding{55}&\ding{55}&\ding{55}\\
    
    \hline
    \cite{luo2022generic}&\ding{51}&\ding{51}&User revocation&\ding{55}&\ding{55}&\ding{51}&\ding{55}\\

    \hline
    \cite{luo2024fully}&\ding{51}&\ding{51}&User revocation&\ding{55}&\ding{55}&\ding{51}&\ding{55}\\
    \hline
    \cite{yao2025revocable}&\ding{51}&\ding{55}&User revocation&\ding{55}&\ding{55}&\ding{51}&\ding{55}\\
    \hline
    \cite{han2025inner}&\ding{51}&\ding{51}&Fine-grained revocation&\ding{55}&\ding{55}&\ding{51}&\ding{55}\\
    \hline
    EQDDA-RKS&\ding{51}&\ding{51}&Fine-grained revocation&\ding{51}&\ding{51}&\ding{51}&\ding{51}\\
    \hline
    \end{tabular}}
    \end{table*}

$\mathbf{Gaussian\ Distribution.}$
For any lattice $\Lambda \subseteq \mathbb{Z}^n$, the discrete gaussian distribution over $\Lambda$ with center $\mathbf{c} \in \mathbb{R}^n$ and width $\sigma > 0$ is defined as
\begin{equation*}
\mathcal{D}_{\Lambda, \sigma, \mathbf{c}}(\mathbf{y}) 
= \frac{e^{-\pi \frac{\|\mathbf{y}-\mathbf{c}\|^2}{\sigma^2}}}
       {\sum_{\mathbf{x} \in \Lambda} e^{-\pi \frac{\|\mathbf{x}-\mathbf{c}\|^2}{\sigma^2}}},  
\quad \mathbf{y} \in \Lambda.
\end{equation*}

$\mathbf{Learning\ with\ Errors\ (LWE)}$ \cite{regev2009lattices}. For a positive integer $n$, a prime $q$, and a real $\alpha \in (0,1)$, the $\mathrm{LWE}_{n,q,\alpha}$ problem is defined via access to a challenge oracle $\mathcal{O}$, which is either a sampler $\mathcal{O}_s$ or a  sampler $\mathcal{O}'_s$, with behaviors as follows: 

\begin{itemize}
    \item $\mathcal{O}_s$ outputs $(\mathbf{u}, \mathbf{u}^\top \mathbf{s} + \mathbf{e})$, where $\mathbf{u}, \mathbf{s} \in \mathbb{Z}_q^n$ are sampled uniformly, and $\mathbf{e} \in \mathbb{Z}$ is drawn from the discrete gaussian $\mathcal{D}_{\mathbb{Z}, \alpha q}$.
    \item $\mathcal{O}'_s$ outputs uniform samples from $\mathbb{Z}_q^n \times \mathbb{Z}_q$.
\end{itemize}

The LWE assumption states that no PPT adversary $\mathcal{A}$, even when allowed multiple queries to $\mathcal{O}$, can distinguish $\mathcal{O}_s$ from $\mathcal{O}'_s$ with non-negligible advantage, namely
$Adv^{LWE}_{\mathcal{A}}(\lambda) = 
\left\lvert 
\mathrm{Pr}\left[ \mathcal{A}^{O_s}=1\right]  -\mathrm{Pr}\left[ \mathcal{A}^{O_s'}=1\right]
\right\rvert
\leq \epsilon(\lambda).$

$\mathbf{Sample\ Algorithms}$\cite{gentry2008trapdoors,agrawal2010efficient}. 
 \begin{itemize}
     \item $\mathsf{TrapGen}(n,m,q) \to (\mathbf{A},\mathbf{T}_\mathbf{A})$:  
Given $n$, $q$, and $m \ge O(n \log q)$, the algorithm outputs a matrix $\mathbf{A} \in \mathbb{Z}_q^{n \times m}$ and a trapdoor $\mathbf{T}_\mathbf{A} \in \mathbb{Z}^{m \times m}$ for the lattice $\Lambda_q^\bot(\mathbf{A})$. The matrix $\mathbf{A}$ is statistically close to uniform over $\mathbb{Z}_q^{n \times m}$, and the Gram--Schmidt norm of the trapdoor satisfies $\|\widetilde{\mathbf{T}_\mathbf{A}}\| \le O(\sqrt{n \log q})$.

    \item For $m \ge O(n \log q)$, there exists a full-rank matrix $\mathbf{G} \in \mathbb{Z}_q^{n \times m}$ such that the lattice $\Lambda_q^\bot(\mathbf{G})$ has a publicly known basis $\mathbf{T}_{\mathbf{G}} \in \mathbb{Z}^{m \times m}$ with Gram--Schmidt norm $\|\widetilde{\mathbf{T}_{\mathbf{G}}}\| \le \sqrt{5}$.
    
    \item $\mathsf{SampleBasisLeft}(\mathbf{A},\mathbf{B},\mathbf{T}_\mathbf{A},\rho) \to \mathbf{T}_{[\mathbf{A} | \mathbf{B}]}$:  
Given $\mathbf{A}, \mathbf{B} \in \mathbb{Z}_q^{n \times m}$, a trapdoor $\mathbf{T}_\mathbf{A} \in \mathbb{Z}^{m \times m}$ for $\Lambda_q^\bot(\mathbf{A})$, and a parameter $\rho \ge \|\widetilde{\mathbf{T}_\mathbf{A}}\| \cdot \omega(\sqrt{\log m})$, the algorithm outputs a basis $\mathbf{T}_{[\mathbf{A} \mid \mathbf{B}]}$ that is statistically close to the discrete gaussian distribution $\mathcal{D}_{\Lambda_q^\bot([\mathbf{A} \mid \mathbf{B}]), \rho}$. 
\end{itemize}

\begin{itemize}
   \item $\mathsf{SampleBasisRight}(\mathbf{A},\mathbf{G},\mathbf{S},\mathbf{T}_\mathbf{G},\rho) \to \mathbf{T}_{[\mathbf{A} | \mathbf{AS}+\mathbf{G}]}$:  
Given $\mathbf{A}, \mathbf{G} \in \mathbb{Z}_q^{n \times m}$, a low-norm matrix $\mathbf{S} \in \mathbb{Z}_q^{m \times m}$, a trapdoor $\mathbf{T}_\mathbf{G} \in \mathbb{Z}^{m \times m}$ for $\Lambda_q^\bot(\mathbf{G})$, and a parameter  
$\rho \ge \sqrt{5} \cdot (1 + \|\mathbf{S}\|_2 \cdot \omega(\sqrt{\log m}))$,  
the algorithm outputs a basis $\mathbf{T}_{[\mathbf{A} \mid \mathbf{A}\mathbf{S} + \mathbf{G}]}$ that is statistically close to the discrete gaussian distribution  
$\mathcal{D}_{\Lambda_q^\bot([\mathbf{A} \mid \mathbf{A}\mathbf{S} + \mathbf{G}]), \rho}$. 

    \item $\mathsf{SampleLeft}(\mathbf{A},\mathbf{B},\mathbf{T}_\mathbf{A},\mathbf{U},\rho) \to \mathbf{Z}$:  
Given $\mathbf{A}, \mathbf{B} \in \mathbb{Z}_q^{n \times m}$, a trapdoor $\mathbf{T}_\mathbf{A} \in \mathbb{Z}^{m \times m}$ for $\Lambda_q^\bot(\mathbf{A})$, a target matrix $\mathbf{U} \in \mathbb{Z}_q^{n \times l}$, and a parameter  
$\rho \ge \|\widetilde{\mathbf{T}_\mathbf{A}}\| \cdot \omega(\sqrt{\log m})$,  
the algorithm outputs a matrix $\mathbf{Z} \in \mathbb{Z}^{2m \times l}$ that is statistically close to the discrete gaussian distribution  
$\mathcal{D}_{\Lambda_q^\mathbf{U}([\mathbf{A} \mid \mathbf{B}]), \rho}$. 

    \item $\mathsf{SampleRight}(\mathbf{A},\mathbf{G},\mathbf{S},\mathbf{T}_\mathbf{G},\mathbf{U},\rho) \to \mathbf{Z}$:  
Given $\mathbf{A}, \mathbf{G} \in \mathbb{Z}_q^{n \times m}$, a low-norm matrix $\mathbf{S} \in \mathbb{Z}_q^{m \times m}$, a trapdoor $\mathbf{T}_\mathbf{G} \in \mathbb{Z}^{m \times m}$ for $\Lambda_q^\bot(\mathbf{G})$, a target matrix $\mathbf{U} \in \mathbb{Z}_q^{n \times l}$, and a parameter  
$\rho \ge \sqrt{5}\cdot (1 + \|\mathbf{S}\|_2 \cdot \omega(\sqrt{\log m}))$,  
the algorithm outputs a matrix $\mathbf{Z} \in \mathbb{Z}^{2m \times l}$ that is statistically close to the discrete gaussian distribution  
$\mathcal{D}_{\Lambda_q^\mathbf{U}([\mathbf{A} \mid \mathbf{A}\mathbf{S} + \mathbf{G}]), \rho}$.
\end{itemize}

\begin{lemma}[Bounding Spectral Norm of a Gaussian Matrix \cite{ducas2014improved}] \label{BoundGauss}
Let $\mathbf{Z} \in \mathbb{R}^{n \times m}$ be a sub-Gaussian random matrix with parameter $\rho$. There exists a universal constant $C \approx 1/\sqrt{2\pi}$ such that, for any $t \ge 0$, the spectral norm of $\mathbf{Z}$ satisfies
$s_1(\mathbf{Z}) \le C \cdot \rho \, (\sqrt{n} + \sqrt{m} + t)$
with probability at least $1 - 2 e^{-\pi t^2}$.
\end{lemma}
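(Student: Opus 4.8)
The plan is to control $s_1(\mathbf{Z})$ through its variational characterisation $s_1(\mathbf{Z})=\sup_{\mathbf{u}\in S^{n-1},\,\mathbf{v}\in S^{m-1}}\mathbf{u}^{\top}\mathbf{Z}\mathbf{v}$ over the unit spheres, reducing a statement about the whole matrix to a tail bound for a single scalar bilinear form together with a discretisation of the spheres. First I would fix unit vectors $\mathbf{u}$ and $\mathbf{v}$ and note that $\mathbf{u}^{\top}\mathbf{Z}\mathbf{v}=\sum_{i,j}u_i v_j Z_{ij}$ is, by independence of the entries, a sub-Gaussian scalar whose parameter is $\rho\,\|\mathbf{u}\|\,\|\mathbf{v}\|=\rho$. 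In the normalisation of \cite{ducas2014improved}, where a parameter-$\rho$ variable $X$ obeys $\mathbb{E}[e^{2\pi\tau X}]\le e^{\pi\rho^{2}\tau^{2}}$, a Chernoff argument gives the clean one-sided tail $\Pr[\mathbf{u}^{\top}\mathbf{Z}\mathbf{v}\ge s]\le e^{-\pi s^{2}/\rho^{2}}$, and the two-sided version costs only a factor $2$; this is where the $\pi$ in the target exponent will come from.

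Next I would discretise. Taking an $\varepsilon$-net $\mathcal{N}$ of $S^{n-1}$ and an $\varepsilon$-net $\mathcal{M}$ of $S^{m-1}$, the standard covering estimates give $|\mathcal{N}|\le(1+2/\varepsilon)^{n}$ and $|\mathcal{M}|\le(1+2/\varepsilon)^{m}$, while the usual operator-norm approximation lemma yields $s_1(\mathbf{Z})\le(1-2\varepsilon)^{-1}\max_{\mathbf{u}\in\mathcal{N},\,\mathbf{v}\in\mathcal{M}}\mathbf{u}^{\top}\mathbf{Z}\mathbf{v}$ for any $\varepsilon<1/2$. A union bound over the at most $(1+2/\varepsilon)^{n+m}$ net pairs then shows that the discretised maximum exceeds a level $s$ with probability at most $2(1+2/\varepsilon)^{n+m}e^{-\pi s^{2}/\rho^{2}}$. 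Substituting the target threshold $s=C\rho(\sqrt{n}+\sqrt{m}+t)$ and using $(\sqrt{n}+\sqrt{m}+t)^{2}\ge n+m+t^{2}$ splits the exponent into an $n+m$ part, which must absorb the $(n+m)\ln(1+2/\varepsilon)$ growth of the net, and a $t^{2}$ part, which produces the residual probability. Choosing a constant $\varepsilon$ (for instance $\varepsilon=1/4$) and $C$ large enough that $\pi C^{2}(1-2\varepsilon)^{2}\ge\max\{\pi,\,\ln(1+2/\varepsilon)\}$ yields a bound of the form $s_1(\mathbf{Z})\le C\rho(\sqrt{n}+\sqrt{m}+t)$ with failure probability at most $2e^{-\pi t^{2}}$.

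The main obstacle is that this elementary net-and-union-bound route, while giving the correct form and the correct $2e^{-\pi t^{2}}$ tail, forces a constant $C>1$ and cannot reach the sharp value $C\approx 1/\sqrt{2\pi}$ claimed in the statement, since the $(1-2\varepsilon)^{-1}$ approximation loss and the covering entropy are both strictly lossy. To recover the sharp constant I would, for the Gaussian case relevant here, replace the crude control of $\mathbb{E}\,s_1(\mathbf{Z})$ by Gordon's Gaussian min--max comparison inequality, which gives $\mathbb{E}\,s_1(\mathbf{Z})\le\sigma(\sqrt{n}+\sqrt{m})$ with $\sigma=\rho/\sqrt{2\pi}$ matching the leading two terms exactly, and then add the deviation term via Gaussian concentration of $s_1(\mathbf{Z})$ as a $1$-Lipschitz function of the entries; for genuinely sub-Gaussian entries the same leading constant is obtained by a generic-chaining (Dudley-type) estimate in place of the single-scale net. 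The delicate accounting is therefore entirely in the constants: keeping the leading term at $\rho/\sqrt{2\pi}$ while simultaneously certifying the $\pi t^{2}$ deviation exponent and the factor $2$ is the step that requires the comparison-inequality machinery rather than the union bound alone.
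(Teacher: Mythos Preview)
The paper does not prove this lemma at all; it is quoted verbatim from \cite{ducas2014improved} as a background tool and left unproved. So there is no ``paper's own proof'' to compare against.

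That said, your sketch is the standard route and is sound in structure: the variational formula $s_1(\mathbf{Z})=\sup_{\mathbf{u},\mathbf{v}}\mathbf{u}^{\top}\mathbf{Z}\mathbf{v}$, the sub-Gaussian tail for a fixed pair, the $\varepsilon$-net discretisation with the $(1-2\varepsilon)^{-1}$ approximation lemma, and the union bound together yield the stated form with failure probability $2e^{-\pi t^{2}}$. You are also right that this elementary route cannot hit $C\approx 1/\sqrt{2\pi}$ and that the sharp leading constant requires Gordon's Gaussian comparison inequality for $\mathbb{E}\,s_1(\mathbf{Z})$ together with Lipschitz concentration for the deviation term. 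One small caveat worth making explicit: the value $C\approx 1/\sqrt{2\pi}$ is specific to genuinely Gaussian entries (it is the ratio between the sub-Gaussian parameter $\rho$ and the standard deviation $\sigma=\rho/\sqrt{2\pi}$ in the normalisation of \cite{ducas2014improved}); for arbitrary sub-Gaussian entries only a universal constant is available, so the lemma as stated is really calibrated for the discrete-Gaussian matrices actually used in the construction.
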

    
\begin{lemma}[Noise distribution \cite{katsumata2016partitioning}] 
Let $\mathbf{R} \in \mathbb{Z}^{m \times t}$ and let $s \ge s_1(\mathbf{R})$. 
For a vector $\mathbf{e} \gets \mathcal{D}_{\mathbb{Z}^t, \sigma}$, 
the sum 
$\mathbf{R}\mathbf{e} + \mathsf{NoiseGen}(\mathbf{R}, s)$
is statistically close to the discrete Gaussian 
$\mathcal{D}_{\mathbb{Z}^m, 2s\sigma}$.

\end{lemma}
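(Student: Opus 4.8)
The plan is to realise $\mathsf{NoiseGen}$ as a sampler for the discrete Gaussian with the \emph{complementary} covariance and then invoke the convolution theorem for discrete Gaussians (Peikert) to show that the two contributions add up to a spherical Gaussian of width $2s\sigma$. Let $\Sigma = (2s\sigma)^2\mathbf{I}_m$ be the target covariance and observe that, in the continuous approximation, $\mathbf{R}\mathbf{e}$ has covariance $\sigma^2\mathbf{R}\mathbf{R}^\top$. I would therefore define $\mathsf{NoiseGen}(\mathbf{R},s)$ to output $\mathbf{f}\gets\mathcal{D}_{\mathbb{Z}^m,\sqrt{\Sigma_2}}$ with $\Sigma_2 = \sigma^2\bigl(4s^2\mathbf{I}_m - \mathbf{R}\mathbf{R}^\top\bigr)$, so that $\Sigma_2 + \sigma^2\mathbf{R}\mathbf{R}^\top = \Sigma$ by construction.

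First I would verify that $\mathsf{NoiseGen}$ is well defined. Since $s \ge s_1(\mathbf{R})$, we have $\mathbf{R}\mathbf{R}^\top \preceq s_1(\mathbf{R})^2\mathbf{I}_m \preceq s^2\mathbf{I}_m$ in the Loewner order, hence $4s^2\mathbf{I}_m - \mathbf{R}\mathbf{R}^\top \succeq 3s^2\mathbf{I}_m \succ 0$, so $\Sigma_2$ is positive definite and its smallest eigenvalue is at least $3s^2\sigma^2$. Thus $\sqrt{\Sigma_2}$ is a valid and efficiently computable Gaussian parameter, and because its least singular value $\sqrt{3}\,s\sigma$ dominates the smoothing parameter $\eta_\epsilon(\mathbb{Z}^m)$ whenever $\sigma$ is chosen above smoothing (i.e. $\sigma \ge \eta_\epsilon(\mathbb{Z}^t)$), an ellipsoidal discrete Gaussian sampler produces $\mathbf{f}$ in polynomial time.

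The core step is to combine $\mathbf{R}\mathbf{e}$ and $\mathbf{f}$. I would first argue, using the standard linear-image lemma for discrete Gaussians above smoothing, that $\mathbf{R}\mathbf{e}$ is statistically close to the skewed discrete Gaussian $\mathcal{D}_{\mathbf{R}\mathbb{Z}^t,\,\sigma\sqrt{\mathbf{R}\mathbf{R}^\top}}$ supported on the image lattice $\mathbf{R}\mathbb{Z}^t$. I would then apply the convolution theorem to the pair of covariances $\sigma^2\mathbf{R}\mathbf{R}^\top$ and $\Sigma_2$, whose sum is $\Sigma$: the theorem yields that $\mathbf{R}\mathbf{e}+\mathbf{f}$ is within negligible statistical distance of $\mathcal{D}_{\mathbb{Z}^m,\sqrt{\Sigma}} = \mathcal{D}_{\mathbb{Z}^m,2s\sigma}$, provided both $\sigma^2\mathbf{R}\mathbf{R}^\top$ restricted to $\mathbf{R}\mathbb{Z}^t$ and the harmonic-mean covariance $\bigl((\sigma^2\mathbf{R}\mathbf{R}^\top)^{-1}+\Sigma_2^{-1}\bigr)^{-1}$ exceed the relevant smoothing parameters. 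I expect the main obstacle to lie precisely in discharging these smoothing hypotheses on the possibly rank-deficient image lattice $\mathbf{R}\mathbb{Z}^t$: when $\mathbf{R}$ is not full row rank, $\sigma^2\mathbf{R}\mathbf{R}^\top$ is singular, so I would either restrict the convolution argument to $\mathrm{span}(\mathbf{R})$ and treat the orthogonal complement separately, or absorb the degenerate directions into $\mathbf{f}$, using the $3s^2\sigma^2$ slack in $\Sigma_2$ to keep every covariance comfortably above smoothing. Once the smoothing conditions are verified, the claimed statistical closeness follows directly from the convolution theorem.
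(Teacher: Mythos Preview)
The paper does not prove this lemma at all: it is stated in the Preliminaries as a black-box result imported from \cite{katsumata2016partitioning}, with no accompanying argument. There is therefore no ``paper's own proof'' to compare against.

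That said, your reconstruction is on the right track and matches the standard argument behind such noise re-randomisation lemmas. Defining $\mathsf{NoiseGen}(\mathbf{R},s)$ to sample the complementary covariance $\sigma^2(4s^2\mathbf{I}_m-\mathbf{R}\mathbf{R}^\top)$ and then invoking Peikert's convolution theorem is exactly how the cited source (and its predecessors, e.g.\ Boneh--Gentry--Gorbunov--Halevi--Nikolaenko--Segev--Vaikuntanathan--Vinayagamurthy and Ducas--Micciancio) establish the statement. Your positivity check $4s^2\mathbf{I}_m-\mathbf{R}\mathbf{R}^\top\succeq 3s^2\mathbf{I}_m$ is the right one, and the smoothing hypotheses you flag are indeed the only delicate point. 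One simplification worth noting: in the original formulation the convolution is usually phrased so that the first summand is $\mathbf{R}\mathbf{e}$ with $\mathbf{e}$ drawn over $\mathbb{Z}^t$ directly, rather than passing through the image lattice $\mathbf{R}\mathbb{Z}^t$; the convolution theorem is applied in the form that handles a linear map composed with a Gaussian, which sidesteps the rank-deficiency issue you anticipate. So the obstacle you identify is real but already has a clean workaround in the literature.
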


\begin{lemma}[Bounding Norm of a $\{1,-1\}^{k \times m}$ Matrix \cite{agrawal2010efficient}] 
Let $\mathbf{R} \in \{1, -1\}^{k \times m}$ be a matrix chosen uniformly at random. 
There exists a universal constant $C'$ such that
$\mathrm{Pr} \left[\|\mathbf{R}\| \geq C' \sqrt{k+m}\right] < \frac{1}{e^{k+m}}$.
\end{lemma}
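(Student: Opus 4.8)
The plan is to read $\|\mathbf{R}\|$ here as the operator (spectral) norm $s_1(\mathbf{R})$, i.e. the largest singular value, since this is the quantity for which the bound is nontrivial: every column of a $\{1,-1\}^{k\times m}$ matrix has $\ell_2$-length exactly $\sqrt{k}$, so the ``longest column'' reading would make the statement vacuous. With this interpretation the claim is the standard operator-norm tail bound for a random sign matrix, and the cleanest route is to obtain it as a corollary of the sub-Gaussian spectral-norm lemma already stated above.

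First I would observe that each entry $R_{ij}$ is a Rademacher variable, hence sub-Gaussian: $\mathbb{E}[e^{sR_{ij}}]=\cosh(s)\le e^{s^2/2}$, so $\mathbf{R}$ is a sub-Gaussian random matrix with parameter $\rho = O(1)$. Invoking Lemma \ref{BoundGauss} with dimensions $k\times m$ then gives $s_1(\mathbf{R}) \le C\rho(\sqrt{k}+\sqrt{m}+t)$ with probability at least $1-2e^{-\pi t^2}$. Choosing $t = \sqrt{(k+m+1)/\pi} = O(\sqrt{k+m})$ makes the failure probability $2e^{-(k+m+1)} = (2/e)\,e^{-(k+m)} < e^{-(k+m)}$, while $\sqrt{k}+\sqrt{m}+t \le \sqrt{2(k+m)}+t = O(\sqrt{k+m})$. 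Absorbing all constants into a single universal $C'$ yields $\mathrm{Pr}[\,s_1(\mathbf{R}) \ge C'\sqrt{k+m}\,] < e^{-(k+m)}$, as claimed.

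If a self-contained argument were required, I would instead run the standard $\varepsilon$-net argument. Write $s_1(\mathbf{R}) = \max_{\mathbf{x}\in S^{m-1},\,\mathbf{y}\in S^{k-1}} \mathbf{y}^\top \mathbf{R}\mathbf{x}$. For fixed unit vectors the bilinear form $\mathbf{y}^\top\mathbf{R}\mathbf{x} = \sum_{i,j} R_{ij}\,y_i x_j$ is a weighted Rademacher sum whose weights satisfy $\sum_{i,j}(y_ix_j)^2 = 1$, so Hoeffding's inequality gives $\mathrm{Pr}[\mathbf{y}^\top\mathbf{R}\mathbf{x}\ge u]\le e^{-u^2/2}$. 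Taking $\tfrac14$-nets $\mathcal{N}_x,\mathcal{N}_y$ of the two spheres, of sizes at most $9^m$ and $9^k$, the net-approximation inequality bounds $s_1(\mathbf{R})$ by twice the maximum of the bilinear form over $\mathcal{N}_x\times\mathcal{N}_y$; a union bound then yields $\mathrm{Pr}[s_1(\mathbf{R})\ge 2u] \le 9^{k+m} e^{-u^2/2}$.

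The one place that needs care, and the main obstacle in the self-contained version, is balancing the exponentially large net cardinality against the Gaussian tail: one must pick $u = \Theta(\sqrt{k+m})$ large enough that the $-u^2/2$ term dominates the $(k+m)\ln 9$ contribution of the union bound, yet still of order $\sqrt{k+m}$ so that $C' = 2u/\sqrt{k+m}$ remains a universal constant. Concretely, $u^2/2 \ge (k+m)(\ln 9 + 1)$ suffices to push the probability below $e^{-(k+m)}$, giving $C' = 2\sqrt{2(\ln 9 + 1)}$. Once these constants are lined up, both routes deliver the same conclusion and the remaining estimates are routine.
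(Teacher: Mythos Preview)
Your argument is correct, and both routes you sketch are the standard ones. However, there is nothing to compare against: the paper does not prove this lemma at all. It is stated in the Preliminaries as a known result imported from \cite{agrawal2010efficient}, with no accompanying proof. Your observation about the notational mismatch is also apt: under the paper's own convention $\|\mathbf{A}\|$ means the longest-column $\ell_2$ length, which would indeed render the statement vacuous, so the intended reading must be the operator norm as you say. For what it is worth, the original proof in \cite{agrawal2010efficient} proceeds via the $\varepsilon$-net and Hoeffding argument you give as your second option, so your self-contained version matches the source.
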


\subsection{Full-rank Differences (FRD)  \cite{agrawal2010efficient}}
Let $q$ be a prime and $n$ a positive integer.  
A function $H: \mathbb{Z}_q^n \to \mathbb{Z}_q^{n \times n}$ is said to have the full-rank difference (FRD) property if, for any distinct $\mathbf{x}, \mathbf{y} \in \mathbb{Z}_q^n$, the matrix
$H(\mathbf{x}) - H(\mathbf{y})$
is full-rank.

\subsection{The Complete Subtree Method \cite{naor2001revocation}}
In the complete subtree method, a binary tree $BT$ with $N$ leaves is built, where each leaf $v_{ID_u}$ corresponds to a user identity $ID_u$.
For any leaf $v_{ID_u}$, let $\mathsf{Path}(v_{ID_u})$ denote the nodes on the path from the root to that leaf.
For each internal node $\theta$, denote its left and right children by $\theta_l$ and $\theta_r$.
A revocation list $RL$ records revocation events, where $(v_{ID_u}, t) \in RL$ means that $ID_u$ is revoked at time $t$.
The method employs a node-selection algorithm, $\mathsf{KUNodes}$, which takes as input $BT$, $RL$, and a time $t$, and returns a node set $Y$.
The procedure of $\mathsf{KUNodes}$ is given below.

\begin{align*}  
\mathsf{K}&\mathsf{UNodes}(BT,RL,t): \\
&X,Y \gets \emptyset.\\
&\forall (\theta_i,t_i) \in \mathsf{RL}: \ \mathrm{if}\  t_i \leq t, \  \mathrm{add}\ \mathsf{Path}(\theta_{i}) \ \mathrm{to} \ X.\\
&\forall \theta \in X: \\
&\ \ \ \ \mathrm{if} \ \theta_{r} \notin X, \ \mathrm{then}\ \mathrm{add} \ 
\theta_{r} \ \mathrm{to} \ Y; \\
&\ \ \ \ \mathrm{if} \ \theta_{l} \notin X, 
\ \mathrm{then}\ \mathrm{add} \ \theta_{l} \ \mathrm{to} \ Y.\\
&\mathrm{if} \ Y = \emptyset,\ \mathrm{add}\ \mathrm{the}\ \mathrm{root}\ \mathrm{node}\ \mathrm{to}\ Y.\\
&\mathrm{Return} \ Y.\\
\end{align*} 

\section{System Architecture and definitions}\label{section:SA2D}
\subsection{System Architecture}
As illustrated in Fig. \ref{figure:overview}, the system model of EQDDA-RKS involves the following entities: 
\begin{itemize}

 \item Central authority (CA): As a trusted entity, the CA undertakes system initialisation and user privilege management, while also generating private keys for data users and the cloud server, and distributing tokens and update keys to the cloud server.

 \item Cloud server (CS): As a semi-trusted entity, CS is tasked with the storage and management of ciphertexts. Specifically, upon receiving a keyword search request, the server performs the keyword search to obtain the corresponding ciphertext and transforms it into a partially decrypted ciphertext before returning it to the user.

 \item Data owner (DO): DO represents mobile users in charge of encrypting the data they want to share and uploading it to the server.

 \item Data User (DU): DU represents mobile users who intend to analyse and compute on the encrypted data. They submit keyword search requests to the server, obtain the transformed ciphertexts, and then decrypt them to obtain the inner product values. When a DU runs out of battery or goes offline, he can temporarily delegate his keyword search and function computation rights to other users by generating short-term keyword trapdoors and function keys.

 \item Data Delegatee (DD): DD is a delegated mobile user who can use the short-term keyword trapdoor obtained from the DU to send a search request to the CS and obtain the corresponding ciphertext. Then, DD computes the inner product of the encrypted data using the short-term function key.
\end{itemize}
\begin{figure*}[!ht]
    \centering
    \includegraphics[width=0.8\linewidth]{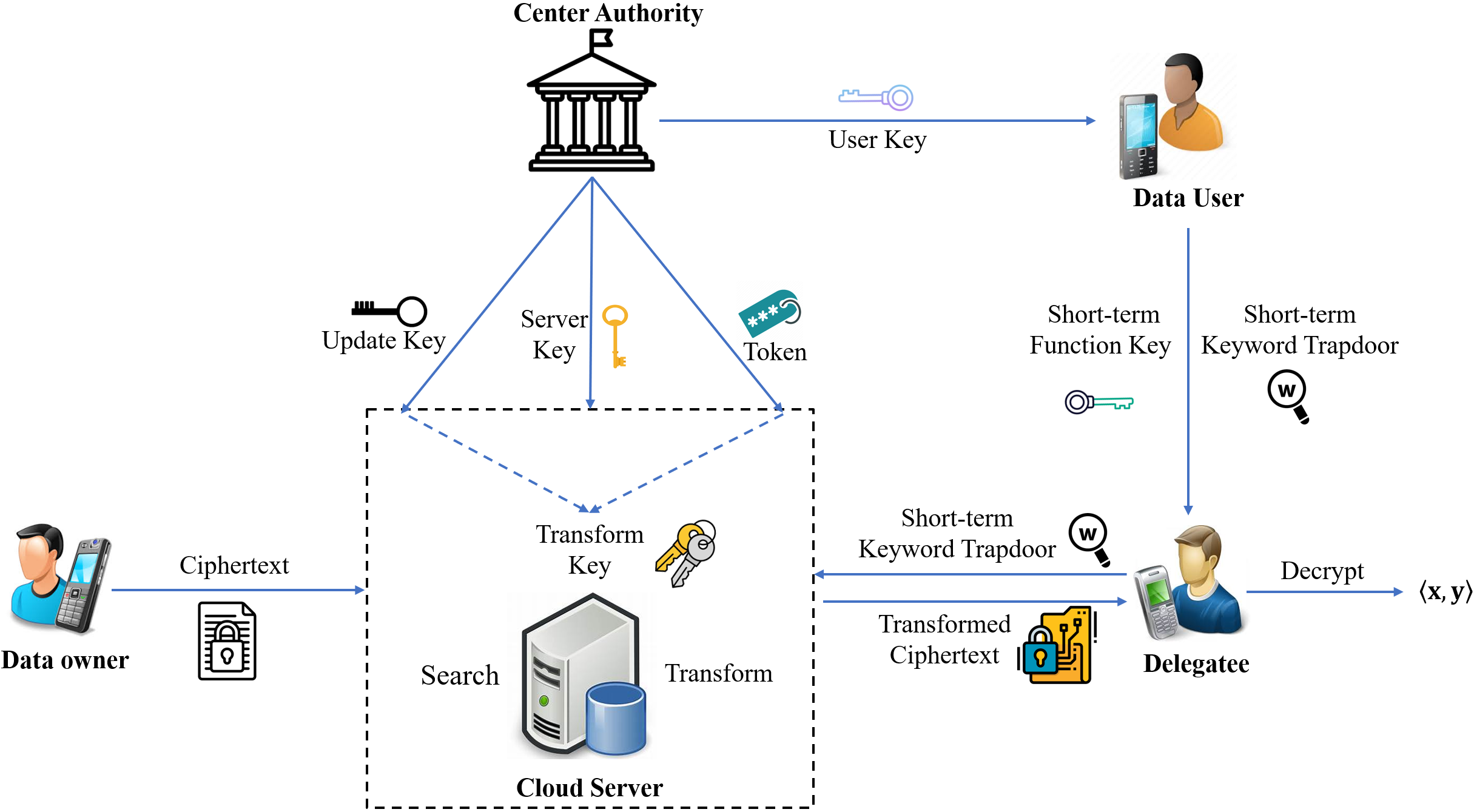}
    \caption{System model}\label{figure:overview}
\end{figure*}
\subsection{Formal Definition of EQDDA-RKS} 
Our scheme consists of the following algorithms:
\begin{itemize}
    \item $Setup(1^\lambda) \to (msk, pp, RL, st)$:  
CA executes this algorithm with the security parameter $\lambda$ as input, outputting the master secret key $msk$, public parameters $pp$, an empty revocation list $RL$, and the initial state $st$. For simplicity, subsequent algorithms omit $pp$ from their inputs.
    
    \item $SerKG(msk,ID_s) \to sk_{ID_s}$: CA executes this algorithm with the master secret key $msk$ and a cloud server identity $ID_s$ as input, outputting the corresponding private key $sk_{ID_s}$.
    
    \item $UserKG(msk,ID_u)\to sk_{ID_u}$: 
    CA executes this algorithm with the master secret key $msk$ and a user identity $ID_u$ as input, outputting the user’s private key $sk_{ID_u}$.

    \item $Token(msk,ID_u,st) \to (tok_{ID_u},st)$: 
     CA executes this algorithm with the master secret key $msk$, the user identity $ID_u$, and the current state $st$ as input, outputting a token $tok_{ID_u}$ along with an updated state $st$.
    
    \item $UpdKG(msk,\mathbf{x},RL_\mathbf{x},t,st) \to uk_{\mathbf{x},t}$: 
    CA executes this algorithm with the master secret key $msk$, a vector $\mathbf{x}$, its associated revocation list $RL_\mathbf{x}$, a time $t$, and the current state $st$ as input, outputting an update key $uk_{\mathbf{x},t}$.
    
    \item $TranKG(tok_{ID_u},uk_{\mathbf{x},t},\mathbf{x})\to tk_{ID_u,\mathbf{x},t}$: 
     CS executes this algorithm with the token $tok_{ID_u}$, the update key $uk_{\mathbf{x},t}$, and a vector $\mathbf{x}$ as input, outputting a transformation key $tk_{ID_u,\mathbf{x},t}$.
    
    \item $FunKG(sk_{ID_u},\mathbf{x},t) \to fk_{ID_u,\mathbf{x},t}$: 
    DU executes this algorithm with the private key $sk_{ID_u}$, the vector $\mathbf{x}$, and the time $t$ as input, outputting a short-term function key $fk_{ID_u,\mathbf{x},t}$.
    
    \item $Enc(ID_s,ID_u,\omega,t,\mathbf{y})\to CT_{ID_u,t}$:
    DO executes this algorithm with the server identity $ID_s$, the user identity $ID_u$, a keyword $\omega$, the time $t$, and a vector $\mathbf{y}$ as input, outputting a ciphertext $CT_{ID_u,t}$.
    
    \item $dTrapdoor(sk_{ID_u},ID_s,\omega,t) \to dt_{ID_u,\omega,t}$:
    DU executes this algorithm with the private key $sk_{ID_u}$, the cloud server identity $ID_s$, the keyword $\omega$, and the time $t$ as input, outputting a short-term keyword trapdoor $dt_{ID_u,\omega,t}$.
    
    \item $Test(dt_{ID_u,\omega,t},sk_{ID_s},CT_{ID_u,t}) \to 0/1$:
    CS executes this algorithm with the short-term keyword trapdoor $dt_{ID_u,\omega,t}$, the server’s private key $sk_{ID_s}$, and the ciphertext $CT_{ID_u,t}$ as input, outputting 1 for a successful match or 0 otherwise.

    \item $Transform(tk_{ID_u,\mathbf{x},t},CT_{ID_u,t},\mathbf{x}) \to TCT_{ID_u,t}^{\mathbf{x}}$: 
    CS executes this algorithm with the transformation key $tk_{ID_u,\mathbf{x},t}$, the ciphertext $CT_{ID_u,t}$, and the vector $\mathbf{x}$ as input, outputting a transformed ciphertext $TCT_{ID_u,t}^{\mathbf{x}}$.
    
    \item $Dec(TCT_{ID_u,t}^{\mathbf{x}},fk_{ID_u,\mathbf{x},t},\mathbf{x}) \to \langle \mathbf{x},\mathbf{y}\rangle$: 
    DO executes this algorithm with the transformed ciphertext $TCT_{ID_u,t}^{\mathbf{x}}$, the short-term function key $fk_{ID_u,\mathbf{x},t}$, and the vector $\mathbf{x}$ as input, outputting the inner product $\langle \mathbf{x}, \mathbf{y} \rangle$.

    \item $Revoke(ID_u,t,RL_\mathbf{x},st) \to RL_\mathbf{x}$: 
    CA executes this algorithm with a user identity $ID_u$, a time $t$, the revocation list $RL_{\mathbf{x}}$ associated with $\mathbf{x}$, and the current state $st$ as input, outputting the updated $RL_{\mathbf{x}}$. 
\end{itemize}

\subsection{Security Model of EQDDA-RKS} \label{SM}
We define the security models of our scheme as follows.

$\mathbf{sIND-CPA}.$ The selective indistinguishability under chosen-plaintext attacks (sIND-CPA) is defined via the following security game between an adversary $\mathcal{A}$ and a challenger $\mathcal{C}$.

$\mathbf{Init.}$ $\mathcal{A}$ selects a target identity $ID_u^*$, a vector $\mathbf{x}^*$, and a timestamp $t^*$, and submits $(ID_u^*, \mathbf{x}^*, t^*)$ to $\mathcal{C}$.

$\mathbf{Setup.}$ $\mathcal{C}$ sends $pp$ to $\mathcal{A}$
by running $Setup(1^\lambda) \to (msk,pp,RL,st)$. 

$\mathbf{Phase-1.}$ 

SerKG Query.
$\mathcal{A}$ submits $ID_s$ to $\mathcal{C}$, and $\mathcal{C}$ returns $sk_{ID_s}$ by running $SerKG(msk, ID_s) \to sk_{ID_s}$.

UserKG Query. $\mathcal{A}$ submits $ID_u$ to $\mathcal{C}$, and $\mathcal{C}$ returns $sk_{ID_u}$ by running $UserKG(msk, ID_u)\to sk_{ID_u}$.

Token Query. $\mathcal{A}$ submits $ID_u$ to $\mathcal{C}$, and $\mathcal{C}$ returns $(tok_{ID_u}, st)$ by running $Token(msk, ID_u, st) \to (tok_{ID_u}, st)$.

UpdKG Query. $\mathcal{A}$ submits $(\mathbf{x}, t) \neq (\mathbf{x}^*, t^*)$ to $\mathcal{C}$, and $\mathcal{C}$ returns $uk_{\mathbf{x}, t}$ by running  $UpdKG(msk, \mathbf{x}, RL_{\mathbf{x}}, t, st) \to uk_{\mathbf{x}, t}$. An initially empty table $Table_1$ records $\mathbf{x}$ whenever $t = t^*$ and $\mathbf{x} \notin Table_1$.
 
FunKG Query. $\mathcal{A}$ submits $(ID_u, \mathbf{x}, t)$ to $\mathcal{C}$, and $\mathcal{C}$ returns $fk_{ID_u,\mathbf{x}, t}$ by running  $FunKG(sk_{ID_u}, \mathbf{x}, t) \to fk_{ID_u,\mathbf{x}, t}$. An initially empty table $Table_2$ records $\mathbf{x}$ whenever $(ID_u, t) = (ID_u^*, t^*)$ and $\mathbf{x} \notin Table_2$.

dTrapdoor Query. $\mathcal{A}$ submits $(ID_u, ID_s, \omega, t)$ to $\mathcal{C}$, and $\mathcal{C}$ returns $dt_{ID_u,\omega,t}$ by running  $dTrapdoor(sk_{ID_u}, ID_s, \omega, t) \to dt_{ID_u,\omega,t}$.

Revoke Query. $\mathcal{A}$ submits $(\mathbf{x}, RL_{\mathbf{x}}, ID_u, t)$ to $\mathcal{C}$, and $\mathcal{C}$ returns $RL_{\mathbf{x}}$ by running $Revoke(ID_u, t, RL_{\mathbf{x}}, st) \to RL_{\mathbf{x}}$.

$\mathbf{Challenge.}$ $\mathcal{A}$ submits $(ID_s^*, \mathbf{y}^*_0, \mathbf{y}^*_1)$ to $\mathcal{C}$, and $\mathcal{C}$ randomly selects $b \in \{0,1\}$, and returns $CT^*$ by running $Enc(ID_s^*, ID_u^*, \omega, t^*, \mathbf{y}^*_b) \to CT^*$.

$\mathbf{Phase-2.}$
$\mathcal{A}$ can continue to make SerKG, UserKG, Token, UpdKG, FunKG, dTrapdoor, and Revoke queries, except the FunKG queries for $\mathbf{x}$ 
satisfying
$\left\langle \mathbf{x},\mathbf{y}^*_0 \right\rangle \neq \left\langle \mathbf{x},\mathbf{y}^*_1 \right\rangle$, 
and $\mathcal{C}$ answers as $\mathbf{Phase-1}$. 

$\textbf{Guess.}$ $\mathcal{A}$ outputs a guess $b' \in \{0,1\}$ for $b$ and succeeds if $b' = b$.

The following conditions must be met in the game above: 
 \begin{itemize}
    \item[$\bullet$] If a UserKG query has been issued for $ID_u^*$, then for all $\mathbf{x} \in Table_1 \setminus \{\mathbf{x}^*\}$, it must hold that $\langle \mathbf{x}, \mathbf{y}^*_0 \rangle = \langle \mathbf{x}, \mathbf{y}^*_1 \rangle$.
    \item[$\bullet$] If no UserKG query is issued for $ID_u^*$, then for all $\mathbf{x} \in Table_2 \setminus \{\mathbf{x}^*\}$, it must hold that $\langle \mathbf{x}, \mathbf{y}^*_0 \rangle = \langle \mathbf{x}, \mathbf{y}^*_1 \rangle$.
    \item[$\bullet$] UpdKG and Revoke queries are restricted to times not earlier than any prior query.
    \item[$\bullet$] If an UpdKG query has been issued at time $t$, no Revoke query can be executed at the same time.
    \item[$\bullet$] If a UserKG query has been issued for $ID_u^*$, the Revoke query must target $(ID_u^*, \mathbf{x}^*, t)$ for some $t \le t^*$. 
    \item[$\bullet$] If $(ID_u^*, \mathbf{x}^*)$ is unrevoked at $t^*$, no FunKG query can be issued on $(ID_u^*, \mathbf{x}^*, t^*)$.
 \end{itemize}

\begin{definition}
The EQDDA-RKS scheme is said to achieve sIND-CPA security if every PPT adversary $\mathcal{A}$ has at most a negligible advantage in the above game, namely,   
\begin{align*}
        Adv^{sIND-CPA}_{\mathcal{A}}(\lambda) = \left|Pr[b=b']-\frac{1}{2} \right| \leq \epsilon(\lambda).
    \end{align*}
\end{definition}

$\mathbf{KC-sIND-CKA}$. The keyword-ciphertext selective indistinguishability under chosen-keyword attacks (KC-sIND-CKA) is defined via two security games between a challenger $\mathcal{C}$ and an adversary $\mathcal{A}_s$ (or $\mathcal{A}_o$), where $\mathcal{A}_s$ models a malicious server and $\mathcal{A}_o$ represents an outside adversary, including authorised users.
 
The KC-sIND-CKA security game between the adversary $\mathcal{A}_s$ 
and the challenger~$\mathcal{C}$ is defined as follows.

$\mathbf{Init.}$ $\mathcal{A}_s$ chooses a challenged identity $ID_{u}^{*}$ and a timestamp $t^{*}$, and submits $(ID_u^*,t^*)$ to $\mathcal{C}$. 

$\mathbf{Setup.}$ $\mathcal{C}$ sends $pp$ to $\mathcal{A}_s$
by running $Setup(1^\lambda) \to (msk,pp,RL,st)$.  

$\mathbf{Phase-1.}$ 

SerKG Query.
$\mathcal{A}_s$ submits $ID_s$ to $\mathcal{C}$, 
and $\mathcal{C}$ returns $sk_{ID_s}$ by running $SerKG(msk,ID_s)\to sk_{ID_s}$. 

UserKG Query. $\mathcal{A}_s$ submits $ID_u \neq ID_u^*$ to $\mathcal{C}$, and $\mathcal{C}$ returns $sk_{ID_u}$ by running 
$UserKG(msk,ID_u)\to sk_{ID_u}$. 

Token Query. $\mathcal{A}_s$ submits $ID_u \neq ID_u^*$ to $\mathcal{C}$, and $\mathcal{C}$ returns 
$tok_{ID_u}$ by running $Token(msk,ID_u,st) \to tok_{ID_u}$. 

UpdKG Query. $\mathcal{A}_s$ submits $(RL_\mathbf{x},\mathbf{x},t)$ to $\mathcal{C}$, and $\mathcal{C}$ returns $uk_{\mathbf{x},t}$ by running  
$UpdKG(msk,\mathbf{x},RL_\mathbf{x},t,st) \to uk_{\mathbf{x},t}$. 

FunKG Query. $\mathcal{A}_s$ submits $(ID_u, \mathbf{x}, t)$ to $\mathcal{C}$, subject to $(ID_u, t) \neq (ID_u^*, t^*)$. $\mathcal{C}$ returns $fk_{ID_u, \mathbf{x}, t}$ by running $FunKG(sk_{ID_u}, \mathbf{x}, t) \to fk_{ID_u, \mathbf{x}, t}$.

dTrapdoor Query. $\mathcal{A}_s$ submits $(ID_u,ID_s,\omega,t)$ to $\mathcal{C}$, and $\mathcal{C}$ returns $dt_{ID_u,\omega,t}$ by running 
$dTrapdoor(sk_{ID_u},ID_s,\omega,t) \to dt_{ID_u,\omega,t}$.   
Let $Table_{\omega}$ be an initially empty table. If $(ID_u,t) = (ID_u^*,t^*)$ and 
$\omega \notin T_{\omega}$, 
$\mathcal{C}$ appends $\omega$ to $Table_{\omega}$. 

Revoke Query. $\mathcal{A}_s$ submits $RL_{\mathbf{x}},ID_u,t$ to $\mathcal{C}$, 
and $\mathcal{C}$ returns $RL_{\mathbf{x}}$ by running $Revoke(ID_u,t,RL_{\mathbf{x}},st) \to RL_{\mathbf{x}}$.

$\mathbf{Challenge.}$ $\mathcal{A}_s$ submits $(ID^*_s,\omega^*_0,\omega^*_1)$, with the restriction that 
$\omega^*_0,\omega^*_1 \notin Table_{\omega}$.  
Then, $\mathcal{C}$ randomly selects $b \in \{0,1\}$, and returns  $CT^*$ by running $Enc(ID^*_s,ID_u^*,\omega^*_b,t^*,\mathbf{y}) \to CT^*$. 

$\mathbf{Phase-2.}$
$\mathcal{A}_s$ can continue to make SerKG, UserKG, Token, UpdKG, FunKG, dTrapdoor and Revoke queries, except the dTrapdoor queries for $(ID_u^*, ID_s,\omega^*_0,t^*), (ID_u^*, ID_s,\omega^*_1,t^*)$ and 
UserKG query for $ID^*_u$, 
and $\mathcal{C}$ answers as $\mathbf{Phase-1}$.  

$\textbf{Guess.}$ $\mathcal{A}_s$ outputs a guess $b' \in \{0,1\}$ for $b$ and succeeds if $b' = b$.

The KC-sIND-CKA security game between the adversary $\mathcal{A}_o$ 
and the challenger $\mathcal{C}$ is defined as follows.
 
$\mathbf{Init.}$ $\mathcal{A}_o$ chooses a challenged identity $ID_{s}^{*}$ and a timestamp $t^{*}$, and submits $(ID_s^*,t^*)$ to $\mathcal{C}$. 

$\mathbf{Setup.}$  $\mathcal{C}$ sends $pp$ to $\mathcal{A}_o$ by executing $Setup(1^\lambda) \to (msk, pp, RL, st)$.
 
$\mathbf{Phase-1.}$ 

SerKG Query.
$\mathcal{A}_o$ submits $ID_s \neq ID_s^*$ to $\mathcal{C}$, and $\mathcal{C}$ returns $sk_{ID_s}$ by running $SerKG(msk, ID_s) \to sk_{ID_s}$.

UserKG Query. $\mathcal{A}_o$ submits $ID_u$ to $\mathcal{C}$, and $\mathcal{C}$ returns $sk_{ID_u}$ byu running $UserKG(msk, ID_u) \to sk_{ID_u}$.

Token Query. $\mathcal{A}_o$ submits $ID_u$ to $\mathcal{C}$, and $\mathcal{C}$ returns $(tok_{ID_u}, st)$ by running $Token(msk, ID_u, st) \to (tok_{ID_u}, st)$.

UpdKG Query. $\mathcal{A}_o$ submits $(RL_{\mathbf{x}}, \mathbf{x}, t)$ to $\mathcal{C}$, and $\mathcal{C}$ returns $uk_{\mathbf{x},t}$ by running $UpdKG(msk, \mathbf{x}, RL_{\mathbf{x}}, t, st) \to uk_{\mathbf{x},t}$.
 
Revoke Query. $\mathcal{A}_o$ submits $(RL_{\mathbf{x}}, ID_u, t)$ to $\mathcal{C}$, and $\mathcal{C}$ returns $RL_{\mathbf{x}}$ by running $Revoke(ID_u, t, RL_{\mathbf{x}}, st) \to RL_{\mathbf{x}}$.

$\mathbf{Challenge.}$ $\mathcal{A}_o$ submits $(ID_u^*, \omega^*_0, \omega^*_1)$ to $\mathcal{C}$, which randomly selects $b \in \{0,1\}$, and returns $CT^*$ by running $Enc(ID_s^*, ID_u^*, \omega^*_b, t^*, \mathbf{y})\to CT^*$.

$\mathbf{Phase-2.}$
$\mathcal{A}_o$ can continue issuing SerKG, UserKG, Token, UpdKG, and Revoke queries, with the restriction that SerKG queries for $ID_s^*$ are forbidden. All other queries are answered by $\mathcal{C}$ as in $\mathbf{Phase-1}$.

$\mathbf{Guess.}$ $\mathbf{Guess.}$ $\mathcal{A}_o$ outputs a guess $b' \in \{0,1\}$ for $b$ and wins the game if $b' = b$.

\begin{definition}
The EQDDA-RKS scheme is said to achieve KC-sIND-CKA security if any PPT adversary $\mathcal{A}_s$ or $\mathcal{A}_o$ has only a negligible advantage in winning the above game, namely, 
\begin{align*}
        Adv^{KC-sIND-CKA}_{\mathcal{A}_s/\mathcal{A}_o}(\lambda) = \left|Pr[b=b']-\frac{1}{2} \right| \leq \epsilon(\lambda).
    \end{align*}
\end{definition}

$\mathbf{KT-sIND-CKA}.$ 
The keyword trapdoor selective indistinguishability against chosen keyword attack (KT-sIND-CKA) security of keyword trapdoors is defined via a game played between an adversary $\mathcal{A}$ and a challenger $\mathcal{C}$.

$\mathbf{Init.}$ $\mathcal{A}$ selects a target identity $ID_s^*$ and a timestamp $t^*$, 
and submits the pair $(ID_s^*, t^*)$ to the challenger $\mathcal{C}$.

$\mathbf{Setup.}$ $\mathcal{C}$ sends $pp$ to $\mathcal{A}$ by running $Setup(1^\lambda) \to (msk, pp, RL, st)$.

$\mathbf{Phase-1.}$ 

SerKG Query.
$\mathcal{A}$ submits $ID_s \neq ID_s^*$ to $\mathcal{C}$, and $\mathcal{C}$ returns $sk_{ID_s}$ by running $SerKG(msk, ID_s) \to sk_{ID_s}$.

UserKG Query. $\mathcal{A}$ submits $ID_u$ to $\mathcal{C}$, and $\mathcal{C}$ returns $sk_{ID_u}$ by running $UserKG(msk, ID_u) \to sk_{ID_u}$.

Token Query. $\mathcal{A}$ submits $ID_u$ to $\mathcal{C}$, and $\mathcal{C}$ returns $(tok_{ID_u}, st)$ by running $Token(msk, ID_u, st) \to (tok_{ID_u}, st)$.

UpdKG Query. $\mathcal{A}$ submits $(RL_{\mathbf{x}}, \mathbf{x}, t)$ to $\mathcal{C}$, and $\mathcal{C}$ returns $uk_{\mathbf{x}, t}$ by running $ UpdKG(msk, \mathbf{x}, RL_{\mathbf{x}}, t, st) \to uk_{\mathbf{x}, t}$.



Revoke Query. $\mathcal{A}$ submits $(RL_{\mathbf{x}}, ID_u, t)$ to $\mathcal{C}$, and $\mathcal{C}$ returns $RL_{\mathbf{x}}$ by running $ Revoke(ID_u, t, RL_{\mathbf{x}}, st) \to RL_{\mathbf{x}}$.

$\mathbf{Challenge.}$ $\mathcal{A}$ submits $(ID_u^*, \omega^*_0, \omega^*_1)$ to $\mathcal{C}$, which randomly selects $b \in \{0,1\}$, and $\mathcal{C}$ returns $dt^*$ by running $dTrapdoor(sk_{ID_u^*}, ID_s^*, \omega^*_b, t^*)\to dt^*$.

$\mathbf{Phase-2.}$
$\mathcal{A}$ can continue to make SerKG, UserKG, Token, UpdKG, and Revoke queries, 
except the SerKG query for $ID_s^*$, 
and $\mathcal{C}$ answers as $\mathbf{Phase-1}$. 

$\mathbf{Guess.}$ 
$\mathcal{A}$ outputs a guess $b' \in \{0,1\}$ for the challenge bit $b$ and wins the game if $b' = b$.

\begin{definition}
The EQDDA-RKS scheme is said to achieve KT-sIND-CKA security if any PPT adversary $\mathcal{A}$ has only a negligible advantage in winning the above game, namely, 
\begin{align*}
        Adv^{KT-sIND-CKA}_{\mathcal{A}}(\lambda) = \left|Pr[b=b']-\frac{1}{2} \right| \leq \epsilon(\lambda).
    \end{align*}
\end{definition}

Rhee et al.~\cite{rhee2010trapdoor} demonstrate that a dPEKS scheme 
that achieves keyword trapdoor indistinguishability is secure against OKGA.

\section{Construction} \label{section:CC}
The detailed construction of our EQDDA-RKS scheme is presented below.

1) $Setup(1^\lambda) \to (msk,pp,RL,st)$: 
CA runs $\mathsf{TrapGen}(n,m,q) \to (\mathbf{A},\mathbf{T}_{\mathbf{A}})$ and 
chooses $N = poly(\lambda)$ as the maximal number of users that the system will support, where $\mathbf{A} \in \mathbb{Z}_q^{n\times m}$. 
Then, CA selects integers $X,Y,l,k>0$ and 
reals $\rho,\sigma,\tau>0$, $h_1 = \left\lceil \mathrm{log}\ q\right\rceil$,
$h_2 = 4m * \left\lceil \mathrm{log}\ \rho\right\rceil$. 
Let $K=lXY$, $\mathcal{X} = \{0,\dots X-1\}^l$, $\mathcal{Y}=\{0,\dots,Y-1\}^l$. 
CA randomly selects $\mathbf{B}_1,\mathbf{B}_2,\mathbf{G}, \mathbf{C}_1,\ldots, \mathbf{C}_k \gets \mathbb{Z}_q^{n\times m}$, 
$\mathbf{U} \gets \mathbb{Z}_q^{n\times l}, \mathbf{V} \gets \mathbb{Z}_q^{n\times h_1}, 
\mathbf{v} \gets \mathbb{Z}_q^{n}$.
Suppose
$H_1: \mathbb{Z}_q^{n} \to \mathbb{Z}_q^{n\times n}$ is a FRD function and  
$H_2: \mathbb{Z}_q \to \{0,1\}^{h_2}$  
be a cryptographic hash function. 
Let 
$ID_u \in (0,uid_1,\ldots,uid_{n-1})$, 
$\widetilde{ID_u} \in (1,uid_1,\ldots,uid_{n-1})$, 
$ID_s \in (2,sid_1,\ldots,sid_{n-1})$,
$t \in (3,t_1,\ldots,t_{n-1})$ where 
$(uid_1,\ldots,uid_{n-1})\in \mathbb{Z}_q^{n-1}$, 
$(sid_1,\ldots,sid_{n-1})\in \mathbb{Z}_q^{n-1}$, 
$(t_1,\ldots,t_{n-1})\in \mathbb{Z}_q^{n-1}$.
Thus, there is a one-to-one
correspondence between $ID_u$ and $\widetilde{ID_u}$.
Then, CA 
initializes  
a binary tree $BT$ with at least 
$N$ leaf nodes and 
sets 
\begin{gather*}
RL = \emptyset,
st = BT,
msk=\mathbf{T}_\mathbf{A},\\
pp = (\mathbf{A},\mathbf{B}_1,\mathbf{B}_2,\mathbf{G},\mathbf{C}_1,\ldots,\mathbf{C}_k,\mathbf{U},\mathbf{V},
\mathbf{v},H_1,H_2).
\end{gather*}
Let 
\begin{align*}
\mathbf{B}_{\widetilde{ID_u}} = \mathbf{B}_1 + H_1(\widetilde{ID_u})\mathbf{G} &,\ 
\mathbf{B}_{ID_u} = \mathbf{B}_1 +H_1(ID_u)\mathbf{G},\\
\mathbf{B}_{ID_s} = \mathbf{B}_1 +H_1(ID_s)\mathbf{G} &,\ 
\mathbf{B}_t = \mathbf{B}_2 + H_1(t) \mathbf{G}, \\
\mathbf{B}_{\omega} = \mathbf{G} + \sum_{i=1}^k b_i \mathbf{C}_i &,\ 
\mathbf{A}_{\widetilde{ID_u}} = \left[ \mathbf{A} | \mathbf{B}_{\widetilde{ID_u}} \right],\\
\mathbf{A}_{ID_u} = \left[ \mathbf{A} | \mathbf{B}_{ID_u} \right] &,\ 
\mathbf{A}_{ID_s} = \left[ \mathbf{A} | \mathbf{B}_{ID_s} \right],\\
\mathbf{A}_{t} = \left[ \mathbf{A} | \mathbf{B}_{t} \right] &,\ 
\mathbf{A}_{\widetilde{ID_u},t} = \left[\mathbf{A}_{\widetilde{ID_u}} | \mathbf{B}_t \right], \\
\mathbf{A}_{ID_u,t} = \left[\mathbf{A}_{ID_u} | \mathbf{B}_t \right] &,\ 
\mathbf{A}_{\widetilde{ID_u},\omega,t} = \left[ \mathbf{A}_{\widetilde{ID_u}}| \mathbf{B}_{\omega} | \mathbf{B}_t\right]. 
\end{align*}

2) $SerKG(msk,ID_s) \to sk_{ID_s}$: 
CA runs 
$\mathsf{SampleLeft}(
\mathbf{A},
\mathbf{B}_{ID_s}, 
\mathbf{T}_{\mathbf{A}}, \mathbf{v}, \rho) 
\to \mathbf{z}_{ID_s} \in \mathbb{Z}^{2m}$, 
$\mathsf{SampleLeft}(
\mathbf{A},
\mathbf{B}_{ID_s}, 
\mathbf{T}_{\mathbf{A}}, \mathbf{V}, \rho) 
\to \mathbf{Z}_{ID_s} \in \mathbb{Z}^{2m \times h_1}$,
and sends 
\begin{align*}
sk_{ID_s} = (\mathbf{z}_{ID_s},\mathbf{Z}_{ID_s})
\end{align*} 
to $\mathcal{CS}$. 

3) $UserKG(msk,ID_u)\to sk_{ID_u}$: 
CA runs 
$\mathsf{SampleBasisLeft}(\mathbf{A},\mathbf{B}_{\widetilde{ID_u}},  
\mathbf{T}_{\mathbf{A}},\rho) \to \mathbf{T}_{\widetilde{ID_u}}$, 
and sends 
\begin{align*}
    sk_{ID_u} = \mathbf{T}_{\widetilde{ID_u}}
\end{align*}
to $\mathcal{DU}$. 

4) $Token(msk,ID_u,st) \to (tok_{ID_u},st)$: 
$\mathcal{CA}$ picks an unassigned leaf note $v_{ID_u}$ from $BT$ and stores $ID_u$ in this node. 
For each $\theta\in \mathsf{path}(v_{ID_u})$, if $\mathbf{U}_{\theta,1},\mathbf{U}_{\theta,2}$ 
are undefined, $\mathcal{CA}$ randomly selects $\mathbf{U}_{\theta,1} \gets \mathbb{Z}_q^{n\times l}$, 
sets $\mathbf{U}_{\theta,2} = \mathbf{U} - \mathbf{U}_{\theta,1}\in \mathbb{Z}_q^{n\times l}$, and 
stores $(\mathbf{U}_{\theta,1},\mathbf{U}_{\theta,2})$ in node $\theta$. 
Then, CA runs $\mathsf{SampleLeft}(\mathbf{A},\mathbf{B}_{ID_u},
\mathbf{T}_{\mathbf{A}},\mathbf{U}_{\theta,1},\rho) \to \mathbf{Z}_{ID_u,\theta} \in \mathbb{Z}^{2m\times l}$, 
outputs updated $st$, 
and sends 
\begin{align*}
tok_{ID_u} = (\theta,\mathbf{Z}_{ID_u,\theta})_{\theta\in \mathsf{path}(v_{ID_u})} 
\end{align*}
to CS. 

5) $UpdKG(msk,\mathbf{x},RL_\mathbf{x},t,st) \to uk_{\mathbf{x},t}$: 
For each $\theta \in \mathsf{KUNodes}(BT,RL_\mathbf{x},t)$, 
CA runs $\mathsf{SampleLeft}(\mathbf{A},\mathbf{B}_t, \mathbf{T}_{\mathbf{A}}, 
\mathbf{U}_{\theta,2},\rho) \to \mathbf{Z}_{t,\theta} \in \mathbb{Z}^{2m\times l}$, and sends 
$uk_{\mathbf{x},t} = (\theta,\mathbf{Z}_{t,\theta}\cdot \mathbf{x})_
{\theta \in \mathsf{KUNodes}(BT,RL_\mathbf{x},t)}$ to $\mathcal{CS}$. 

6) $TranKG(tok_{ID_u},uk_{\mathbf{x},t},\mathbf{x})\to tk_{ID_u,\mathbf{x},t}$:
CS parses $tok_{ID_u} = (\theta,\mathbf{Z}_{ID_u,\theta})_{\theta\in I}$ and
$uk_{\mathbf{x},t} = (\theta,\mathbf{Z}_{t,\theta}\cdot \mathbf{x})_
{\theta \in J}$. 
If $I\cap J = \emptyset$, CS returns $\bot$; 
otherwise, CS chooses $\theta \in I\cap J$, parses
$\mathbf{Z}_{ID_u,\theta} \cdot \mathbf{x} = 
\left[\begin{matrix}
\mathbf{z}^0_{ID_u,\theta,\mathbf{x}} \\
\mathbf{z}^1_{ID_u,\theta,\mathbf{x}}
\end{matrix}\right] \in \mathbb{Z}^{2m}$, computes 
$\mathbf{Z}_{t,\theta} \cdot \mathbf{x} =
\left[\begin{matrix}
\mathbf{z}^0_{t,\theta,\mathbf{x}} \\
\mathbf{z}^1_{t,\theta,\mathbf{x}}
\end{matrix}\right]\in \mathbb{Z}^{2m}$, where 
$\mathbf{z}^0_{ID_u,\theta,\mathbf{x}},\mathbf{z}^0_{t,\theta,\mathbf{x}} \in \mathbb{Z}^m$, 
and outputs 
\begin{align*}
tk_{ID_u,t,\mathbf{x}}=
\left[\begin{matrix}
\mathbf{z}^0_{ID_u,\theta,\mathbf{x}}+\mathbf{z}^0_{t,\theta,\mathbf{x}} \\
\mathbf{z}^1_{ID_u,\theta,\mathbf{x}}\\
\mathbf{z}^1_{t,\theta,\mathbf{x}}
\end{matrix}\right] \in\mathbb{Z}^{3m}. 
\end{align*}

7) $FunKG(sk_{ID_u},\mathbf{x},t) \to fk_{ID_u,\mathbf{x},t}$:
DU runs $\mathsf{SampleLeft}(
\mathbf{A}_{\widetilde{ID_u}},
\mathbf{B}_t, 
\mathbf{T}_{\widetilde{ID_u}}, \mathbf{U}, \rho) \to \mathbf{Z}_{ID_u,t} \in \mathbb{Z}^{3m \times l}$, and sends $fk_{ID_u,\mathbf{x},t} = \mathbf{Z}_{ID_u,t} \cdot \mathbf{x}\in \mathbb{Z}^{3m}$ to DD.

8) $Enc(ID_s,ID_u,\omega,t,\mathbf{y})\to CT_{ID_u,t}$: 
DO randomly selects $\mathbf{s}_0,\mathbf{s}_1,\mathbf{s}_2,\mathbf{s}_3 \gets \mathbb{Z}_q^n$, 
$\mathbf{R}_1,\ldots,\mathbf{R}_7,\mathbf{F}_1,\ldots \mathbf{F}_k \gets \{1,-1\}^{m\times m}$, 
$\mathbf{e}_0 ,\mathbf{e}_1 \gets \mathcal{D}_{\mathbb{Z}^m,\sigma}$, 
$\mathbf{e}_2 \gets \mathcal{D}_{\mathbb{Z}^l,\sigma}$,
$\mathbf{e}_3 \gets \mathcal{D}_{\mathbb{Z}^l,\tau}$
$\mathbf{e}_4,\mathbf{e}_5  \gets \mathcal{D}_{\mathbb{Z}^m,\sigma}$, 
$\mathbf{e}_6 \gets \mathcal{D}_{\mathbb{Z},\sigma}$,
computes \begin{align*}
\mathbf{F}_{\omega} &= \sum_{i = 1}^{k}b_i \mathbf{F}_i, 
\mathbf{c}_0 = \mathbf{A}_{ID_u,t}^T 
\cdot \mathbf{s}_0 + 
\left[\mathbf{I}_m|\mathbf{R}_1|\mathbf{R}_2\right]^T \cdot  \mathbf{e}_0, \\
\mathbf{c}_1 &= \mathbf{A}_{\widetilde{ID_u},t}^T 
\cdot \mathbf{s}_1 + \left[\mathbf{I}_m|\mathbf{R}_3|\mathbf{R}_4\right]^T \cdot  \mathbf{e}_1, \\
\mathbf{c}_2 &= \mathbf{U}^T \cdot (\mathbf{s}_0+\mathbf{s}_1) + 
\mathbf{e}_2 + \mathbf{e}_3 +\left\lfloor \frac{q}{K} \right\rfloor \cdot \mathbf{y}, \\
\mathbf{c}_3 &= \mathbf{A}_{\widetilde{ID_u},\omega,t}^T 
\cdot \mathbf{s}_2 + (\mathbf{I}_m | \mathbf{R}_5 | \mathbf{F}_{\omega} | \mathbf{R}_6)^T \cdot \mathbf{e}_4, \\
\mathbf{c}_4 &= \mathbf{A}_{ID_s}^T \cdot \mathbf{s}_3 + 
(\mathbf{I}_m | \mathbf{R}_7)^T \cdot \mathbf{e}_5,\\
\mathbf{c}_5 &= \mathbf{v}^T \cdot (\mathbf{s}_2+\mathbf{s}_3) + \mathbf{e}_6.
\end{align*} 
DO uploads $CT_{ID_u,t} = (\mathbf{c}_0,\mathbf{c}_1,\mathbf{c}_2,\mathbf{c}_3,\mathbf{c}_4,\mathbf{c}_5)$ 
to CS.

9) $dTrapdoor(sk_{ID_u},ID_s,\omega,t) \to dt_{ID_u,\omega,t}$:
DU runs $\mathsf{SampleLeft}(\mathbf{A}_{\widetilde{ID_u}}, \left[ \mathbf{B}_{\omega} | \mathbf{B}_t \right],
\mathbf{T}_{\widetilde{ID_u}},\mathbf{v},\rho) \to  kt_{ID_u,\omega,t} \in \mathbb{Z}^{4m}$, 
randomly selects $\mathbf{s}_4 \gets \mathbb{Z}_q^n$, 
$\mathbf{R}_8 \gets \{1,-1\}^{m\times m}$,
$\mathbf{e}_7 \gets \mathcal{D}_{\mathbb{Z}^m,\sigma}$, 
$\mathbf{e}_8 \gets \mathcal{D}_{\mathbb{Z}^{h_2},\sigma}$, 
$kx \gets \mathbb{Z}_q$, 
computes 
\begin{align*}
\mathbf{kt}_1 &= \mathbf{A}_{ID_s}^T \cdot \mathbf{s}_4 + (\mathbf{I}_m | \mathbf{R}_8)^T \cdot \mathbf{e}_7, \\
\mathbf{kt}_2 &= \mathbf{V}^T \cdot \mathbf{s}_4 + \mathbf{e}_8 + \mathsf{bin}(kx) \cdot \left\lfloor \frac{q}{2} \right\rfloor, \\
\mathbf{kt}_3 &= H_2(kx) \oplus bin(kt_{ID_u,\omega,t}). 
\end{align*} 
DU sends 
$dt_{ID_u,\omega,t} = (\mathbf{kt}_1,\mathbf{kt}_2,\mathbf{kt}_3)$to DD. 

10) $Test(dt_{ID_u,\omega,t},sk_{ID_s},CT_{ID_u,t}) \to 0/1$: 
CS computes $\mathbf{w} = \mathbf{kt}_2 - \mathbf{Z}^T_{ID_s} \cdot \mathbf{kt}_1$,  
$kx = \mathbf{g} \cdot \left\lfloor \frac{2}{q} \mathbf{w}  \right\rceil$, 
$bin(kt_{ID_u,\omega,t}) = H_2(kx) \oplus \mathbf{kt}_3$, 
$\mu' = \mathbf{c}_5 - \mathbf{z}_{ID_s}^T \cdot \mathbf{c}_4 - kt^T_{ID_u,\omega,t} \cdot \mathbf{c}_3$, where $\mathbf{g} = (1,2,\ldots,2^{h_2-1}) \in \mathbb{Z}^{h_2}$. 
If $\left\lvert \mu' - \left\lfloor \frac{q}{2} \right\rfloor  \right\rvert <\left\lfloor \frac{q}{4} \right\rfloor$, 
CS returns 0; otherwise, it returns 1.

11) $Transform(tk_{ID_u,\mathbf{x},t},CT_{ID_u,t},\mathbf{x}) \to TCT_{ID_u,t}^{\mathbf{x}}$:  
CS computes $CT^{\mathbf{x}}_{ID_u,t} = \mathbf{x}^T \cdot \mathbf{c}_2 - tk_{ID_u,\mathbf{x},t}^T\cdot \mathbf{c}_0$ mod $q$, 
and sends 
\begin{align*}
TCT_{ID_u,t}^{\mathbf{x}} = (\mathbf{c}_1,CT^{\mathbf{x}}_{ID_u,t})
\end{align*}
to DD. 

12) $Dec(TCT_{ID_u,t}^{\mathbf{x}},fk_{ID_u,\mathbf{x},t},\mathbf{x}) \to \langle \mathbf{x},\mathbf{y}\rangle$:
DD computes $\varphi'  = CT^{\mathbf{x}}_{ID_u,t} - fk_{ID_u,\mathbf{x},t}^T \cdot \mathbf{c}_1$ mod $q$, 
and outputs a value $\varphi \in \{0,\ldots,K-1\}$ that minimizes 
$\left| \left\lfloor \tfrac{q}{K} \right\rfloor \cdot \varphi - \varphi' \right|$. 
Note that $\varphi = \langle \mathbf{x},\mathbf{y}\rangle$. 

13) $Revoke(ID_u,t,RL_\mathbf{x},st) \to RL_\mathbf{x}$: 
The CA inserts the pair $(ID_u, t)$ into each revocation list $RL_{\mathbf{x}}$ corresponding to the nodes associated with identity $ID_u$, and then outputs the updated revocation list $RL_{\mathbf{x}}$.

$\mathbf{Correctness.}$ 
The correctness of $Test$ can be shown as follows. 

In $Test$, we have
$\mathbf{w} = \mathbf{V}^T\cdot \mathbf{s}_4+\mathbf{e}_8+\mathsf{bin}(kx)
\cdot \left\lfloor \frac{q}{2} \right\rfloor 
-\mathbf{Z}^T_{ID_s} \cdot (\mathbf{A}_{ID_s}^T \cdot \mathbf{s}_4 + (\mathbf{I}_m | \mathbf{R}_8)^T \cdot \mathbf{e}_7)
= \mathsf{bin}(kx)
\cdot \left\lfloor \frac{q}{2} \right\rfloor + \mathbf{e}_8 - 
\mathbf{Z}^T_{ID_s} \cdot (\mathbf{I}_m | \mathbf{R}_8)^T \cdot \mathbf{e}_7$. 
If $\left\lvert \mathbf{e}_8 - \mathbf{Z}^T_{ID_s} \cdot (\mathbf{I}_m | \mathbf{R}_8)^T \cdot \mathbf{e}_7 \right\rvert_{\infty} \leq \sigma(1+2\rho m^2)
\leq \frac{q}{4}$, we have $\mathbf{g}\cdot \left\lfloor \frac{2}{q}\cdot \mathbf{w} \right\rceil
= kx$. 

Furthermore, 
$\mu' = \mathbf{v}^T \cdot (\mathbf{s}_2+\mathbf{s}_3) + \mathbf{e}_6 
-\mathbf{z}^T_{ID_s} \cdot (\mathbf{A}_{ID_s}^T \cdot \mathbf{s}_3 + 
(\mathbf{I}_m | \mathbf{R}_7)^T \cdot \mathbf{e}_5)
- kt^T_{ID_u,\omega,t} \cdot (\mathbf{A}_{\widetilde{ID_u},\omega,t}^T 
\cdot \mathbf{s}_2 + (\mathbf{I}_m | \mathbf{R}_5 | \mathbf{F}_{\omega} | \mathbf{R}_6)^T \cdot \mathbf{e}_4)
= \mathbf{e}_6 - \mathbf{z}^T_{ID_s} \cdot (\mathbf{I}_m | \mathbf{R}_7)^T \cdot \mathbf{e}_5
- kt^T_{ID_u,\omega,t} \cdot (\mathbf{I}_m | \mathbf{R}_5 | \mathbf{F}_{\omega} | \mathbf{R}_6)^T \cdot \mathbf{e}_4$.
To ensure the correctness of $Test$, we require that 
$\left\lvert \mathbf{e}_6 - \mathbf{z}^T_{ID_s} \cdot (\mathbf{I}_m | \mathbf{R}_7)^T \cdot \mathbf{e}_5
- kt^T_{ID_u,\omega,t} \cdot (\mathbf{I}_m | \mathbf{R}_5 | \mathbf{F}_{\omega} | \mathbf{R}_6)^T 
\cdot \right. \\ \left. \mathbf{e}_4 
\right\rvert \leq
\sigma + \rho \sigma \sqrt{m}(2+C'(k+3)\sqrt{2m})
\leq \frac{q}{4}$. 

The correctness of $Dec$ can be shown as follows. 

In $Dec$, we have 
$\varphi' 
= \mathbf{x}^T \cdot \mathbf{c}_2 - tk_{ID_u,\mathbf{x},t}^T\cdot \mathbf{c}_0 
- fk_{ID_u,\mathbf{x},t}^T \cdot \mathbf{c}_1
= \mathbf{x}^T \cdot (\mathbf{U}^T \cdot (\mathbf{s}_0+\mathbf{s}_1) + \mathbf{e}_2 
+ \mathbf{e}_3 + \left\lfloor \frac{q}{K} \right\rfloor \cdot \mathbf{y})
- \left[\begin{matrix}
\mathbf{z}^0_{ID_u,\theta,\mathbf{x}}+\mathbf{z}^0_{t,\theta,\mathbf{x}} \\
\mathbf{z}^1_{ID_u,\theta,\mathbf{x}}\\
\mathbf{z}^1_{t,\theta,\mathbf{x}}
\end{matrix}\right]^T \cdot 
(\mathbf{A}_{ID_u,t}^T 
\cdot \mathbf{s}_0 + \left[\mathbf{I}_m|\mathbf{R}_1|\mathbf{R}_2\right]^T \cdot  \mathbf{e}_0)
- (\mathbf{Z}_{ID_u,t} \cdot \mathbf{x})^T \cdot (\mathbf{A}_{\widetilde{ID_u},t}^T 
\cdot \mathbf{s}_1 + \left[\mathbf{I}_m|\mathbf{R}_3|\mathbf{R}_4\right]^T \cdot  \mathbf{e}_1) 
= \mathbf{x}^T \mathbf{U}^T \cdot (\mathbf{s}_0 + \mathbf{s}_1) 
+ \mathbf{x}^T \cdot (\mathbf{e}_2 + \mathbf{e}_3) + \left\lfloor \frac{q}{K} \right\rfloor \cdot \left\langle \mathbf{x},\mathbf{y} \right\rangle 
- \mathbf{x}^T (\mathbf{U}_{\theta,1}^T + \mathbf{U}^T_{\theta,2}) \cdot \mathbf{s}_0 
- \left[\begin{matrix}
\mathbf{z}^0_{ID_u,\theta,\mathbf{x}}+\mathbf{z}^0_{t,\theta,\mathbf{x}} \\
\mathbf{z}^1_{ID_u,\theta,\mathbf{x}}\\
\mathbf{z}^1_{t,\theta,\mathbf{x}}
\end{matrix}\right]^T \cdot \left[\mathbf{I}_m|\mathbf{R}_1|\mathbf{R}_2\right]^T 
\cdot \mathbf{e}_0 
- \mathbf{x}^T \mathbf{U}^T \cdot \mathbf{s}_1 
-\mathbf{x}^T \mathbf{Z}_{ID_u,t}^T \cdot \left[\mathbf{I}_m|\mathbf{R}_3|\mathbf{R}_4\right]^T \cdot \mathbf{e}_1
=\left\lfloor \frac{q}{K} \right\rfloor \cdot \left\langle \mathbf{x},\mathbf{y} \right\rangle 
+\mathbf{x}^T \cdot (\mathbf{e}_2+ \mathbf{e}_3)
-\left[\begin{matrix}
\mathbf{z}^0_{ID_u,\theta,\mathbf{x}}+\mathbf{z}^0_{t,\theta,\mathbf{x}} \\
\mathbf{z}^1_{ID_u,\theta,\mathbf{x}}\\
\mathbf{z}^1_{t,\theta,\mathbf{x}}
\end{matrix}\right]^T \cdot \left[\mathbf{I}_m|\mathbf{R}_1|\mathbf{R}_2\right]^T \cdot
\mathbf{e}_0
- \mathbf{x}^T \mathbf{Z}_{ID_u,t}^T \cdot \left[\mathbf{I}_m|\mathbf{R}_3|\mathbf{R}_4\right]^T \cdot \mathbf{e}_1$. 

To ensure the correctness of $Dec$, 
$\vert \mathbf{x}^T \cdot (\mathbf{e}_2+ \mathbf{e}_3)
-\left[\begin{matrix}
\mathbf{z}^0_{ID_u,\theta,\mathbf{x}}+\mathbf{z}^0_{t,\theta,\mathbf{x}} \\
\mathbf{z}^1_{ID_u,\theta,\mathbf{x}}\\
\mathbf{z}^1_{t,\theta,\mathbf{x}}
\end{matrix}\right]^T \cdot
\left[\mathbf{I}_m|\mathbf{R}_1|\mathbf{R}_2\right]^T \cdot
\mathbf{e}_0 -
\mathbf{x}^T \mathbf{Z}_{ID_u,t}^T \cdot \left[\mathbf{I}_m|\mathbf{R}_3|\mathbf{R}_4\right]^T 
\cdot \mathbf{e}_1 \vert 
\leq lX(\sigma + \tau) + \rho \sigma \sqrt{m}(1+C'\sqrt{3m})(\sqrt{6m}+lX)
\leq  \frac{q}{4K}$
should be satisfied. 

$\mathbf{Parameters\ Setting.}$ 
To ensure the correctness and security of our scheme, the parameters are set as follows: 
$\sigma(1+2\rho m^2) \leq \frac{q}{4}$, 
$\sigma + \rho \sigma \sqrt{m}(2+C'(k+3)\sqrt{2m}) \leq \frac{q}{4}$,
$lX(\sigma + \tau) + \rho \sigma \sqrt{m}(1+C'\sqrt{3m})(\sqrt{6m}+lX) \leq  \frac{q}{4K}$,
$\alpha q > 2\sqrt{n}$,
$m = O(n\mathrm{log\ }q)$,
$\rho > \omega(\sqrt{\mathrm{log\ }n})$,
$\sigma = 2C\alpha q (\sqrt{n}+\sqrt{m}+\sqrt{l}) $.

\section{PERFORMANCE Analysis} \label{section:PA}
In this section, we conduct both theoretical analysis and experimental evaluation of the scheme and compare it with related schemes. 
\subsection{Theoretical Analysis}
We analyse our scheme from four aspects: the number of interactions among the entities,  computation overhead, storage overhead, and communication overhead of our scheme. 
Tables \ref{tab:interaction_number}, \ref{tab:computation_cost}, 
\ref{tab:storge_cost}, and \ref{tab:communication_cost} show the results.  
In the above tables, $N$, $F$, $T$, $C$, $W$, $V$ and $r$ are the number of users, functions, timestamps, ciphertexts, keywords, system version and revoked users, respectively. 
$t_G$, $t_{SL}$, $t_{BL}$, and $t_m$ are the running time of 
$\mathsf{TrapGen}$, $\mathsf{SampleLeft}$, $\mathsf{SampleBasisLeft}$, and matrix multiplication, respectively.

\begin{table}[!ht] 
    \caption{The number of interactions} \label{tab:interaction_number}
    \centering
    \normalsize
    \resizebox{\linewidth}{!}{
    \begin{tabular}{|c|c|c|c|c|c|}
    \hline
    Scheme&$(CA,DU)$&$(CA,CS)$&$(DU,DD)$&$(DU,CS)$&$(DD,CS)$\\
    \hline
    \cite{han2025inner}&$O(F)$ & $O(V)$ &$-$&$O(C)$&$-$\\
    \hline
    EQDDA-RKS&$O(1)$ & $O(N+F\cdot T)$ & $O(T)$ &-&$O(W)$\\
    \hline
    \end{tabular}}
\end{table}

\begin{table*}[!ht] 
    \caption{Computation Overhead} \label{tab:computation_cost}
    \centering
    \normalsize
    \resizebox{0.7\textwidth}{!}{
    \begin{tabular}{|c|c|c|c|c|c|c|}
    \hline
    Scheme&$Setup$&$SerKG$&$UserKG$&$Token$&$UpdKG$&$TranKG$\\
    \hline
    EQDDA-RKS&$t_G$ & $2t_{SL}$ & $t_{BL}$ & $log\ N \cdot t_{SL}$ & $rlog\ \frac{N}{r} \cdot t_{SL}$& 
    $-$\\
    \hline
    &$FunKG$&$Enc$&$dTrapdoor$&$Test$&$Dec$&$Revoke$\\
    \hline
    EQDDA-RKS&$t_{SL}+t_M$ & $10t_M$ & $t_{SL} +t_M$ & $4t_M$ & $t_M$ & $-$\\
    \hline
    \end{tabular}}
\end{table*}

\begin{table*}[!ht] 
    \caption{Storage Overhead} \label{tab:storge_cost}
    \centering
    \normalsize
    \resizebox{0.7\textwidth}{!}{
    \begin{tabular}{|c|c|c|c|c|c|c|}
    \hline
    \multicolumn{2}{|c|}{Scheme}&public parameters&server secret key& user secret key & token&update key\\  
    \hline
    \multirow{2}{*}{EQDDA-RKS} & Server& \multirow{2}{*}{$O(\lambda^2)$}& $O(\lambda^2)$&$-$& $log\ N \cdot O(\lambda^2)$ & $rlog\ \frac{N}{r}\cdot O(\lambda^2)$\\
    \cline{2-2} \cline{4-7}
    &User & &$-$&$O(\lambda^2)$&$-$&$-$\\
    \hline
    \end{tabular}}
\end{table*}

\begin{table*}[!ht] 
    \caption{Communication Overhead} \label{tab:communication_cost}
    \centering
    \normalsize
    \resizebox{0.7\textwidth}{!}{
    \begin{tabular}{|c|c|c|c|c|c|c|}
    \hline
    Scheme&$Setup$&$SerKG$&$UserKG$&$Token$&$UpdKG$&$TranKG$\\
    \hline
    EQDDA-RKS&$O(\lambda^2)$ & $O(\lambda^2)$ & $O(\lambda^2)$ & $log\ N \cdot O(\lambda^2)$ & $rlog\ \frac{N}{r}\cdot O(\lambda^2)$& $-$\\
    \hline
    &$FunKG$&$Enc$&$dTrapdoor$&$Test$&$Dec$&$Revoke$\\
    \hline
    EQDDA-RKS&$-$ & $O(\lambda)$ & $O(\lambda)$ & $-$ & $-$ & $-$\\
    \hline
    \end{tabular}}
\end{table*}

\subsection{Implementation and Evaluation}
\begin{figure*}[!ht]
    \centering
        \subfloat[$Setup$]{\label{figure:test_Setup}\includegraphics[width = 0.25\textwidth]{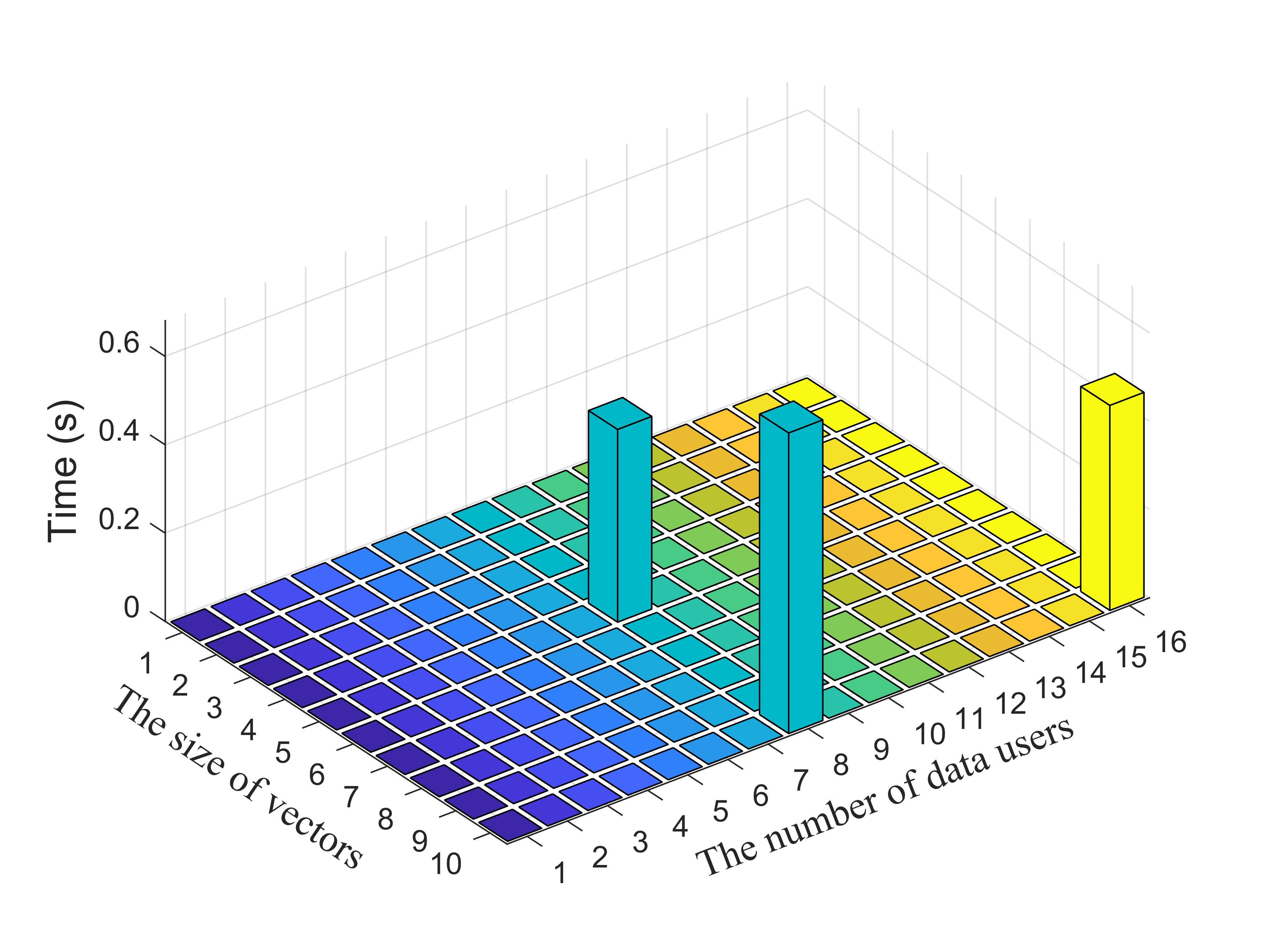}}
        \subfloat[$SerKG$]{\label{figure:test_SerKG}\includegraphics[width = 0.25\textwidth]{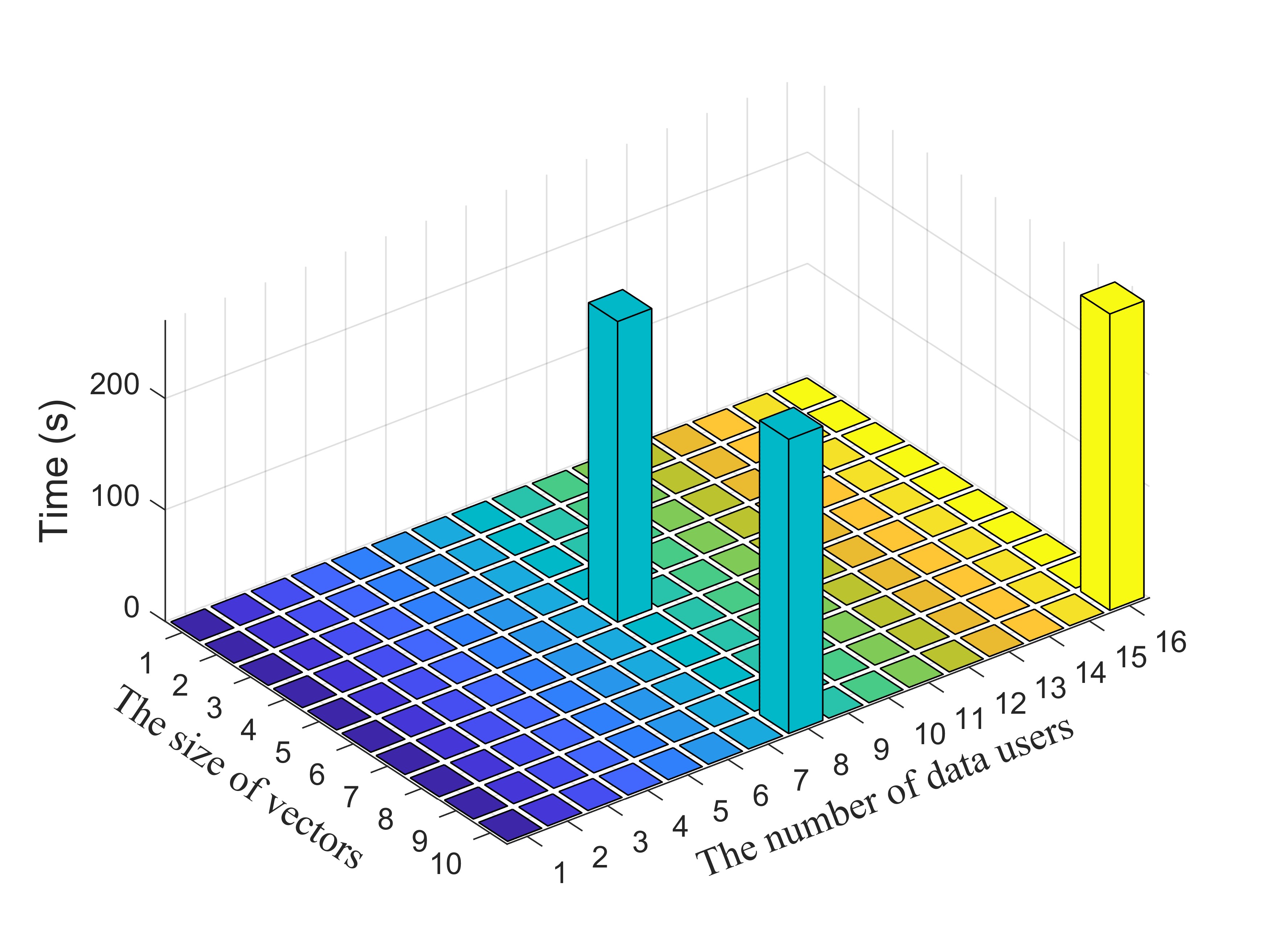}}
        \subfloat[$UserKG$]{\label{figure:test_UserKG}\includegraphics[width = 0.25\textwidth]{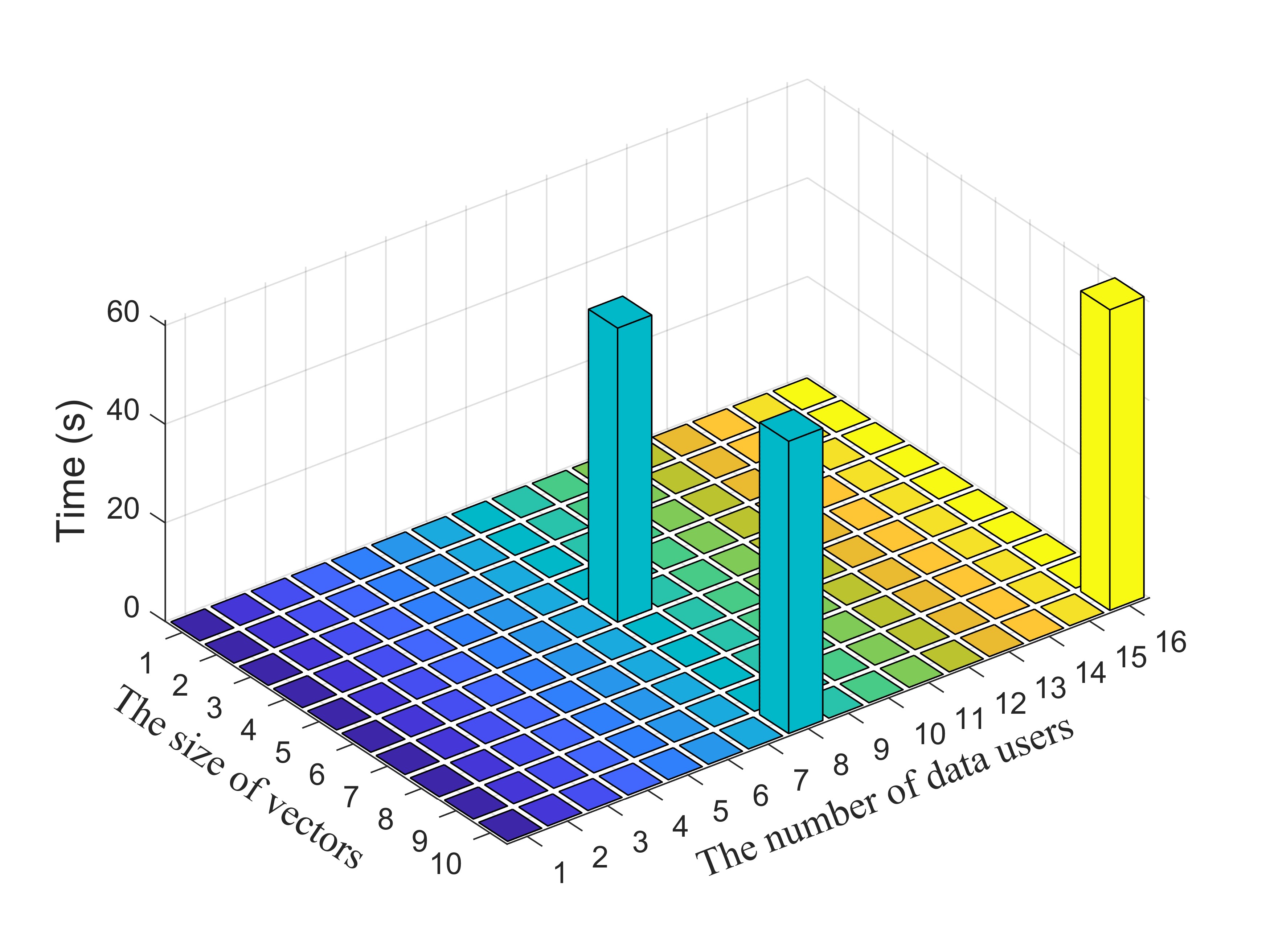}}
        \subfloat[$Token$]{\label{figure:test_Token}\includegraphics[width = 0.25\textwidth]{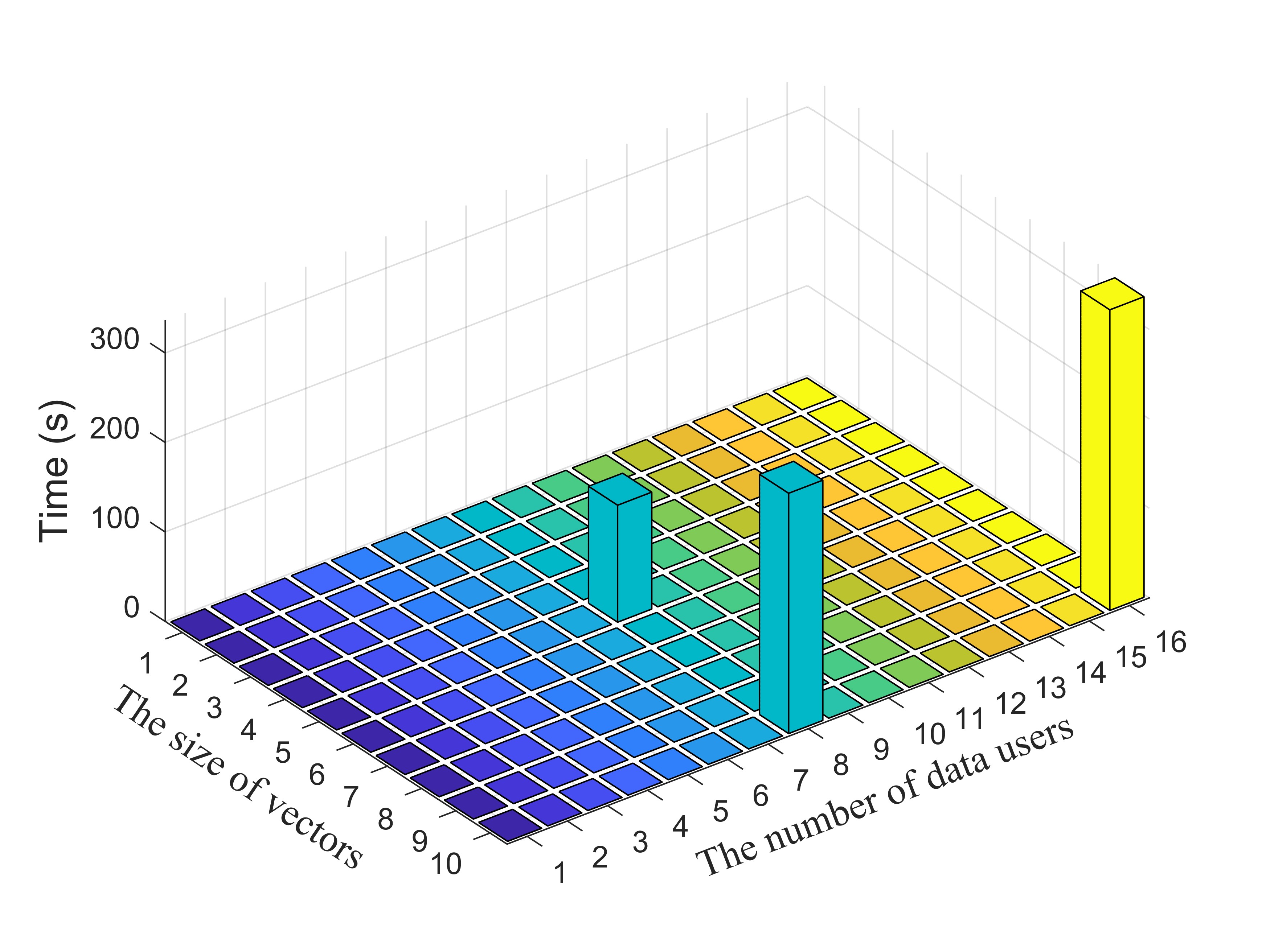}}
        
        \subfloat[$UpdKG$]{\label{figure:test_UpdKG}\includegraphics[width = 0.25\textwidth]{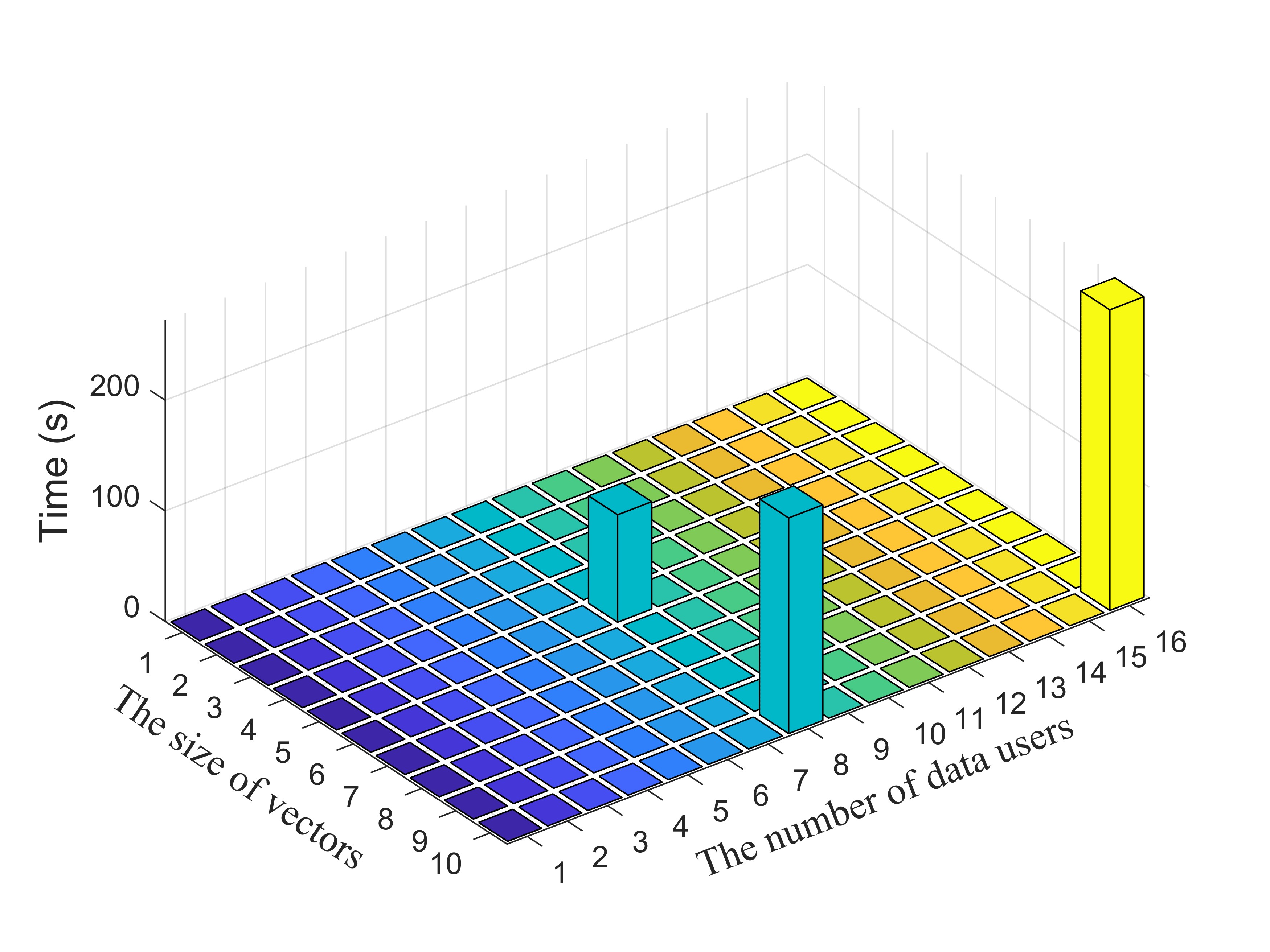}}
        \subfloat[$TranKG$]{\label{figure:test_TranKG}\includegraphics[width = 0.25\textwidth]{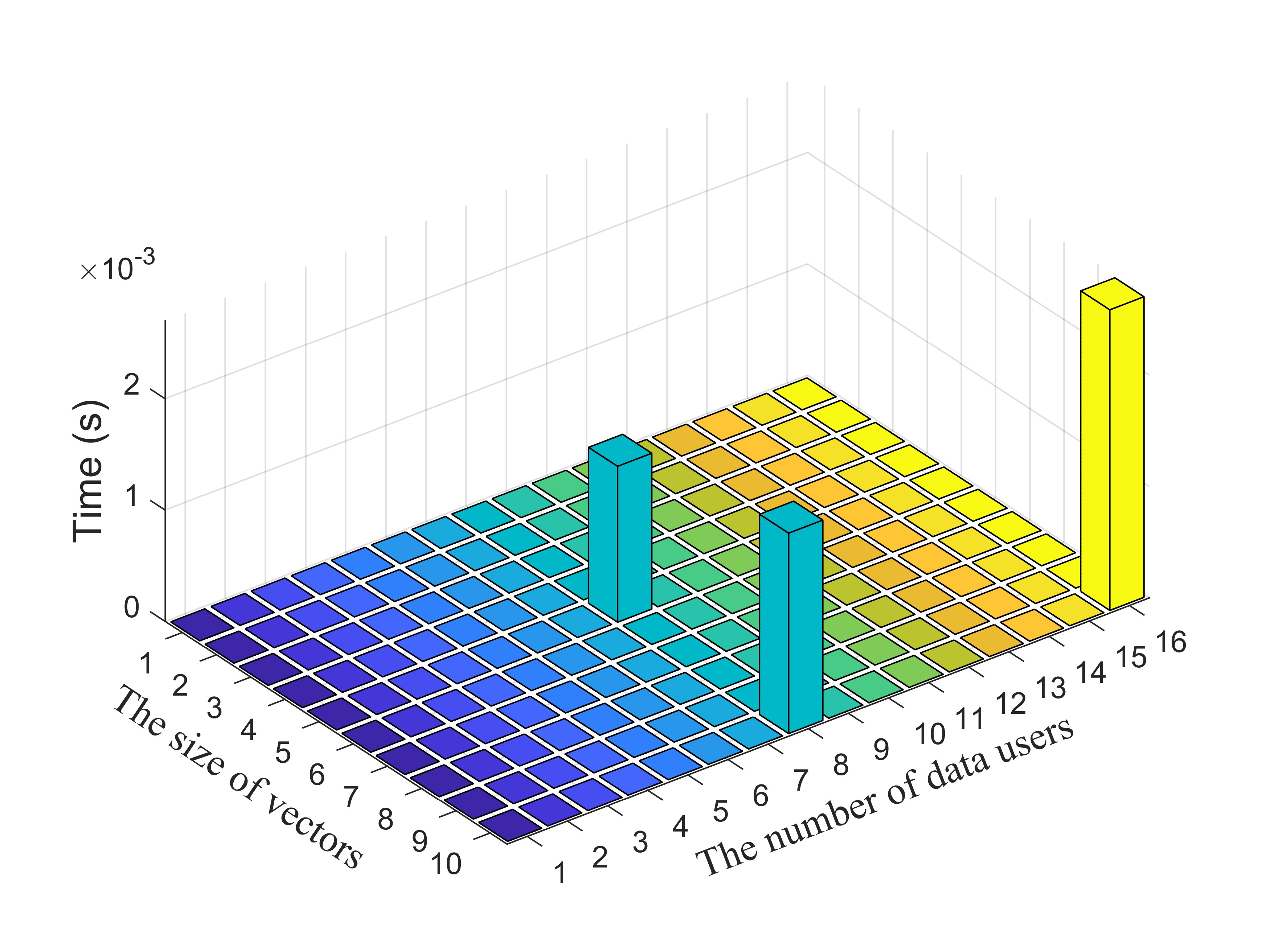}}
        \subfloat[$FunKG$]{\label{figure:test_FunKG}\includegraphics[width = 0.25\textwidth]{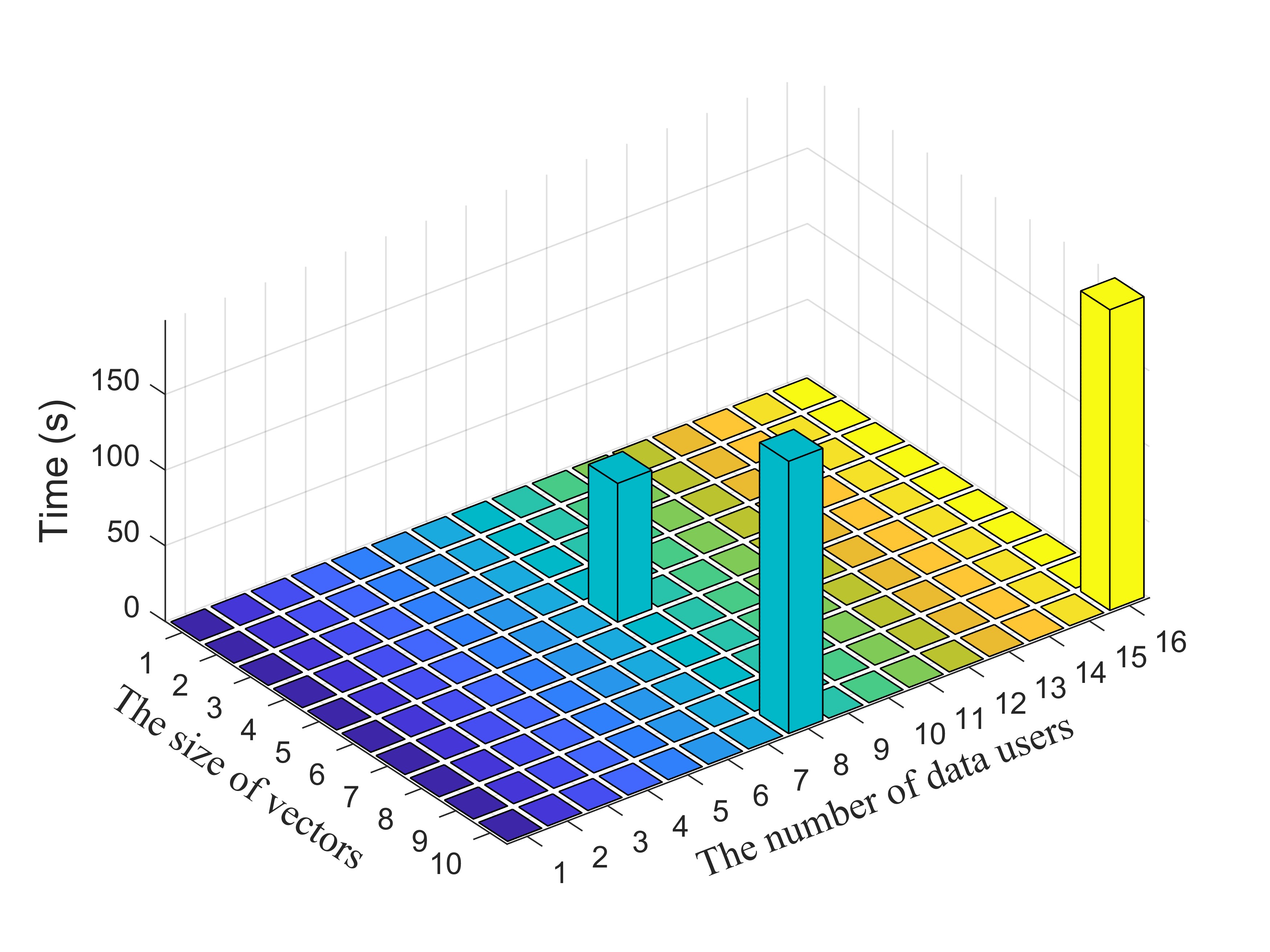}}
        \subfloat[$Enc$]{\label{figure:test_Enc}\includegraphics[width = 0.25\textwidth]{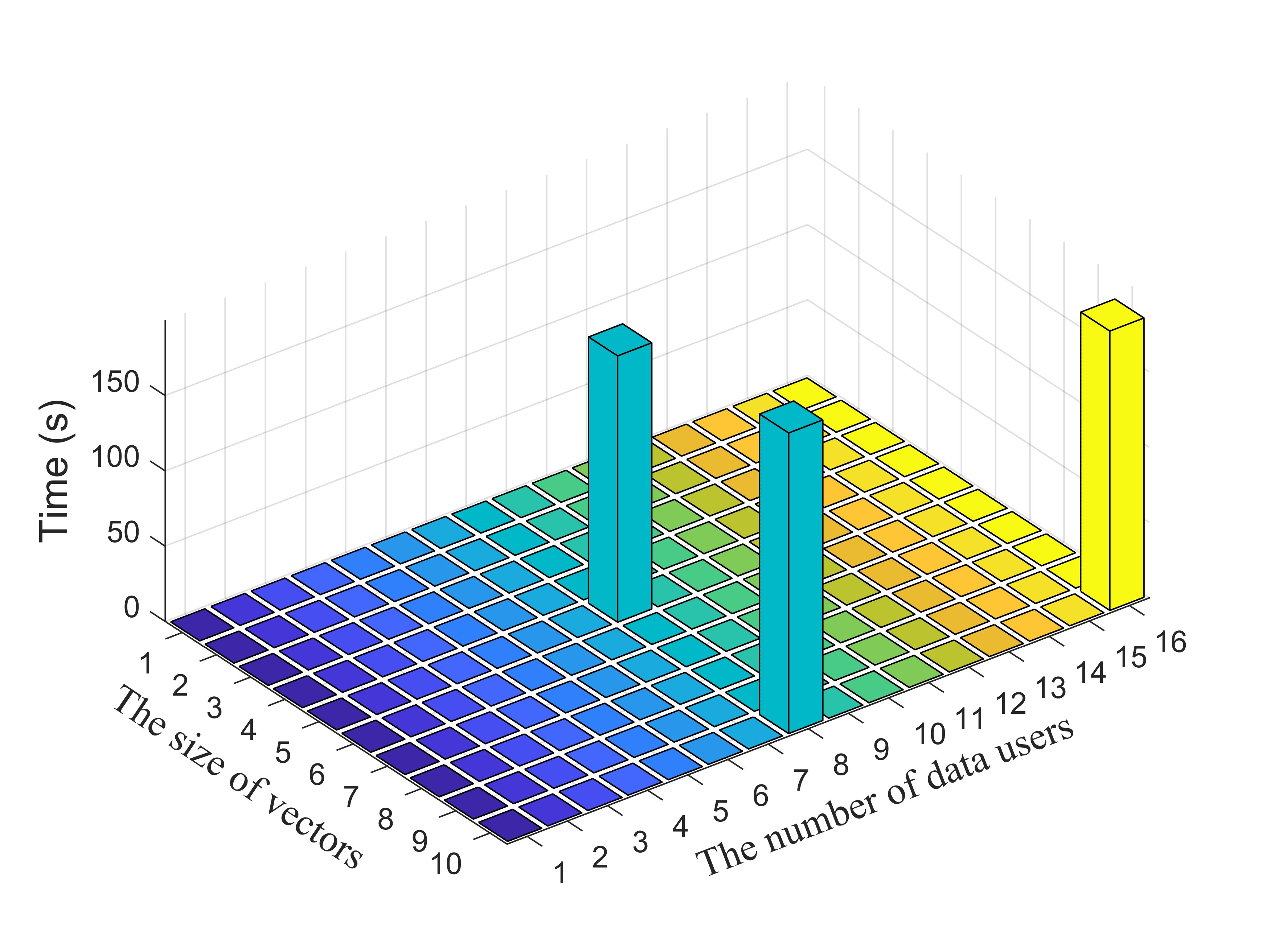}}
        
        \subfloat[$dTrapdoor$]{\label{figure:test_dTrapdoor}\includegraphics[width = 0.25\textwidth]{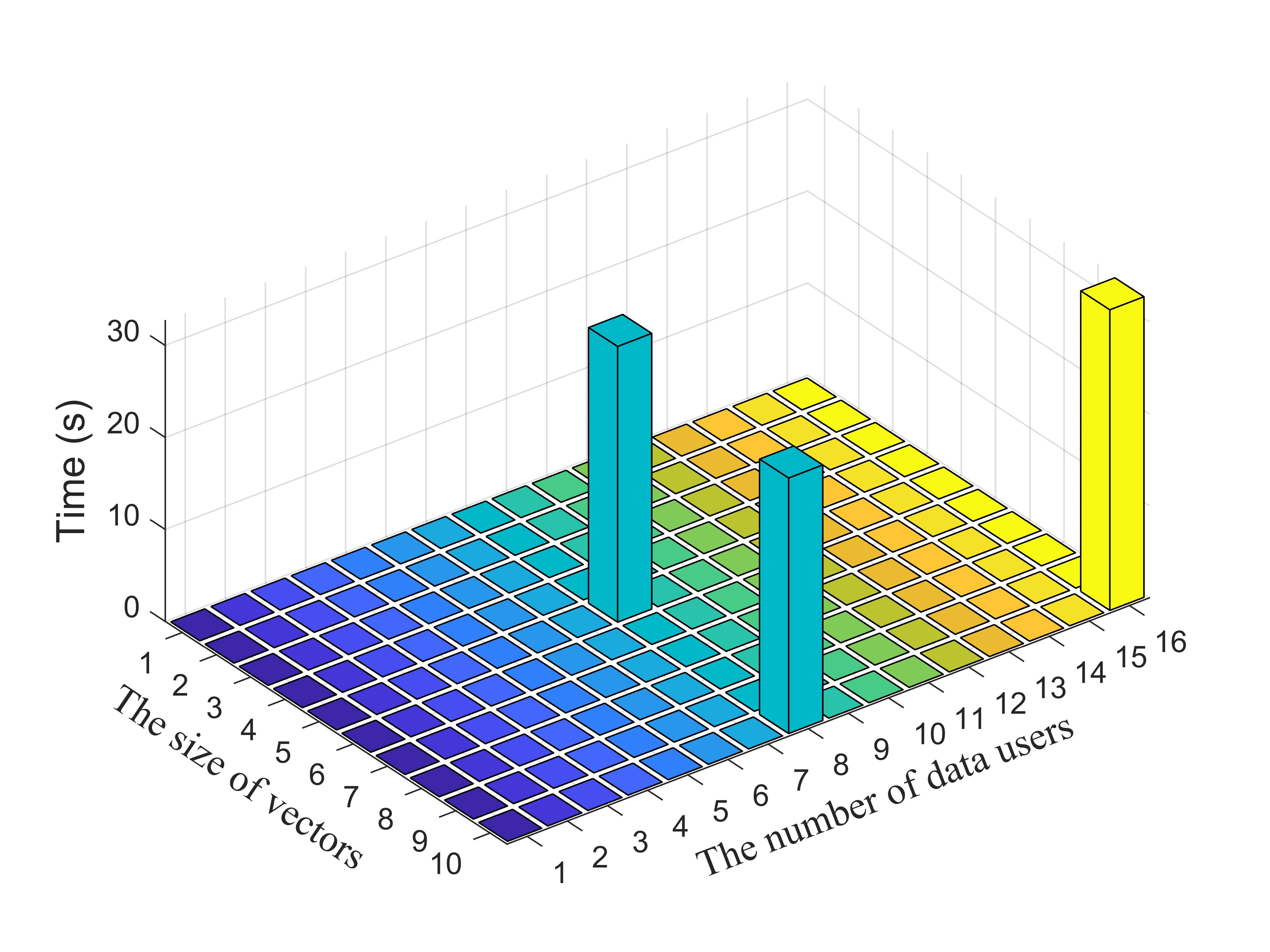}}
        \subfloat[$Test$]{\label{figure:test_Test}\includegraphics[width = 0.25\textwidth]{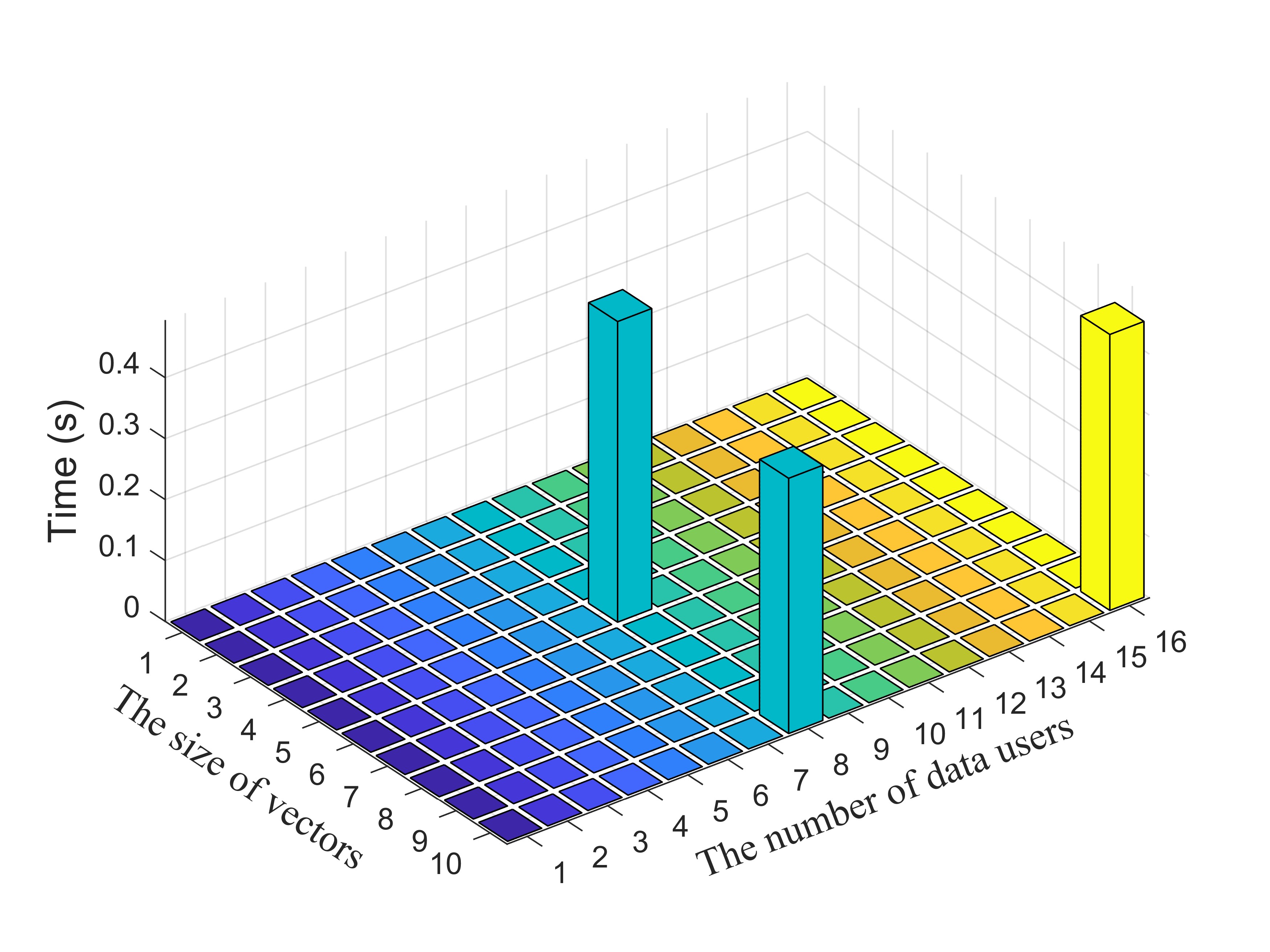}}
        \subfloat[$Dec$]{\label{figure:test_Dec}\includegraphics[width = 0.25\textwidth]{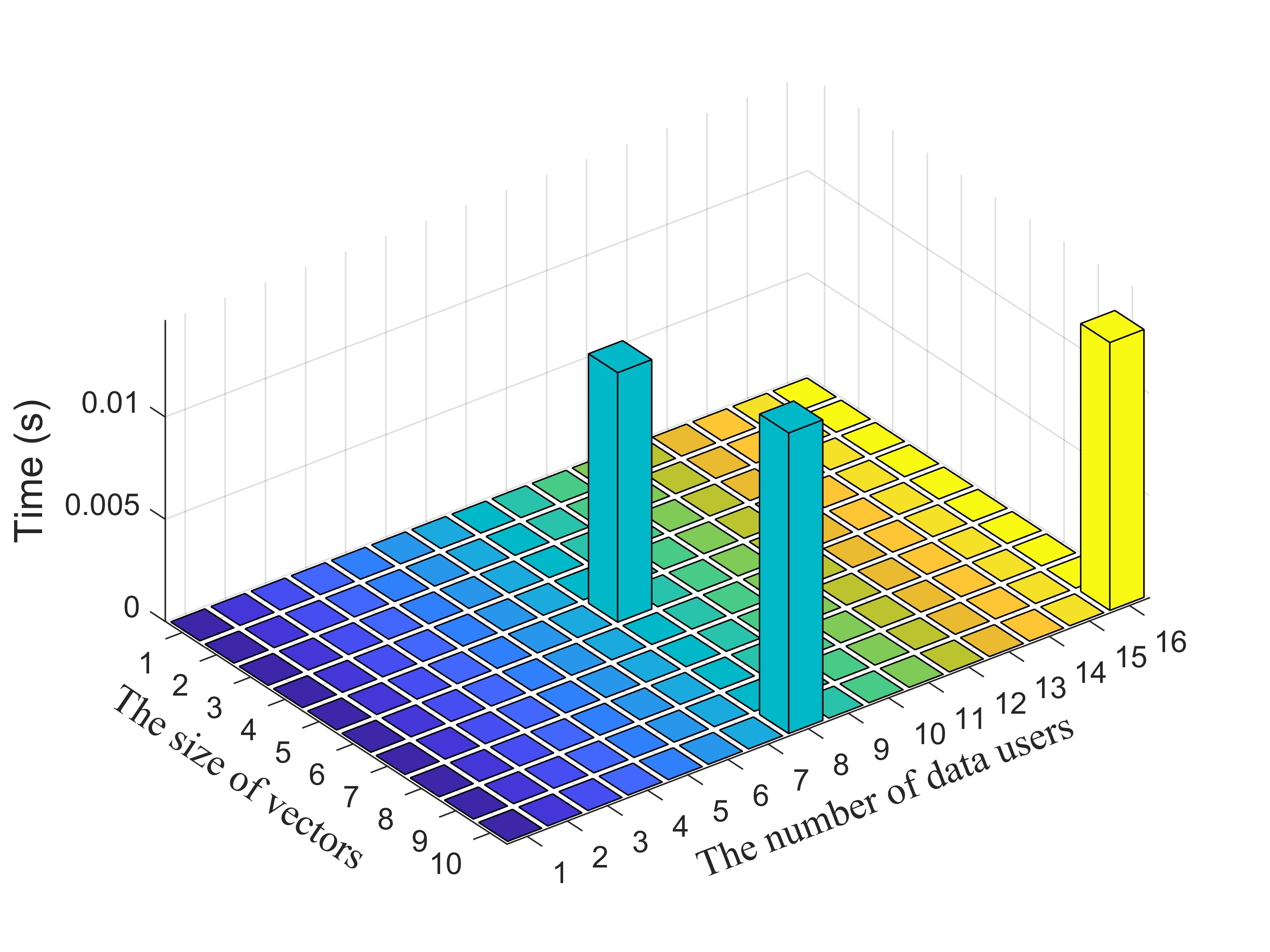}}
        \subfloat[$Revoke$]{\label{figure:test_Revoke}\includegraphics[width = 0.25\textwidth]{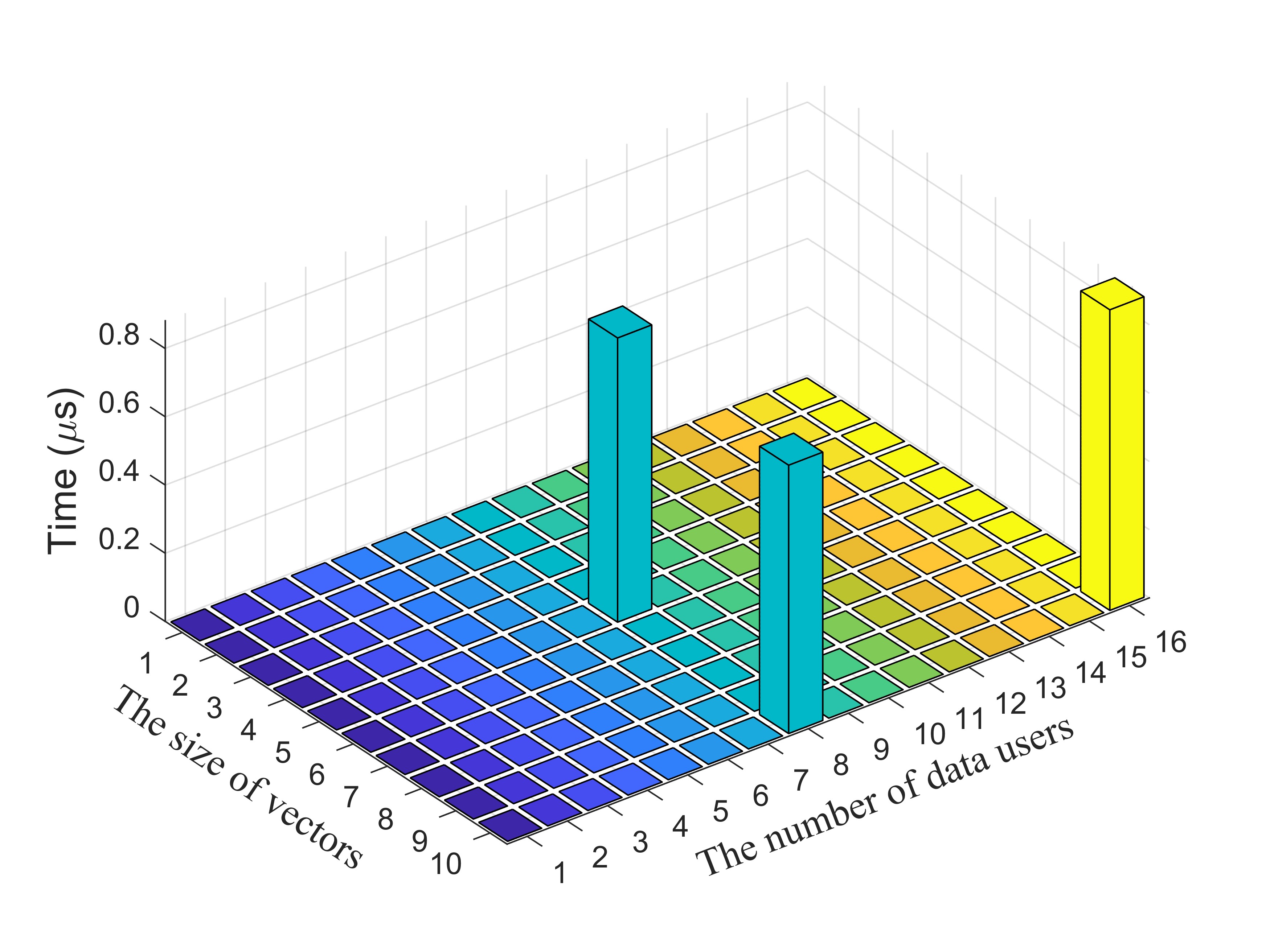}}
        \caption{The computational cost of algorithms}\label{figure:test_time}
\end{figure*}

In this section, we conduct experiments and evaluations on EQDDA-RKS using Sagemath \cite{sage} under the condition of n=64. All experiments were conducted on an Inspiron 13 5320 laptop from Dell, operating on Ubuntu 22.04.4 LTS and equipped with a 12th-generation Core i5-1240P processor, a 16-GB RAM, and a 512-GB SSD. 

Each algorithm was tested under three parameter configurations (1) $l=5$, $N=8$; (2) $l=10$, $N=8$; and (3) $l=10$, $N=16$, where $l$ denotes the vector length and $N$ represents the number of users. The computation overhead of our scheme is illustrated in Figure \ref{figure:test_time}.

Figure. \ref{figure:test_Setup} demonstrates that the $Setup$ algorithm requires 0.4363 s, 0.4406 s, and 0.4641 s in the three parameter settings, showing a linear increase with respect to $l$.
As shown in Figure. \ref{figure:test_SerKG}, the runtime of $SerKG$ remains essentially constant, with values of 269.6263 s, 263.9048 s, and 265.8749 s across the three cases.
Figure. \ref{figure:test_UserKG} similarly shows a constant cost for $UserKG$, namely 59.6338 s, 59.2706 s, and 60.9676 s.
Figure. \ref{figure:test_Token} demonstrates that the $Token$ procedure grows linearly in both $l$ and $N$, taking 130.7054 s, 268.5413 s, and 335.9397 s, respectively. 

Figure. \ref{figure:test_UpdKG} indicates that the $UpdKG$ algorithm takes 97.0945 s, 194.9672 s, and 271.7874 s in the three configurations, showing a linear dependency on both $l$ and $N$.
Figure. \ref{figure:test_TranKG} further shows that the cost of $TranKG$ also increases linearly with $l$ and $N$, yielding runtimes of 0.0014 s, 0.0018 s, and 0.0027 s, respectively.
As depicted in Figure. \ref{figure:test_FunKG}, the $FunKG$ procedure requires 91.666 s, 179.8688 s, and 198.6226 s, which reflects a linear relationship with $N$.
Finally, Figure. \ref{figure:test_Enc} demonstrates that the $Enc$ algorithm similarly grows linearly in $N$, with execution times of 176.856 s, 199.6392 s, and 185.5237 s under the three settings.

Figure. \ref{figure:test_dTrapdoor} shows that the $dTrapdoor$ procedure maintains a constant runtime, recorded as 29.9673 s, 27.7765 s, and 32.6723 s across the three settings.
In Figure. \ref{figure:test_Test}, the $Test$ algorithm likewise exhibits a constant cost, requiring 0.4933 s, 0.4189 s, and 0.4528 s in the respective cases.
Figure. \ref{figure:test_Dec} reports that $Dec$ runs in 0.0122 s, 0.0147 s, and 0.0131 s, showing a linear dependence on $l$.
Finally, Figure. \ref{figure:test_Revoke} indicates that the $Revoke$ algorithm remains constant as well, with execution times of 0.8331 $\mu$s, 0.7867 $\mu$s, and 0.8812 $\mu$s.

\subsection{Comparison}
In this section, we compare $Setup$, $Enc$, $FunKG$ and $Dec$ algorithms with \cite{han2025inner} in Table \ref{tab:comparison_cost}. 

In terms of the $Dec$ algorithm, our scheme is more efficient than \cite{han2025inner}. However, since our scheme supports keyword search, its $Setup$ and $Enc$ algorithms are less efficient than those of \cite{han2025inner}. 
Moreover, to enable temporary delegation, our scheme employs the $SampleLeft$ algorithm in the $FunKG$ algorithm, which leads to higher computational overhead compared with \cite{han2025inner}.

\begin{table}[!ht] 
    \small
    \caption{Comparison} \label{tab:comparison_cost}
    \centering
    \resizebox{0.8\linewidth}{!}{
    \begin{tabular}{|c|c|c|c|}
    \hline
    \multicolumn{4}{|c|}{$Setup$ (s)}\\  
    \hline
    \diagbox[]{Scheme}{$(l,N)$}& (5,8) &(10,8)& (10,16)\\
    \hline
    \cite{han2025inner}&0.0161&0.02 & 0.0189\\ 
    \hline
    EQDDA-RKS&0.4363&0.4406&0.4641\\
    \hline
    \multicolumn{4}{|c|}{$FunKG$ (s)}\\
    \hline
    \diagbox[]{Scheme}{$(l,N)$}& (5,8) &(10,8)& (10,16)\\
    \hline
    \cite{han2025inner}&0.1225&0.1621 &0.2006\\ 
    \hline
    EQDDA-RKS&91.666&179.8688&178.6226\\
    \hline
    \multicolumn{4}{|c|}{$Enc$ (s)}\\  
    \hline
    \diagbox[]{Scheme}{$(l,N)$}& (5,8) &(10,8)& (10,16)\\
    \hline
    \cite{han2025inner}&0.0021&0.0025 & 0.0023\\ 
    \hline
    EQDDA-RKS&176.856&199.6392&178.6226\\
    \hline
    \multicolumn{4}{|c|}{$Dec$ (s)}\\  
    \hline
    \diagbox[]{Scheme}{$(l,N)$}& (5,8) &(10,8)& (10,16)\\
    \hline
    \cite{han2025inner}&2.4528&2.7177 & 2.7607\\ 
    \hline
    EQDDA-RKS&0.0122&0.0147&0.0122\\
    \hline
    \end{tabular}}
\end{table}

\section{Security Proof} \label{section:SP}
This section presents the proofs of sIND-CPA, KC-sIND-CPA, and KT-sIND-CPA security for our scheme.

\begin{theorem} \label{proof1}
    Our scheme achieves $(t',\epsilon'(\lambda))$-sIND-CPA security provided that the ALS-IPFE construction in \cite{agrawal2016fully} satisfies $(t,\epsilon(\lambda))$-IND-CPA security,
where $t = O(t')$ and $\epsilon(\lambda) \ge \epsilon'(\lambda)$. 
\end{theorem}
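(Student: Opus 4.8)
The plan is to give a direct, advantage-preserving reduction: from any PPT adversary $\mathcal{A}$ that wins the sIND-CPA game for EQDDA-RKS I would build an adversary $\mathcal{B}$ against the IND-CPA security of the ALS-IPFE of \cite{agrawal2016fully}, so that $\mathcal{B}$ wins with advantage at least that of $\mathcal{A}$ up to negligible statistical terms. The key observation is that the inner-product core of our ciphertext is an ALS ciphertext whose single secret has been split as $\mathbf{s}_0+\mathbf{s}_1$: the server-side transformation key cancels the $\mathbf{s}_0$-mask through $\mathbf{c}_0$ and the user-side function key cancels the $\mathbf{s}_1$-mask through $\mathbf{c}_1$, while $\mathbf{c}_2=\mathbf{U}^{\top}(\mathbf{s}_0+\mathbf{s}_1)+\mathbf{e}_2+\mathbf{e}_3+\lfloor q/K\rfloor\mathbf{y}$ is exactly the message-carrying ALS component for secret $\mathbf{s}_0+\mathbf{s}_1$ and target $\mathbf{U}$. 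The keyword components $\mathbf{c}_3,\mathbf{c}_4,\mathbf{c}_5$ are independent of the plaintext $\mathbf{y}_b$, so $\mathcal{B}$ generates them honestly and they play no role in the reduction.

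First, $\mathcal{B}$ receives the ALS master public key $(\mathbf{A},\mathbf{U})$ and plants $\mathbf{A}$ and $\mathbf{U}$ into $pp$, setting the remaining matrices via the partitioning technique of \cite{agrawal2010efficient}. Given the target $(ID_u^\ast,\mathbf{x}^\ast,t^\ast)$ from $\mathbf{Init}$, I would set $\mathbf{B}_1=\mathbf{A}\mathbf{R}^{(1)}-H_1(ID_u^\ast)\mathbf{G}$ and $\mathbf{B}_2=\mathbf{A}\mathbf{R}^{(2)}-H_1(t^\ast)\mathbf{G}$ with low-norm $\mathbf{R}^{(1)},\mathbf{R}^{(2)}$, so that $\mathbf{B}_{ID_u^\ast}=\mathbf{A}\mathbf{R}^{(1)}$ and $\mathbf{B}_{t^\ast}=\mathbf{A}\mathbf{R}^{(2)}$ are punctured, whereas for every $ID\neq ID_u^\ast$ and $t\neq t^\ast$ the $\mathbf{G}$-coefficient is full-rank by the FRD property and a right-lattice trapdoor is available through $\mathsf{SampleBasisRight}$ and $\mathsf{SampleRight}$. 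Because $\widetilde{ID_u^\ast}$ differs from $ID_u^\ast$ only in the first coordinate, $\mathbf{B}_{\widetilde{ID_u^\ast}}$ still carries a full-rank $\mathbf{G}$-term, so $\mathcal{B}$ can answer $UserKG(ID_u^\ast)$ as well as $SerKG$, $UpdKG$ for $t\neq t^\ast$, and $dTrapdoor$ queries with right-lattice trapdoors. The binary-tree matrices are programmed in all-but-one style: for $\theta\in\mathsf{Path}(v_{ID_u^\ast})$ I would first sample a short $\mathbf{Z}_{ID_u^\ast,\theta}$ and set $\mathbf{U}_{\theta,1}=\mathbf{A}_{ID_u^\ast}\mathbf{Z}_{ID_u^\ast,\theta}$, so that $Token(ID_u^\ast)$ is answerable despite the puncture, while for the remaining nodes $\mathbf{U}_{\theta,1}$ is chosen uniformly. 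The IPFE-relevant preimages needed for $UpdKG$, $FunKG$ and the transformation key are obtained from the ALS key-generation oracle, which returns a short $\mathbf{k}_{\mathbf{x}}$ with $\mathbf{A}\mathbf{k}_{\mathbf{x}}=\mathbf{U}\mathbf{x}$ and is then lifted to the extended matrices; this oracle can only be invoked for vectors $\mathbf{x}$ satisfying $\langle\mathbf{x},\mathbf{y}_0^\ast\rangle=\langle\mathbf{x},\mathbf{y}_1^\ast\rangle$, which is precisely what the admissibility conditions on $Table_1$ and $Table_2$ guarantee.

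For the challenge, $\mathcal{B}$ forwards $(\mathbf{y}_0^\ast,\mathbf{y}_1^\ast)$ to its ALS challenger, receives $(\mathbf{c}_0^{\mathrm{ALS}},\mathbf{c}_1^{\mathrm{ALS}})$ encrypting $\mathbf{y}_b^\ast$ under an unknown secret $\mathbf{s}$, implicitly sets $\mathbf{s}_0=\mathbf{s}$, and samples $\mathbf{s}_1$ itself. It then builds $\mathbf{c}_0$ from $\mathbf{c}_0^{\mathrm{ALS}}$ by taking the $\mathbf{A}$-block to be $\mathbf{c}_0^{\mathrm{ALS}}$ and the $\mathbf{B}_{ID_u^\ast}$- and $\mathbf{B}_{t^\ast}$-blocks to be $(\mathbf{R}^{(1)})^{\top}\mathbf{c}_0^{\mathrm{ALS}}$ and $(\mathbf{R}^{(2)})^{\top}\mathbf{c}_0^{\mathrm{ALS}}$, reusing $\mathbf{R}^{(1)},\mathbf{R}^{(2)}$ as the ciphertext randomisers $\mathbf{R}_1,\mathbf{R}_2$, which makes this block exact; it sets $\mathbf{c}_2=\mathbf{c}_1^{\mathrm{ALS}}+\mathbf{U}^{\top}\mathbf{s}_1$ with re-randomised noise, and computes $\mathbf{c}_1$ together with the keyword parts directly from the known $\mathbf{s}_1$. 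Finally $\mathcal{B}$ outputs $\mathcal{A}$'s guess. I would then argue that the simulated transcript is statistically close to the real game, invoking the noise-distribution lemma to re-randomise the cross terms into the honest discrete Gaussians and the regularity of $\mathsf{TrapGen}$ together with a leftover-hash argument for the programmed $\mathbf{U}_{\theta,1}$.

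I expect the main obstacle to be showing that no admissible combination of queries lets $\mathcal{A}$ remove the $\mathbf{s}_0$-mask on $\mathbf{x}^\ast$ in the challenge, since that is exactly what would bypass ALS-hiding. This requires a two-case analysis mirroring the side conditions of the game: if $UserKG(ID_u^\ast)$ is issued, then the revocation invariant forces $(ID_u^\ast,\mathbf{x}^\ast)$ to be revoked at $t^\ast$, so the $\mathbf{x}^\ast$-transformation key that would cancel $\mathbf{s}_0$ is never produced; if it is not issued, the constraint on $Table_2$ forbids a usable $FunKG$ on $(ID_u^\ast,\mathbf{x}^\ast,t^\ast)$. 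Verifying that every $Token$, $UpdKG$, $TranKG$ and $FunKG$ response in both cases can be assembled from a right-lattice trapdoor, a programmed node value, or an ALS key-oracle answer for a constraint-satisfying vector, and that the $\mathsf{KUNodes}$ covering at $t^\ast$ never exposes a preimage that defeats the puncture, is the technically heavy step of the proof.
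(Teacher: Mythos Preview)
Your high-level reading of the ciphertext structure is correct, and the partitioning via $\mathbf{B}_1=\mathbf{A}\mathbf{R}^{(1)}-H_1(ID_u^\ast)\mathbf{G}$, $\mathbf{B}_2=\mathbf{A}\mathbf{R}^{(2)}-H_1(t^\ast)\mathbf{G}$ together with the programming of $\mathbf{U}_{\theta,1}$ on $\mathsf{Path}(v_{ID_u^\ast})$ is exactly what the paper does in one half of its proof. However, the attempt to run a \emph{single} reduction that always embeds the ALS secret in $\mathbf{s}_0$ has a genuine gap.

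The problem is your claim that ``the admissibility conditions on $Table_1$ and $Table_2$ guarantee'' that every vector for which you must invoke the ALS key oracle satisfies $\langle\mathbf{x},\mathbf{y}_0^\ast\rangle=\langle\mathbf{x},\mathbf{y}_1^\ast\rangle$. The game constrains only \emph{one} of the two tables, depending on whether $UserKG(ID_u^\ast)$ was issued. Take the type-2 adversary, who never calls $UserKG(ID_u^\ast)$: then only $Table_2$ (the $FunKG$ record) is constrained, while $Table_1$ is not. Such an adversary is free to query $UpdKG(\mathbf{x},t^\ast)$ for some $\mathbf{x}\neq\mathbf{x}^\ast$ with $\langle\mathbf{x},\mathbf{y}_0^\ast\rangle\neq\langle\mathbf{x},\mathbf{y}_1^\ast\rangle$, and since nothing forces $ID_u^\ast$ to be revoked for this $\mathbf{x}$, the set $\mathsf{KUNodes}(BT,RL_{\mathbf{x}},t^\ast)$ may well contain a node $\theta\in\mathsf{Path}(v_{ID_u^\ast})$. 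For that $\theta$ you must output $\mathbf{Z}_{t^\ast,\theta}\cdot\mathbf{x}$ with $\mathbf{A}_{t^\ast}\mathbf{Z}_{t^\ast,\theta}=\mathbf{U}_{\theta,2}$; but $\mathbf{A}_{t^\ast}$ is punctured, on path nodes you programmed $\mathbf{U}_{\theta,1}$ rather than $\mathbf{U}_{\theta,2}$, and the ALS oracle cannot be called on this $\mathbf{x}$. The simulation therefore fails. Your final paragraph correctly notes that in this case the forbidden object is $FunKG(ID_u^\ast,\mathbf{x}^\ast,t^\ast)$, but that key strips $\mathbf{s}_1$, not the $\mathbf{s}_0$ into which you placed the ALS challenge, so the observation does not rescue the reduction.

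The paper's proof resolves this by first guessing which of the two adversary types it faces and then running two \emph{different} reductions. For type~1 it proceeds essentially as you describe (puncture at $ID_u^\ast$, ALS secret in $\mathbf{s}_0$, ALS oracle used for $UpdKG$ at $t^\ast$ on path nodes, which is legal because $Table_1$ is constrained). For type~2 it instead punctures at $\widetilde{ID_u^\ast}$, embeds the ALS secret in $\mathbf{s}_1$, programs $\mathbf{U}_{\theta,2}=\mathbf{A}_{t^\ast}\mathbf{Z}_{t^\ast,\theta}$ at \emph{every} node so that $UpdKG$ at $t^\ast$ is answerable without the ALS oracle, and uses the ALS oracle only for $FunKG(ID_u^\ast,\cdot,\cdot)$, which is exactly where the $Table_2$ constraint applies. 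The two-case split with two distinct embeddings is not a cosmetic choice; it is what makes the simulation go through, and your proposal is missing the second half.
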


Assume that a PPT adversary $\mathcal{A}$ attains a non-negligible advantage against the sIND-CPA security of our construction.
In that case, $\mathcal{A}$ can be transformed into an algorithm $\mathcal{B}$ that compromises the IND-CPA security of the ALS-IPFE scheme \cite{agrawal2016fully}.
Following the methodology of \cite{qin2015server}, we analyse the reduction by considering two separate adversarial types.
 
    $\mathcal{A}_1:$ The adversary issues a UserKG query for the target identity $ID_u^*$, then the function computing rights of $ID_u^*$ with respect to the challenge function $\mathbf{x}^*$ must be revoked before the challenge time $t^*$.

    $\mathcal{A}_2:$ The adversary refrains from requesting any UserKG query for the target identity $ID_u^*$. They are permitted to submit FunKG queries on tuples $(ID_u^*, \mathbf{x}, t)$ as long as $(\mathbf{x}, t) \neq (\mathbf{x}^*, t^*)$.

    Algorithm $\mathcal{B}$ starts by making a random guess about which type of adversary it is interacting with.
The remainder of the proof of Theorem~\ref{proof1} proceeds by invoking the strategy-dividing lemma presented in \cite{nguyen2016server,katsumata2020lattice}.

    \begin{lemma}\label{PL1}
    Assume there exists a PPT adversary $\mathcal{A}_1$ that compromises the sIND-CPA security of our construction with advantage $\epsilon'_1(\lambda)$.
From $\mathcal{A}_1$, one can build an algorithm $\mathcal{B}$ that violates the IND-CPA security of the ALS-IPFE scheme, obtaining an advantage $\epsilon_1(\lambda)$ such that $\epsilon_1(\lambda) \ge \epsilon'_1(\lambda)$.
    \end{lemma}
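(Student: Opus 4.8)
The plan is to construct $\mathcal{B}$ so that it simultaneously answers $\mathcal{A}_1$'s sIND-CPA queries and plays the IND-CPA game against the ALS-IPFE challenger. The feature that isolates the $\mathcal{A}_1$ branch is that, although $\mathcal{A}_1$ owns $sk_{ID_u^*}=\mathbf{T}_{\widetilde{ID_u^*}}$ and can therefore generate any function key on the $\widetilde{ID_u^*}$-side by itself, the game forces $(ID_u^*,\mathbf{x}^*)$ to be revoked at some $t\le t^*$. By the complete-subtree method, $\mathsf{KUNodes}(BT,RL_{\mathbf{x}^*},t^*)$ then shares no node with $\mathsf{path}(v_{ID_u^*})$, so $\mathcal{A}_1$ can never assemble $tk_{ID_u^*,\mathbf{x}^*,t^*}$ and the randomness $\mathbf{s}_0$ masking $\mathbf{y}_b$ through the transformation path in $\mathbf{c}_0$ and $\mathbf{c}_2$ stays hidden. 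I would therefore embed the ALS-IPFE challenge precisely into this $\mathbf{s}_0$-side, while simulating the transparent $\mathbf{s}_1$-side (the $\mathbf{c}_1$ component and the $\mathbf{s}_1$-part of $\mathbf{c}_2$) honestly, since $\mathcal{B}$ can simply sample $\mathbf{s}_1$ itself.

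For \textbf{Setup}, $\mathcal{B}$ adopts the ALS-IPFE master public key as the EQDDA matrices $\mathbf{A}$ and $\mathbf{U}$, so $\mathbf{T}_\mathbf{A}$ is not available, and fixes the binding matrices by the Agrawal--Boneh--Boyen right-lattice method, puncturing at both the challenge identity and the challenge time: $\mathbf{B}_1=\mathbf{A}\mathbf{S}_1-H_1(ID_u^*)\mathbf{G}$ and $\mathbf{B}_2=\mathbf{A}\mathbf{S}_2-H_1(t^*)\mathbf{G}$ with low-norm $\mathbf{S}_1,\mathbf{S}_2$. The scheme's deliberate split between $\widetilde{ID_u}$ on the function-key side and $ID_u$ on the transformation side is exactly what lets this go through: since $\widetilde{ID_u^*}\neq ID_u^*$, the matrix $\mathbf{B}_{\widetilde{ID_u^*}}$ keeps an invertible FRD coefficient of $\mathbf{G}$, so $\mathcal{B}$ still produces $\mathbf{T}_{\widetilde{ID_u^*}}$ via $\mathsf{SampleBasisRight}$ and answers all $UserKG$, $SerKG$, and $dTrapdoor$ queries, while $\mathbf{B}_{ID_u^*}=\mathbf{A}\mathbf{S}_1$ and $\mathbf{B}_{t^*}=\mathbf{A}\mathbf{S}_2$ lose their $\mathbf{G}$-terms, making $\mathbf{A}_{ID_u^*,t^*}=[\mathbf{A}\mid\mathbf{A}\mathbf{S}_1\mid\mathbf{A}\mathbf{S}_2]$ a $\mathbf{G}$-free matrix into which the ALS challenge ciphertext can be lifted.

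The complete-subtree shares would be programmed by node position. For $\theta\in\mathsf{path}(v_{ID_u^*})$, $\mathcal{B}$ samples the token preimage $\mathbf{Z}_{ID_u^*,\theta}$ first and sets $\mathbf{U}_{\theta,1}$ to its image, so the challenge user's $Token$ is simulable despite the missing trapdoor at $ID_u^*$, and $UpdKG$ for $\mathbf{U}_{\theta,2}=\mathbf{U}-\mathbf{U}_{\theta,1}$ at any $t\neq t^*$ uses the $\mathbf{B}_t$-trapdoor; an on-path $UpdKG$ at $t^*$ (which only occurs for $\mathbf{x}\neq\mathbf{x}^*$ with $ID_u^*$ non-revoked) is answered from an ALS function-key query on $\mathbf{x}$, whose admissibility $\langle\mathbf{x},\mathbf{y}^*_0\rangle=\langle\mathbf{x},\mathbf{y}^*_1\rangle$ matches the sIND-CPA constraint on $Table_1\setminus\{\mathbf{x}^*\}$. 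For $\theta\notin\mathsf{path}(v_{ID_u^*})$, $\mathcal{B}$ instead samples the $t^*$ update preimage first and fixes $\mathbf{U}_{\theta,2}$ to its image, so $UpdKG$ at $t^*$ is answerable, while tokens for such nodes belong to other users and use the $\mathbf{B}_{ID_u}$-trapdoor. In the \textbf{Challenge}, $\mathcal{B}$ forwards $(\mathbf{y}^*_0,\mathbf{y}^*_1)$, receives the ALS ciphertext $(\mathbf{c}^{\mathrm{ALS}}_0,\mathbf{c}^{\mathrm{ALS}}_1)$ under $\mathbf{s}_0$, lifts $\mathbf{c}^{\mathrm{ALS}}_0$ to $\mathbf{c}^*_0=[\mathbf{c}^{\mathrm{ALS}}_0;\mathbf{S}_1^\top\mathbf{c}^{\mathrm{ALS}}_0;\mathbf{S}_2^\top\mathbf{c}^{\mathrm{ALS}}_0]$ with errors re-shaped by $\mathsf{NoiseGen}$ so the noise-distribution lemma matches the genuine $[\mathbf{I}_m\mid\mathbf{R}_1\mid\mathbf{R}_2]^\top\mathbf{e}_0$ term, sets $\mathbf{c}^*_2=\mathbf{c}^{\mathrm{ALS}}_1+\mathbf{U}^\top\mathbf{s}_1+\mathbf{e}_3$ where the $\tau$-wide $\mathbf{e}_3$ floods any gap between the ALS noise and the target, and produces $\mathbf{c}^*_1$ and the keyword components $\mathbf{c}^*_3,\mathbf{c}^*_4,\mathbf{c}^*_5$ honestly from $\mathbf{s}_1$ and fresh secrets. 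Relaying $\mathcal{A}_1$'s guess yields $\epsilon_1(\lambda)\ge\epsilon'_1(\lambda)$.

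I expect the main obstacle to be proving that this programming is globally consistent and hiding at the same time: one must verify that the only transformation-key material the rules ever permit for $(ID_u^*,\cdot,t^*)$ corresponds to functions $\mathbf{x}$ with $\langle\mathbf{x},\mathbf{y}^*_0\rangle=\langle\mathbf{x},\mathbf{y}^*_1\rangle$, and that a transformation key assembled for any other user $ID_u\neq ID_u^*$ is bound to $\mathbf{B}_{ID_u}$ and so cannot be applied to the $ID_u^*$-tied challenge $\mathbf{c}^*_0$. Establishing this requires the $\mathsf{KUNodes}$ disjointness together with the per-node independence of the shares and the identity-binding of $\mathbf{c}_0$, and it is the place where a naive argument would leak. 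The remaining work, showing that every simulated preimage and the lifted challenge ciphertext are statistically close to the real distributions, is routine but depends on the spectral-norm bounds on $\mathbf{S}_1,\mathbf{S}_2$ and on applying the noise-distribution lemma blockwise to $\mathbf{c}^*_0$.
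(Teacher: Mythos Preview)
Your proposal follows essentially the same reduction as the paper: the same puncturing of $\mathbf{B}_1,\mathbf{B}_2$ at $(ID_u^*,t^*)$ via the right-lattice trick, the same on-path/off-path programming of the shares $(\mathbf{U}_{\theta,1},\mathbf{U}_{\theta,2})$, the same use of ALS function-key queries for on-path $UpdKG$ at $t^*$ with $\mathbf{x}\neq\mathbf{x}^*$, and the same embedding of the ALS challenge ciphertext into the $\mathbf{s}_0$-side of $(\mathbf{c}_0,\mathbf{c}_2)$ while generating $\mathbf{c}_1$ and the keyword components honestly.

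The one technical divergence is that the paper does \emph{not} take $\mathbf{U}=\mathbf{U}_{\mathrm{ALS}}$ as you do; it inserts an auxiliary Gaussian $\mathbf{Z}^*$ and sets $\mathbf{U}=\mathbf{U}_{\mathrm{ALS}}+\mathbf{A}\mathbf{R}_2^*\mathbf{Z}^*$, with $\mathsf{NoiseGen}$ applied inside $\mathbf{c}_2$ (not $\mathbf{c}_0$). This offset is exactly what makes the on-path $UpdKG$ simulation at $t^*$ go through: $sk_{\mathbf{x}}^{\mathrm{ALS}}$ is only an $m$-dimensional preimage under $\mathbf{A}$, and without the $\mathbf{Z}^*$ cushion you have no obvious way to produce a correctly-distributed $\mathbf{z}^1$-block of a short $[\mathbf{z}^0;\mathbf{z}^1]$ satisfying $[\mathbf{A}\mid\mathbf{A}\mathbf{R}_2^*]\cdot[\mathbf{z}^0;\mathbf{z}^1]=\mathbf{U}_{\theta,2}\cdot\mathbf{x}$. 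Your sketch glosses over this step (``answered from an ALS function-key query on $\mathbf{x}$''), so you would need either to adopt the paper's $\mathbf{Z}^*$-offset or supply an alternative argument for the $\mathbf{z}^1$ component; apart from this detail the two arguments coincide.
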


\begin{proof}
    $\mathbf{Game\ 0:}$ The sIND-CPA game is defined in its original form in Section \ref{SM}. 
    
    $\mathbf{Game\ 1:}$ This game is the same as the above one, except for the following differences. 
    Instead of running $\mathsf{TrapGen}$, we randomly select $\mathbf{A} \in \mathbb{Z}_q^{n\times m}$. 
    In addition, $\mathbf{G}$ is obtained by executing $\mathsf{TrapGen}(n,m,q) \to (\mathbf{G}, \mathbf{T}_{\mathbf{G}})$.
    We then sample $\mathbf{R}_1^*, \mathbf{R}_2^* \in {1,-1}^{m \times m}$ at random and define 
    $\mathbf{B}_1 = \mathbf{A}\mathbf{R}_1^*- H_1(ID_u^*)\mathbf{G}$, $\mathbf{B}_2 = \mathbf{A}\mathbf{R}_2^*- H_1(t^*)\mathbf{G}$. 

    $\mathbf{Game\ 2:}$ In this variant of the game, the procedure for generating $\mathbf{U}_{\theta,1}$ and $\mathbf{U}_{\theta,2}$ is altered. 
    The following two cases are considered. 

    For $\theta \in \mathsf{path}(ID_u^*)$, 
    we sample $\mathbf{Z}_{ID_u^*,\theta} \in \mathcal{D}_{\mathbb{Z}^{m\times l},\rho}$, and
    compute $\mathbf{U}_{\theta,1} = \mathbf{A}_{ID_u^*} \mathbf{Z}_{ID_u^*,\theta}$, 
    $\mathbf{U}_{\theta,2} = \mathbf{U}-\mathbf{U}_{\theta,1}$. 

    For $\theta \notin \mathsf{path}(ID_u^*)$, 
    we sample $\mathbf{Z}_{t^*,\theta} \in \mathcal{D}_{\mathbb{Z}^{m\times l},\rho}$, and 
    compute $\mathbf{U}_{\theta,2} = \mathbf{A}_{t^*} \mathbf{Z}_{t^*,\theta}$, 
    $\mathbf{U}_{\theta,1} = \mathbf{U}-\mathbf{U}_{\theta,2}$.

    $\mathbf{Game\ 3:}$ In this game, we establish a reduction from the sIND-CPA security of our construction to the IND-CPA security of the ALS-IPFE scheme.
    Let $\mathcal{S}$ be the challenger in the IND-CPA game associated with ALS-IPFE.

    $\mathbf{Init.}$ $\mathcal{A}_1$ chooses a challenged identity $ID_{u}^{*}$, a vector $\mathbf{x}^{*}$ and a timestamp $t^{*}$, and submits $(ID_u^*,\mathbf{x}^*,t^*)$ to $\mathcal{B}$. 

    $\mathbf{Setup.}$ $\mathcal{B}$ obtains $(\mathbf{A}_{ALS},\mathbf{U}_{ALS})$ 
    from the challenger $\mathcal{S}$, 
    runs $\mathsf{TrapGen}(n,m,q) \to (\mathbf{G},\mathbf{T}_{\mathbf{G}})$, 
    randomly selects $\mathbf{C}_1, \ldots \mathbf{C}_k \in \mathbb{Z}_1^{n\times m}$, 
    $\mathbf{R}_1^*, \mathbf{R}_2^* \in \{-1,1\}^{m\times m}$, 
    $\mathbf{V} \in \mathbb{Z}_q^{n\times h_1}$,
    $\mathbf{v} \in \mathbb{Z}_q^{n\times 1}$, 
    $\mathbf{Z}^* \gets \mathcal{D}_{\mathbb{Z}^{m\times l},\rho}$
    sets $\mathbf{A} = \mathbf{A}_{ALS}$, $\mathbf{U} = \mathbf{U}_{ALS} + \mathbf{A}\mathbf{R}_2^* \mathbf{Z}^*$, 
    $\mathbf{B}_1 = \mathbf{A} \mathbf{R}_1^* - H_1(ID_u^*) \mathbf{G}$, 
    $\mathbf{B}_2 = \mathbf{A} \mathbf{R}_2^* - H_1(t^*) \mathbf{G}$, 
    and sends $(\mathbf{A},\mathbf{B}_1,\mathbf{B}_2,\mathbf{G},
    \mathbf{C}_1,\ldots,\mathbf{C}_k,\mathbf{U},\mathbf{V},\mathbf{v})$ 
    to $\mathcal{A}$. 

    $\mathbf{Phase-1.}$ 

    SerKG Query.
    $\mathcal{A}_1$ submits $ID_s$ to $\mathcal{B}$.
    Then, $\mathcal{B}$ runs 
    $\mathsf{SampleRight}(\mathbf{A}, \mathbf{G}',\mathbf{R}_1^*,\mathbf{T}_{\mathbf{G}},\mathbf{v},\rho)
    \to \mathbf{z}_{ID_s}$, 
    $\mathsf{SampleRight}(\mathbf{A}, \mathbf{G}',\mathbf{R}_1^*,\mathbf{T}_{\mathbf{G}},\mathbf{V},\rho)
    \to \mathbf{Z}_{ID_s}$
    where $\mathbf{G}' = (H_1(ID_s)-H_1(ID_u^*))\cdot \mathbf{G}$,
    and sends $(\mathbf{z}_{ID_s},\mathbf{Z}_{ID_s})$ to $\mathcal{A}$. 
    Notably, we know that $H_1(ID_s)-H_1(ID_u^*)$ is full rank and 
    therefore $\mathbf{T}_{\mathbf{G}}$ is is also
    a trapdoor for the lattice $\Lambda_q^{\bot}(\mathbf{G}')$. 

    UserKG Query. 
    $\mathcal{A}_1$ submits $ID_u$ to $\mathcal{B}$.
    Then, 
    $\mathcal{B}$ runs $\mathsf{SampleBasisRight}(\mathbf{A},\mathbf{G}',\mathbf{R}_1^*,\mathbf{T}_{\mathbf{G}},\rho)
    \to \mathbf{T}_{\widetilde{ID_u}}$ 
    where $\mathbf{G}' = (H_1(\widetilde{ID_u})-H_1(ID_u^*))\cdot \mathbf{G}$
    , and 
    sends $\mathbf{T}_{\widetilde{ID_u}}$
    to $\mathcal{A}_1$. 

    Token Query. 
    $\mathcal{B}$ sets $\mathbf{U}_{\theta,1}$ and $\mathbf{U}_{\theta,2}$ 
    as $\mathbf{Game\ 2}$. 
    $\mathcal{A}_1$ submits $ID_u$ to $\mathcal{B}$. 
    If $ID_u=ID_u^*$, $\mathcal{B}$ returns $(\theta,\mathbf{Z}_{ID_u^*,\theta})_{\theta\in \mathsf{path}(ID_u^*)}$ to $\mathcal{A}_1$; 
    otherwise, for $\theta\in \mathsf{path}(ID_u)$, $\mathcal{B}$ runs 
    $\mathsf{SampleRight}(\mathbf{A},\mathbf{G}',\mathbf{R}^*_{1},\mathbf{T}_{\mathbf{G}},\mathbf{U}_{\theta,1},\rho)
    \to \mathbf{Z}_{ID_u,\theta}$ where 
    $\mathbf{G}' = (H_1(\widetilde{ID_u})-H_1(ID_u^*))\cdot \mathbf{G}$, and 
    returns $(\theta,\mathbf{Z}_{ID_u,\theta})_{\theta\in \mathsf{path}(ID_u)}$ to $\mathcal{A}_1$. 

    UpdKG Query. 
    $\mathcal{A}_1$ submits $(\mathbf{x},t)$ to $\mathcal{B}$.  
    
    If $t = t^*$ and $\mathbf{x}= \mathbf{x}^*$,  
    we have $\mathsf{path}(ID_u^*) \bigcap \mathsf{KUNodes}(BT,RL_{\mathbf{x}},t^*)
    = \emptyset$. 
    Then, $\mathcal{B}$ returns 
    $(\theta,\mathbf{Z}_{t^*,\theta} \cdot \mathbf{x}^*)_{\theta\in \mathsf{KUNodes}(BT,RL_{\mathbf{x}},t^*)}$ 
    to $\mathcal{A}_1$. 
    
    If $t=t^*$ and $\mathbf{x} \neq \mathbf{x}^*$,  $\mathcal{B}$ forwards $\mathbf{x}$ to $\mathcal{S}$, and 
    obtains $sk_\mathbf{x}^{ALS}$. 
    For each $\theta \in \mathsf{KUNodes}(BT,RL_{\mathbf{x}},t) 
    \bigcap \mathsf{path}(ID_u^*)$,  
    $\mathcal{B}$ computes and returns 
    $(\theta,\frac{sk^{ALS}_{\mathbf{x}}-\mathbf{z}^0_{ID^*_u,\theta}\cdot \mathbf{x} - \mathbf{R}_1^*\cdot\mathbf{z}^1_{ID^*_u,\theta} \cdot \mathbf{x} }{\mathbf{Z}^*\cdot \mathbf{x}})$ 
    to $\mathcal{A}_1$ where $\mathbf{Z}_{ID^*_u,\theta} \cdot \mathbf{x} = 
\left[\begin{matrix}
\mathbf{z}^0_{ID^*_u,\theta,\mathbf{x}} \\
\mathbf{z}^1_{ID^*_u,\theta,\mathbf{x}}
\end{matrix}\right] \in \mathbb{Z}^{2m}$ and $\mathbf{z}^0_{ID^*_u,\theta,\mathbf{x}} \in \mathbb{Z}^m$. 
    For each $\theta \in \mathsf{KUNodes}(BT,RL_{\mathbf{x}},t)$ and $\theta \notin \mathsf{path}(ID_u^*)$, 
    $\mathcal{B}$ returns $(\theta,\mathbf{Z}_{t^*,\theta}\cdot \mathbf{x})$ 
    to $\mathcal{A}_1$. 
    
    If $t\neq t^*$, for each $\theta \in \mathsf{KUNodes}(BT,RL_{\mathbf{x}},t)$, 
    $\mathcal{B}$ runs 
    $\mathsf{SampleRight}(\mathbf{A},\mathbf{G}',\mathbf{R}_{2}^*,\mathbf{T}_{\mathbf{G}},\mathbf{U}_{\theta,2},\rho) 
    \to \mathbf{Z}_{t,\theta}$, where $\mathbf{G}' = (H_1(t)-H_1(t^*))\cdot \mathbf{G}$, 
    and sends $(\theta,\mathbf{Z}_{t,\theta} \cdot \mathbf{x})_{\theta\in \mathsf{KUNodes}(BT,RL_{\mathbf{x}},t^*)}$ 
    to $\mathcal{A}_1$. 

    Let $Table_{1}$ be an initially empty table. If $\mathbf{x}\notin Table_{1}$ 
    and $t=t^*$, 
    $\mathcal{B}$ appends $\mathbf{x}$ to $Table_{1}$.

    FunKG Query. 
    $\mathcal{A}_1$ submits $(ID_u,\mathbf{x},t)$ to $\mathcal{B}$.
    $\mathcal{B}$ runs $\mathsf{SampleBasisRight}(\mathbf{A}, \mathbf{G}',\mathbf{R}_{1}^*,\mathbf{T}_{\mathbf{G}},\rho)
    \to \mathbf{T}_{\widetilde{ID_u}}$ where $\mathbf{G}' = (H_1(\widetilde{ID_u})-H_1(ID_u^*))\cdot \mathbf{G}$, 
    $\mathsf{SampleLeft}(
    \mathbf{A}_{\widetilde{ID_u}},
    \mathbf{B}_{t}, 
    \mathbf{T}_{\widetilde{ID_u}}, \mathbf{U}, \rho) \to \mathbf{Z}_{\widetilde{ID_u},t}$, 
    and sends $fk_{\widetilde{ID_u},\mathbf{x},t} = \mathbf{Z}_{\widetilde{ID_u},t} \cdot \mathbf{x}$ to $\mathcal{A}_1$. 

    dTrapdoor Query. $\mathcal{A}_1$ submits $(ID_u,ID_s,\omega,t)$ to $\mathcal{B}$. 
    $\mathcal{B}$ runs $\mathsf{SampleBasisRight}(\mathbf{A},\mathbf{G}',\mathbf{R}_1^*,\mathbf{T}_{\mathbf{G}},\rho)
    \to \mathbf{T}_{\widetilde{ID_u}}$ 
    where $\mathbf{G}' = (H_1(\widetilde{ID_u})-H_1(ID_u^*))\cdot \mathbf{G}$. 
    Then, $\mathcal{B}$ can use $\mathbf{T}_{\widetilde{ID_u}}$
    compute $dt_{ID_u,\omega,t}$ as $\mathbf{Game\ 0}$ and 
    returns $dt_{ID_u,\omega,t}$ to $\mathcal{A}_1$. 

    Revoke Query. $\mathcal{A}_1$ submits $RL_{\mathbf{x}},ID_u,t$ to $\mathcal{B}$. 
    Then, $\mathcal{B}$ runs $Revoke(ID_u,t,RL_{\mathbf{x}},st) \to RL_{\mathbf{x}}$, and sends $RL_{\mathbf{x}}$ to $\mathcal{A}_1$.

    $\mathbf{Challenge.}$ $\mathcal{A}_1$ submits $(\mathbf{y}^*_0,\mathbf{y}^*_1)$ to $\mathcal{B}$, with the restriction that 
    $\left\langle \mathbf{x},\mathbf{y}^*_0\right\rangle = \left\langle \mathbf{x},\mathbf{y}^*_1\right\rangle$ 
    for $\forall \mathbf{x} \in Table_{1}\setminus{\{ \mathbf{x}^*\}}$. 
    $\mathcal{B}$ forwards $(\mathbf{y}^*_0,\mathbf{y}^*_1)$ to $\mathcal{S}$, obtains $(\mathbf{c}_0^{ALS} = 
    \mathbf{A}_{ALS} \cdot \mathbf{s}_0 + \mathbf{e}_0,\mathbf{c}_1^{ALS}
    = \mathbf{U}_{ALS} \cdot \mathbf{s}_0 + \left\lfloor \frac{q}{K} \right\rfloor \cdot \mathbf{y}^*_b + \mathbf{e}_1)$, 
    selects $\mathbf{s}_1 \gets \mathbb{Z}_q^n$, $\mathbf{e}_2 \gets \mathcal{D}_{\mathbb{Z}^m,\sigma}$, 
    $\mathbf{R}_3,\mathbf{R}_4 \gets \{-1,1\}^{m\times m}$, and computes ciphertexts
    $\mathbf{c}_0 =  
    (\mathbf{I}_m|\mathbf{R}_1^*| \mathbf{R}_2^*)^T \cdot \mathbf{c}_0^{ALS} 
    = (\mathbf{A}|\mathbf{A}\mathbf{R}_1^*|\mathbf{A}\mathbf{R}_2^*)^T \cdot 
    \mathbf{s}_0 +  
    (\mathbf{I}_m|\mathbf{R}_1^*|\mathbf{R}_2^*)^T \cdot \mathbf{e}_0$, 
    $\mathbf{c}_1 = \mathbf{A}^T_{\widetilde{ID_u}^*,t^*} \cdot \mathbf{s}_1 + \left[\mathbf{I}_m|\mathbf{R}_3|\mathbf{R}_4\right]^T \cdot  \mathbf{e}_2$, 
    $\mathbf{c}_2 = \mathbf{c}_1^{ALS} +  \mathbf{c}_0^{ALS} \mathbf{R}_2 \mathbf{Z}^* + \mathbf{U}^T\cdot \mathbf{s}_1 + \mathsf{NoiseGen}(\mathbf{R}_2 \mathbf{Z}^*,s) = 
    \mathbf{U}^T \cdot (\mathbf{s}_0 + \mathbf{s}_1) +\left\lfloor \frac{q}{K} \right\rfloor \cdot \mathbf{y}^*_b + \mathbf{e}_1 
    + \mathbf{R}_2 \mathbf{Z}^* \mathbf{e}_0
    + \mathsf{NoiseGen}(\mathbf{R}_2 \mathbf{Z}^*,s)$. 

    Then, $\mathcal{B}$ computes $\mathbf{c}_3,\mathbf{c}_4,\mathbf{c}_5$ 
    as $\mathbf{Game\ 0}$ and returns $CT^* = (\mathbf{c}_0,\mathbf{c}_1,\mathbf{c}_2,\mathbf{c}_3,\mathbf{c}_4,\mathbf{c}_5)$ to $\mathcal{A}_1$.

    $\mathbf{Phase-2.}$
    $\mathcal{A}_1$ may continue issuing SerKG, UserKG, Token, UpdKG, FunKG, dTrapdoor, and Revoke queries,
    with the exception that FunKG queries on any $\mathbf{x}$ must satisfy
    $\left\langle \mathbf{x},\mathbf{y}^*_0 \right\rangle \neq \left\langle \mathbf{x},\mathbf{y}^*_1 \right\rangle$, 
    Under this restriction, the challenger $\mathcal{B}$ responds in the same manner as in $\mathbf{Phase-1}$.

    $\mathbf{Guess.}$ 
    $\mathcal{A}_1$ eventually outputs a bit $b' \in {0,1}$ as its guess for $b$,
and $\mathcal{B}$ relays this value to $\mathcal{S}$ as its own guess.

Whenever $b' = b$, $\mathcal{A}_1$ succeeds, and consequently $\mathcal{B}$ violates the IND-CPA security of the ALS-IPFE scheme.
Let $\epsilon_1(\lambda)$ be the maximal advantage achievable by any PPT adversary against the IND-CPA security of ALS-IPFE.
It follows that the advantage $\epsilon'_1(\lambda)$ of any PPT adversary $\mathcal{A}_1$ in the sIND-CPA game of our scheme satisfies $\epsilon'_1(\lambda) \leq \epsilon_1(\lambda)$.

\end{proof}

\begin{lemma}\label{PL2}
    Assume there exists a PPT adversary $\mathcal{A}_2$ that achieves an advantage of $\epsilon'_2(\lambda)$ against the sIND-CPA security of our construction.
From $\mathcal{A}_2$, we can build an algorithm $\mathcal{B}$ that compromises the IND-CPA security of the ALS-IPFE scheme \cite{agrawal2015practical} with advantage $\epsilon_2(\lambda)$,
where $\epsilon_2(\lambda) \ge \epsilon'_2(\lambda)$.
\end{lemma}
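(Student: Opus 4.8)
The plan is to reuse the architecture of Lemma~\ref{PL1} but to relocate the embedding of the ALS\nobreakdash-IPFE challenge from the transformation side (the lattice $\mathbf{A}_{ID_u,t}$ and the component $\mathbf{c}_0$) to the function-key side (the lattice $\mathbf{A}_{\widetilde{ID_u},t}$ and the component $\mathbf{c}_1$). This is the natural choice for $\mathcal{A}_2$: since $\mathcal{A}_2$ never issues a UserKG query on $ID_u^*$ and never requests $FunKG$ on the exact pair $(\mathbf{x}^*,t^*)$, the one object it provably cannot obtain is the function key $fk_{ID_u^*,\mathbf{x}^*,t^*}$, not the transformation key. Accordingly, through a sequence $\mathbf{Game~0}\to\mathbf{Game~1}\to\mathbf{Game~2}\to\mathbf{Game~3}$ paralleling Lemma~\ref{PL1}, $\mathcal{B}$ punctures at $(\widetilde{ID_u}^*,t^*)$ rather than $(ID_u^*,t^*)$: it sets $\mathbf{A}=\mathbf{A}_{ALS}$, runs $\mathsf{TrapGen}$ to get $(\mathbf{G},\mathbf{T}_{\mathbf{G}})$, samples $\mathbf{R}_1^*,\mathbf{R}_2^*\in\{1,-1\}^{m\times m}$ and $\mathbf{Z}^*\gets\mathcal{D}_{\mathbb{Z}^{m\times l},\rho}$, and defines $\mathbf{B}_1=\mathbf{A}\mathbf{R}_1^*-H_1(\widetilde{ID_u}^*)\mathbf{G}$, $\mathbf{B}_2=\mathbf{A}\mathbf{R}_2^*-H_1(t^*)\mathbf{G}$ and $\mathbf{U}=\mathbf{U}_{ALS}+\mathbf{A}\mathbf{R}_2^*\mathbf{Z}^*$. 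Crucially, and unlike Lemma~\ref{PL1}, $\mathcal{B}$ must also choose the keyword matrices with structure: it sets $\mathbf{C}_i=\mathbf{A}\mathbf{F}_i'$ for low-norm $\mathbf{F}_i'$, so that $\mathbf{B}_{\omega}=\mathbf{G}+\sum_i b_i\mathbf{C}_i=\mathbf{A}\mathbf{F}_{\omega}'+\mathbf{G}$ carries a usable $\mathbf{G}$-trapdoor for every $\omega$; by the leftover hash lemma each $\mathbf{C}_i$ is statistically close to uniform, so this is indistinguishable from the real setup. The FRD property and the leading-coordinate tagging ensure $H_1(\cdot)-H_1(\widetilde{ID_u}^*)$ is full rank on every $ID_u$, $ID_s$ and every $t\neq t^*$.

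Query simulation then splits cleanly. On the transformation side only $\mathbf{A}_{t^*}$ is punctured (the blocks $\mathbf{A}_{ID_u}$ and $\mathbf{A}_t$ with $t\neq t^*$ all keep a $\mathbf{G}$-trapdoor), so $SerKG$, $Token$, and $UpdKG$ at $t\neq t^*$ are answered by $\mathsf{SampleRight}/\mathsf{SampleBasisRight}$, while for the complete-subtree values $\mathcal{B}$ uniformly sets $\mathbf{U}_{\theta,2}=\mathbf{A}_{t^*}\mathbf{Z}_{t^*,\theta}$ with $\mathbf{Z}_{t^*,\theta}\gets\mathcal{D}_{\mathbb{Z}^{2m\times l},\rho}$ and $\mathbf{U}_{\theta,1}=\mathbf{U}-\mathbf{U}_{\theta,2}$, which lets it answer $UpdKG$ even at $t=t^*$ (the forbidden combination $(\mathbf{x}^*,t^*)$ never being queried). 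A $dTrapdoor$ query on $ID_u^*$ is handled via the $\mathbf{B}_{\omega}$-trapdoor described above, so $\mathcal{B}$ never needs $\mathbf{T}_{\widetilde{ID_u}^*}$. The delicate queries are $FunKG$ on $(ID_u^*,\mathbf{x},t)$: when $t\neq t^*$ the time block $\mathbf{B}_t=\mathbf{A}\mathbf{R}_2^*+(H_1(t)-H_1(t^*))\mathbf{G}$ carries a trapdoor, and writing $\mathbf{A}\mathbf{R}_2^*=\mathbf{A}_{\widetilde{ID_u}^*}\mathbf{S}'$ with low-norm $\mathbf{S}'$, $\mathcal{B}$ obtains a short $\mathbf{Z}_{\widetilde{ID_u}^*,t}$ with $\mathbf{A}_{\widetilde{ID_u}^*,t}\mathbf{Z}_{\widetilde{ID_u}^*,t}=\mathbf{U}$ and returns $\mathbf{Z}_{\widetilde{ID_u}^*,t}\mathbf{x}$; when $t=t^*$ and $\mathbf{x}\neq\mathbf{x}^*$ both blocks $\mathbf{A}\mathbf{R}_1^*$ and $\mathbf{A}\mathbf{R}_2^*$ are punctured, so $\mathcal{B}$ forwards $\mathbf{x}$ to the ALS challenger, receives $sk_{\mathbf{x}}^{ALS}$ with $\mathbf{A}_{ALS}\,sk_{\mathbf{x}}^{ALS}=\mathbf{U}_{ALS}\mathbf{x}$, and assembles the key as $[\,sk_{\mathbf{x}}^{ALS}\,;\,\mathbf{0}\,;\,\mathbf{Z}^*\mathbf{x}\,]$, which opens correctly since $\mathbf{U}\mathbf{x}=\mathbf{A}(sk_{\mathbf{x}}^{ALS}+\mathbf{R}_2^*\mathbf{Z}^*\mathbf{x})$. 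These forwarded vectors are exactly the set $Table_2\setminus\{\mathbf{x}^*\}$, whose game restriction $\langle\mathbf{x},\mathbf{y}_0^*\rangle=\langle\mathbf{x},\mathbf{y}_1^*\rangle$ is precisely the ALS admissibility condition.

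At the $\mathbf{Challenge}$ step $\mathcal{B}$ receives $(\mathbf{c}_0^{ALS}=\mathbf{A}_{ALS}\mathbf{s}_1+\mathbf{e}_0,\ \mathbf{c}_1^{ALS}=\mathbf{U}_{ALS}\mathbf{s}_1+\lfloor q/K\rfloor\mathbf{y}_b^*+\mathbf{e}_1)$, now reading the ALS secret as the function-side randomness $\mathbf{s}_1$. It sets $\mathbf{c}_1=(\mathbf{I}_m\mid\mathbf{R}_1^*\mid\mathbf{R}_2^*)^{\top}\mathbf{c}_0^{ALS}=\mathbf{A}_{\widetilde{ID_u}^*,t^*}^{\top}\mathbf{s}_1+(\mathbf{I}_m\mid\mathbf{R}_1^*\mid\mathbf{R}_2^*)^{\top}\mathbf{e}_0$, draws its own $\mathbf{s}_0$ to build $\mathbf{c}_0$ honestly, and computes $\mathbf{c}_2=\mathbf{c}_1^{ALS}+\mathbf{c}_0^{ALS}\mathbf{R}_2^*\mathbf{Z}^*+\mathbf{U}^{\top}\mathbf{s}_0+\mathsf{NoiseGen}(\mathbf{R}_2^*\mathbf{Z}^*,s)$, which by the choice of $\mathbf{U}$ equals $\mathbf{U}^{\top}(\mathbf{s}_0+\mathbf{s}_1)+\lfloor q/K\rfloor\mathbf{y}_b^*$ plus smoothed noise. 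The keyword components $\mathbf{c}_3,\mathbf{c}_4,\mathbf{c}_5$ depend only on fresh $\mathbf{s}_2,\mathbf{s}_3$ and public matrices, so $\mathcal{B}$ produces them exactly as in the real scheme. Relaying $\mathcal{A}_2$'s guess $b'$ to $\mathcal{S}$ yields $\epsilon_2'(\lambda)\le\epsilon_2(\lambda)$ up to the statistical gaps of the game hops.

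The main obstacle is the point where the naive mirror image of Lemma~\ref{PL1} breaks down: moving the puncture to the tilde lattice deprives $\mathbf{A}_{\widetilde{ID_u}^*}$ of a trapdoor, yet the sIND-CPA game places no restriction on $dTrapdoor$ queries for $ID_u^*$, which live on that very lattice. Resolving this is the crux of the proof and is what forces the structured choice $\mathbf{C}_i=\mathbf{A}\mathbf{F}_i'$: the keyword block $\mathbf{B}_{\omega}$ then supplies a $\mathbf{G}$-trapdoor for the extended lattice $[\mathbf{A}_{\widetilde{ID_u}^*}\mid\mathbf{B}_{\omega}\mid\mathbf{B}_t]$ used by $dTrapdoor$, while leaving the $fk$-lattice $\mathbf{A}_{\widetilde{ID_u}^*,t^*}$ (which lacks a keyword block) genuinely trapdoor-free at the challenge point. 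The remaining work is distributional: one must argue, via the noise-rerandomisation lemma of \cite{katsumata2016partitioning} together with the spectral bound of Lemma~\ref{BoundGauss} and the $\{1,-1\}$-matrix norm bound, that the assembled function key $[\,sk_{\mathbf{x}}^{ALS}\,;\,\mathbf{0}\,;\,\mathbf{Z}^*\mathbf{x}\,]$ and the re-used ciphertext terms $(\mathbf{I}_m\mid\mathbf{R}_1^*\mid\mathbf{R}_2^*)^{\top}\mathbf{e}_0$ and $\mathbf{R}_2^*\mathbf{Z}^*\mathbf{e}_0$ are statistically close to the honest discrete-Gaussian distributions, with all widths staying inside the intervals fixed in the parameter setting; this part is routine but must be checked carefully.
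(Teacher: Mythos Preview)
Your proposal follows the same overall strategy as the paper: puncture $\mathbf{B}_1$ at $\widetilde{ID_u^*}$ and $\mathbf{B}_2$ at $t^*$, set $\mathbf{U}_{\theta,2}=\mathbf{A}_{t^*}\mathbf{Z}_{t^*,\theta}$ for every node, embed the ALS challenge into $\mathbf{c}_1$ via $(\mathbf{I}_m\mid\mathbf{R}_1^*\mid\mathbf{R}_2^*)^\top\mathbf{c}_0^{ALS}$, and answer $FunKG$ queries on $ID_u^*$ by forwarding $\mathbf{x}$ to the ALS key oracle. The paper, however, takes a slightly simpler route: it sets $\mathbf{U}=\mathbf{U}_{ALS}$ directly with no $\mathbf{Z}^*$-shift, so the challenge component is just $\mathbf{c}_2=\mathbf{c}_1^{ALS}+\mathbf{U}^\top\mathbf{s}_0+\mathbf{e}_3$ (no $\mathsf{NoiseGen}$ needed) and the $FunKG$ answer on $ID_u^*$ is $sk_\mathbf{x}^{ALS}$ itself rather than your padded vector $[sk_\mathbf{x}^{ALS};\mathbf{0};\mathbf{Z}^*\mathbf{x}]$. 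Your $\mathbf{Z}^*$-shift is carried over from Lemma~\ref{PL1} but is not actually needed here, since in this lemma the ALS key oracle can be called directly for every relevant $FunKG$ query, whereas in Lemma~\ref{PL1} it was the $UpdKG$ side that forced the indirection through $\mathbf{Z}^*$. On the other hand, your explicit structuring $\mathbf{C}_i=\mathbf{A}\mathbf{F}_i'$ so that $\mathbf{B}_\omega$ always carries a $\mathbf{G}$-trapdoor is the clean way to handle $dTrapdoor$ queries on $ID_u^*$; the paper's write-up invokes $h_i$ and $\mathbf{F}_i^*$ in that simulation without having defined them in this lemma's setup, so on that point your treatment is actually more careful.
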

\begin{proof}
    $\mathbf{Game\ 0:}$ The sIND-CPA game is presented in its original form in Section \ref{SM}.

    $\mathbf{Game\ 1:}$ This game is the same as the above one, except for the following differences. 
    Instead of running $\mathsf{TrapGen}$, we randomly select $\mathbf{A} \in \mathbb{Z}_q^{n\times m}$. 
    In addition, $\mathbf{G}$ is produced by executing $\mathsf{TrapGen}(n,m,q) \to (\mathbf{G}, \mathbf{T}_{\mathbf{G}})$.
    We then sample $\mathbf{R}_1^*$ and $\mathbf{R}_2^*$ uniformly from $\{1,-1\}^{m \times m}$ and define 
    $\mathbf{B}_1 = \mathbf{A}\mathbf{R}_1^*- H_1(\widetilde{ID_u^*})\mathbf{G}$, $\mathbf{B}_2 = \mathbf{A}\mathbf{R}_2^*- H_1(t^*)\mathbf{G}$.  

    $\mathbf{Game\ 2:}$ In this game, the generation method of $\mathbf{U}_{\theta,1}$ and $\mathbf{U}_{\theta,2}$ is modified.

    For each node $\theta$, 
    we sample $\mathbf{Z}_{t^*,\theta} \in \mathcal{D}_{\mathbb{Z}^{m\times l},\rho}$, and 
    compute $\mathbf{U}_{\theta,2} = \mathbf{A}_{t^*} \mathbf{Z}_{t^*,\theta}$, 
    $\mathbf{U}_{\theta,1} = \mathbf{U}-\mathbf{U}_{\theta,2}$.

    $\mathbf{Game\ 3:}$ In this game, we reduce the sIND-CPA security of our scheme to the IND-CPA security of the ALS-IPFE scheme.
    Let $\mathcal{S}$ represent the challenger in the IND-CPA game for ALS-IPFE.

    $\mathbf{Init.}$ $\mathcal{A}_2$ chooses a challenged identity $ID_{u}^{*}$, a vector $x^{x}$ and a timestamp $t^{*}$, and submits $(ID_u^*,\mathbf{x}^*,t^*)$ to $\mathcal{B}$. 
    
    $\mathbf{Setup.}$ $\mathcal{B}$ obtains $(\mathbf{A}_{ALS},\mathbf{U}_{ALS})$ 
    from the challenger $\mathcal{S}$, 
    runs $\mathsf{TrapGen}(n,m,q) \to (\mathbf{G},\mathbf{T}_{\mathbf{G}})$, 
    randomly selects $\mathbf{C}_1, \ldots \mathbf{C}_k \in \mathbb{Z}_1^{n\times m}$, 
    $\mathbf{R}_1^*, \mathbf{R}_2^* \in \{-1,1\}^{m\times m}$, 
    $\mathbf{V} \in \mathbb{Z}_q^{n\times h_1}$,
    $\mathbf{v} \in \mathbb{Z}_q^{n\times 1}$, 
    sets $\mathbf{A} = \mathbf{A}_{ALS}$, 
    $\mathbf{U} = \mathbf{U}_{ALS}$, 
    $\mathbf{B}_1 = \mathbf{A} \mathbf{R}_1^* - H_1(\widetilde{ID_u^*}) \mathbf{G}$, 
    $\mathbf{B}_2 = \mathbf{A} \mathbf{R}_2^* - H_1(t^*) \mathbf{G}$, 
    and sends $(\mathbf{A},\mathbf{B}_1,\mathbf{B}_2,\mathbf{G},\mathbf{C}_1,\ldots,\mathbf{C}_k,\mathbf{G},\mathbf{U},\mathbf{V},\mathbf{v})$ 
    to $\mathcal{A}_2$. 

    $\mathbf{Phase-1.}$ 
    
    SerKG Query.
    $\mathcal{A}_2$ submits $ID_s$ to $\mathcal{B}$.
    Then, $\mathcal{B}$ runs 
    $\mathsf{SampleRight}(\mathbf{A}, \mathbf{G}',\mathbf{R}_1^*,\mathbf{T}_{\mathbf{G}},\mathbf{v},\rho)
    \to \mathbf{z}_{ID_s}$, 
    $\mathsf{SampleRight}(\mathbf{A}, \mathbf{G}',\mathbf{R}_1^*,\mathbf{T}_{\mathbf{G}},\mathbf{V},\rho)
    \to \mathbf{Z}_{ID_s}$
    where $\mathbf{G}' = (H_1(ID_s)-H_1(\widetilde{ID_u^*}))\cdot \mathbf{G}$,
    and sends $(\mathbf{z}_{ID_s},\mathbf{Z}_{ID_s})$ to $\mathcal{A}_2$. 
    Notably, we know that $H_1(ID_s)-H_1(\widetilde{ID_u^*})$ is full rank and 
    therefore $\mathbf{T}_{\mathbf{G}}$ is is also
    a trapdoor for the lattice $\Lambda_q^{\bot}(\mathbf{G}')$.   

    UserKG Query. 
    $\mathcal{A}_2$ submits $ID_u \neq ID_u^*$ to $\mathcal{B}$. 
    Then, 
    $\mathcal{B}$ runs $\mathsf{SampleBasisRight}(\mathbf{A},\mathbf{G}',\mathbf{R}_1^*,\mathbf{T}_{\mathbf{G}},\rho)
    \to \mathbf{T}_{\widetilde{ID_u}}$ 
    where $\mathbf{G}' = (H_1(\widetilde{ID_u})-H_1(\widetilde{ID^*_u}))\cdot \mathbf{G}$
    , and 
    sends $\mathbf{T}_{\widetilde{ID_u}}$
    to $\mathcal{A}_2$. 

    Token Query. 
    $\mathcal{A}_2$ submits $ID_u \neq ID_u^*$ to $\mathcal{B}$. 
    $\mathcal{B}$ sets $\mathbf{U}_{\theta,1}$, $\mathbf{U}_{\theta,2}$ 
    as $\mathbf{Game\ 2}$, runs 
    $\mathsf{SampleRight}(\mathbf{A},\mathbf{G}',\mathbf{R}^*_{1},\mathbf{T}_{\mathbf{G}},\mathbf{U}_{\theta,1},\rho)
    \to \mathbf{Z}_{ID_u,\theta}$ where 
    $\mathbf{G}' = (H_1(\widetilde{ID_u})-H_1(\widetilde{ID^*_u}))\cdot \mathbf{G}$, and 
    returns $(\theta,\mathbf{Z}_{ID_u,\theta})_{\theta\in \mathsf{path}(ID_u)}$ to $\mathcal{A}_2$.     

    UpdKG Query. 
    $\mathcal{A}_2$ submits $(\mathbf{x},t)$ to $\mathcal{B}$. 
    If $t = t^*$, 
    $\mathcal{B}$ returns 
    $(\theta,\mathbf{Z}_{t^*,\theta} \cdot \mathbf{x})_{\theta\in \mathsf{KUNodes}(BT,RL_{\mathbf{x}},t^*)}$ 
    to $\mathcal{A}_2$; 
    otherwise, 
    for each $\theta \in \mathsf{KUNodes}(BT,RL_{\mathbf{x}},t)$, 
     $\mathcal{B}$ runs 
    $\mathsf{SampleRight}(\mathbf{A},\mathbf{G}',\mathbf{R}_{2}^*,\mathbf{T}_{\mathbf{G}},\mathbf{U}_{\theta,2},\rho) 
    \to \mathbf{Z}_{t,\theta}$, where $\mathbf{G}' = (H_1(t)-H_1(t^*))\cdot \mathbf{G}$, 
    and sends $(\theta,\mathbf{Z}_{t,\theta} \cdot \mathbf{x})_{\theta\in \mathsf{KUNodes}(BT,RL_{\mathbf{x}},t)}$ 
    to $\mathcal{A}_2$. 

    FunKG Query. 
    $\mathcal{A}_2$ submits $(ID_u,\mathbf{x},t)$ to $\mathcal{B}$. 
    If $ID_u=ID_u^*$, $\mathcal{B}$ forwards 
    $\mathbf{x}$ to $\mathcal{S}$, and
    obtains $sk_{\mathbf{x}}^{ALS}$. 
    Then, $\mathcal{B}$ returns $sk_{\mathbf{x}}^{ALS}$ 
    to $\mathcal{A}_2$. 
    If $ID_u \neq ID_u^*$, 
    $\mathcal{B}$ runs $\mathsf{SampleBasisRight}(\mathbf{A}, \mathbf{G}',\mathbf{R}_{1}^*,\mathbf{T}_{\mathbf{G}},\rho)
    \to \mathbf{T}_{\widetilde{ID_u}}$ where $\mathbf{G}' = (H_1(\widetilde{ID_u})-H_1(\widetilde{ID_u^*}))\cdot \mathbf{G}$, 
    $\mathsf{SampleLeft}(
    \mathbf{A} _{\widetilde{ID_u}},
    \mathbf{B}_{t}, 
    \mathbf{T}_{\widetilde{ID_u}}, \mathbf{U}, \rho) \to \mathbf{Z}_{\widetilde{ID_u},t}$, 
    and sends $fk_{ID_u,\mathbf{x},t} = \mathbf{Z}_{\widetilde{ID_u},t} \cdot \mathbf{x}$ to $\mathcal{A}_2$. 

    Let $Table_2$ be an initially empty table. If $(ID_u,t) = (ID_u^*,t^*)$ and $\mathbf{x}\notin Table_2$, 
    $\mathcal{B}$ appends $\mathbf{x}$ to $Table_2$. 
    
    dTrapdoor Query. $\mathcal{A}$ submits $(ID_u,ID_s,\omega,t)$ to $\mathcal{B}$. 
    Let 
    $\mathbf{A}_{\widetilde{ID_u},\omega,t}  
    = \left[\mathbf{A} | \mathbf{A}\overline{\mathbf{R}} + \overline{\mathbf{G}} \right]$, 
    where $\overline{\mathbf{R}} = \left[ \mathbf{R}_1^* | \sum_{i = 1}^{k}(b_i\mathbf{F}_i^*)
    |\mathbf{R}_2^*\right]$ and 
    $\overline{\mathbf{G}} = (H_1(\widetilde{ID_u})-H_1(\widetilde{ID_u^*}) +1 + \sum_{i = 1}^{k}(b_ih_i) + H_1(t) -H_1(t^*)) \mathbf{G}$. 
    Then, $\mathcal{B}$ runs $\mathsf{SampleRight}(\mathbf{A},\overline{\mathbf{G}},\overline{\mathbf{R}},
    \mathbf{T}_{\mathbf{G}},\mathbf{v},\rho) \to kt_{ID_u,\omega,t}$
    and 
    computes $dt_{ID_u,\omega,t}$ as $\mathbf{Game\ 0}$.

    Revoke Query. $\mathcal{A}_2$ submits $RL_{\mathbf{x}},ID_u,t$ to $\mathcal{B}$. 
    Then, $\mathcal{B}$ runs $Revoke(ID_u,t,RL_{\mathbf{x}},st) \to RL_{\mathbf{x}}$, and sends $RL_{\mathbf{x}}$ to $\mathcal{A}_2$.

    $\mathbf{Challenge.}$ $\mathcal{A}_2$ submits $(\mathbf{y}^*_0,\mathbf{y}^*_1)$ to $\mathcal{B}$, with the restriction that 
    $\left\langle \mathbf{x},\mathbf{y}^*_0\right\rangle = \left\langle \mathbf{x},\mathbf{y}^*_1\right\rangle$ 
    for $\forall \mathbf{x} \in Table_{2}\setminus{\{ \mathbf{x}^*\}}$. 
    $\mathcal{B}$ forwards $(\mathbf{y}^*_0,\mathbf{y}^*_1)$ to $\mathcal{S}$, and obtains $\mathbf{c}_0^{ALS} = 
    \mathbf{A}_{ALS} \cdot \mathbf{s}_1 + \mathbf{e}_1,\mathbf{c}_1^{ALS}
    = \mathbf{U}_{ALS} \cdot \mathbf{s}_1 + \left\lfloor \frac{q}{K} \right\rfloor \cdot \mathbf{y}^*_b + \mathbf{e}_2.$
    $\mathcal{B}$ randomly selects $\mathbf{s}_0 \gets \mathbb{Z}_q^n$, $\mathbf{e}_0 \gets \mathcal{D}_{\mathbb{Z}^m,\sigma}$, 
    $\mathbf{R}_1,\mathbf{R}_2 \gets \{-1,1\}^{n\times n}$, $\mathbf{e}_3 \gets \mathcal{D}_{\mathbb{Z}^l,\tau}$, 
    and computes  
    $\mathbf{c}_0 = \mathbf{A}_{ID_u^*,t^*} \cdot \mathbf{s}_0 + \left[\mathbf{I}_m|\mathbf{R}_1|\mathbf{R}_2\right]^T \cdot  \mathbf{e}_0, \mathbf{c}_1 = (\mathbf{I}_m|\mathbf{R}_1^*| \mathbf{R}_2^*)^T \cdot \mathbf{c}_0^{ALS} 
    =(\mathbf{A}|\mathbf{A}\mathbf{R}_1^*|\mathbf{A}\mathbf{R}_2^*)^T \cdot 
    \mathbf{s}_1 +  
    (\mathbf{I}_m|\mathbf{R}_1^*|\mathbf{R}_2^*) \cdot \mathbf{e}_1, \mathbf{c}_2 = \mathbf{c}_1^{ALS} + \mathbf{U}^T\cdot \mathbf{s}_1 + \mathbf{e}_3 = 
    \mathbf{U}^T \cdot (\mathbf{s}_0 + \mathbf{s}_1) +\left\lfloor \frac{q}{K} \right\rfloor \cdot \mathbf{y}_b^* + \mathbf{e}_2 +\mathbf{e}_3$. 
    Then, $\mathcal{B}$ computes $\mathbf{c}_3,\mathbf{c}_4,\mathbf{c}_5$ 
    as $\mathbf{Game\ 0}$ and returns $CT^* =(\mathbf{c}_0,\mathbf{c}_1,\mathbf{c}_2,\mathbf{c}_3,\mathbf{c}_4,\mathbf{c}_5)$ to $\mathcal{A}_2$. 
    
    $\mathbf{Phase-2.}$
    $\mathcal{A}_2$ can continue to make SerKG, UserKG, Token, UpdKG, FunKG and dTrapdoor and Revoke queries, 
    expect the FunKG queries for $\mathbf{x}$ satisfying 
    with the restriction that 
    $\left\langle \mathbf{x},\mathbf{y}^*_0 \right\rangle \neq \left\langle \mathbf{x},\mathbf{y}^*_1 \right\rangle$, 
    and $\mathcal{B}$ answers as $\mathbf{Phase-1}$. 

    $\mathbf{Guess.}$ 
    $\mathcal{A}_2$ outputs a bit $b' \in {0,1}$ as its guess of $b$, and $\mathcal{B}$ forwards this value to $\mathcal{S}$.

Whenever $b' = b$, $\mathcal{A}_2$ succeeds, and consequently $\mathcal{B}$ breaches the IND-CPA security of the ALS-IPFE scheme.
Let $\epsilon_2(\lambda)$ denote the maximum advantage attainable by any PPT adversary in the IND-CPA game of ALS-IPFE.
Hence, the advantage $\epsilon'_2(\lambda)$ of any PPT adversary $\mathcal{A}_2$ in attacking the sIND-CPA security of our construction satisfies $\epsilon'_2(\lambda) \leq \epsilon_2(\lambda)$. 
\end{proof}

Finally, because $\mathcal{B}$ randomly guesses the type of 
the adversary, we have 
\begin{align*}
Adv_{\mathcal{A}}^{sIND-CPA} 
= \frac{1}{2}Adv_{\mathcal{A}_1}^{sIND-CPA} + 
\frac{1}{2}Adv_{\mathcal{A}_2}^{sIND-CPA}\\
= \frac{1}{2}(\epsilon'_1(\lambda) 
+ \epsilon'_2(\lambda)) 
\leq \frac{1}{2}(\epsilon_1(\lambda) 
+ \epsilon_2(\lambda)) 
\end{align*}
by Lemma \ref{PL1} and Lemma \ref{PL2}. 

Furthermore, if a PPT adversary $\mathcal{A}$ attains advantage $\epsilon'(\lambda)$ against the sIND-CPA security of our scheme, then we can construct an algorithm $\mathcal{B}$ that breaks the IND-CPA security of ALS-IPFE with advantage $\epsilon(\lambda)$, where $\epsilon(\lambda) \ge \epsilon'(\lambda)$, $\epsilon(\lambda)=\epsilon_1(\lambda)+\epsilon_2(\lambda)$, and $\epsilon'(\lambda)=\epsilon'_1(\lambda)+\epsilon'_2(\lambda)$.

Before proving the KC-sIND-CPA security of our scheme, we first introduce 
a hash family \cite{agrawal2010efficient}. 
\begin{definition}
    Let $\mathcal{H} = { H : X \to Y }$ be a family of hash functions, where $0 \in Y$.
For any $(Q+1)$-tuple $\omega = (\omega_0, \omega_1, \ldots, \omega_Q) \in X^{Q+1}$, define the non-abort probability 
$p(\omega) = \Pr\big[ H(\omega_0)=0 \wedge H(\omega_1)\neq 0 \wedge \dots \wedge H(\omega_Q)\neq 0 \big]$. 
We say that $\mathcal{H}$ is $(Q, p_{\min}, p_{\max})$-abort-resistant if for every tuple $\omega$ satisfying
$\omega_0 \notin {\omega_1, \ldots, \omega_Q}$, the probability $p(\omega)$ lies within the range 
$p(\omega) \in [p_{\min}, p_{\max}]$. 
\end{definition}

In our security analysis, we employ the following abort-resistant hash family, as used in \cite{bellare2009simulation,agrawal2010efficient}. 

For a prime $q$, the hash family $\mathcal{H}: 
\{H_{c}: (\mathbb{Z}_q^k)^* \to \mathbb{Z}_q \}$ is defined 
as $H_{c}(\omega) = 1+ \sum_{i = 1}^{k} \omega_i c_i \in \mathbb{Z}_q$, 
where $\omega =(\omega_1,\ldots \omega_k) \in \{1,-1\}^k$ 
and $c = (c_1,\ldots,c_k) \in \mathbb{Z}_q^k$. 

The hash family $\mathcal{H}$ introduced above satisfies the following properties \cite{agrawal2010efficient}.
\begin{lemma}
    Let $q$ be a prime and $0 < Q < q$. Then, the hash family $\mathcal{H}$ is $(Q, \frac{1}{q}(1-\frac{Q}{q}), \frac{1}{q})$-abort resistant.
\end{lemma}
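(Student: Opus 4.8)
The plan is to compute, for a fixed admissible tuple $\omega = (\omega_0, \omega_1, \ldots, \omega_Q)$ with $\omega_0 \notin \{\omega_1, \ldots, \omega_Q\}$, the non-abort probability $p(\omega)$ directly, where the probability is taken over a uniform choice of the key $c = (c_1, \ldots, c_k) \gets \mathbb{Z}_q^k$. Writing $A_i$ for the event $H_c(\omega_i) = 0$, equivalently $\langle \omega_i, c \rangle = -1$ in $\mathbb{Z}_q$, I have $p(\omega) = \Pr[A_0 \wedge \overline{A_1} \wedge \cdots \wedge \overline{A_Q}]$, and I would establish the two target bounds $p_{\max} = \frac{1}{q}$ and $p_{\min} = \frac{1}{q}(1 - \frac{Q}{q})$ separately.

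For the upper bound I would simply note $p(\omega) \le \Pr[A_0]$. Since $\omega_0 \in \{1,-1\}^k$ is a nonzero vector and $q$ is prime, the linear form $c \mapsto \langle \omega_0, c \rangle$ is surjective onto $\mathbb{Z}_q$ with every fiber of size $q^{k-1}$; hence $H_c(\omega_0) = 1 + \langle \omega_0, c\rangle$ is uniform over $\mathbb{Z}_q$ and $\Pr[A_0] = 1/q$, yielding $p(\omega) \le 1/q = p_{\max}$. For the lower bound I would apply a union bound to the complementary event, $p(\omega) \ge \Pr[A_0] - \sum_{i=1}^{Q} \Pr[A_0 \wedge A_i]$, and then show $\Pr[A_0 \wedge A_i] \le 1/q^2$ for each $i$. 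This is a counting statement about solutions $c$ of the system $\langle \omega_0, c\rangle = -1$, $\langle \omega_i, c\rangle = -1$, governed by the rank of the $2 \times k$ matrix with rows $\omega_0, \omega_i$: when the rows are $\mathbb{Z}_q$-linearly independent the system has exactly $q^{k-2}$ solutions, giving probability exactly $1/q^2$. Substituting into the union bound gives $p(\omega) \ge \frac{1}{q} - \frac{Q}{q^2} = \frac{1}{q}(1 - \frac{Q}{q}) = p_{\min}$.

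The main obstacle, and the only step demanding genuine care, is the linear-dependence case in the pairwise bound. Two distinct $\pm 1$ vectors $\omega_0 \neq \omega_i$ can be $\mathbb{Z}_q$-linearly dependent only if $\omega_i = -\omega_0$: coordinatewise the common ratio $\omega_{i,j}/\omega_{0,j}$ must be a fixed scalar in $\{1,-1\}$, and the value $+1$ would force $\omega_i = \omega_0$. In this degenerate case the system becomes $\langle \omega_0, c\rangle = -1$ together with $-\langle \omega_0, c\rangle = -1$, which is inconsistent whenever $q > 2$ (as $1 \neq -1$ in $\mathbb{Z}_q$), so it has no solutions and $\Pr[A_0 \wedge A_i] = 0 \le 1/q^2$. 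Thus the estimate $\Pr[A_0 \wedge A_i] \le 1/q^2$ holds uniformly over all $i$, and the lower-bound argument goes through for every prime $q > 2$, which is the regime relevant to our lattice parameters. Combining the two bounds shows $p(\omega) \in [p_{\min}, p_{\max}]$ for every admissible $\omega$, which is exactly the claimed $(Q, \frac{1}{q}(1-\frac{Q}{q}), \frac{1}{q})$-abort resistance.
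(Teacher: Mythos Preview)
Your argument is correct. The paper itself does not supply a proof of this lemma; it merely states the result and attributes it to \cite{agrawal2010efficient}. Your write-up is precisely the standard argument from that source: compute $\Pr[A_0]=1/q$ for the upper bound, then apply inclusion--exclusion/union bound together with the pairwise estimate $\Pr[A_0\wedge A_i]\le 1/q^2$ for the lower bound, handling the degenerate linearly-dependent case by showing the resulting linear system is inconsistent. Your specialization of the dependence analysis to the domain $\{1,-1\}^k$ (forcing $\omega_i=-\omega_0$ and hence inconsistency for odd primes) is a clean adaptation of the general $(\mathbb{Z}_q^k)^*$ argument in \cite{agrawal2010efficient}, where one instead uses that $\omega_i=\lambda\omega_0$ with $\lambda\neq 1$ already makes the two equations $\langle\omega_0,c\rangle=-1$, $\langle\omega_i,c\rangle=-1$ incompatible; both routes give $\Pr[A_0\wedge A_i]=0$ in that case, so nothing is lost.
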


\begin{theorem} 
    Our scheme is $(t',\epsilon'(\lambda))$-KC-sIND-CPA security against $\mathcal{A}_s$ if the $(t,\epsilon(\lambda))$-LWE assumption holds, 
    where $t= O(t')$ and $\epsilon(\lambda) \geq \frac{\epsilon'(\lambda)}{8q}$. 
\end{theorem}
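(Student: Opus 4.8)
The plan is to prove keyword indistinguishability against the malicious server by a game sequence ending in a reduction to $\mathrm{LWE}_{n,q,\alpha}$, treating the keyword $\omega$ as an ABB-style identity following \cite{agrawal2010efficient}. Of the six challenge blocks, only $\mathbf{c}_3=\mathbf{A}_{\widetilde{ID_u},\omega,t}^{T}\mathbf{s}_2+(\mathbf{I}_m|\mathbf{R}_5|\mathbf{F}_\omega|\mathbf{R}_6)^{T}\mathbf{e}_4$ depends on the keyword (via $\mathbf{B}_\omega=\mathbf{G}+\sum_i b_i\mathbf{C}_i$ and $\mathbf{F}_\omega=\sum_i b_i\mathbf{F}_i$); the block $\mathbf{c}_5=\mathbf{v}^{T}(\mathbf{s}_2+\mathbf{s}_3)+\mathbf{e}_6$ shares the secret $\mathbf{s}_2$ with $\mathbf{c}_3$, while $\mathbf{c}_0,\mathbf{c}_1,\mathbf{c}_2,\mathbf{c}_4$ are keyword-independent (the message $\mathbf{y}$ is the same in both worlds) and are produced honestly by $\mathcal{B}$. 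It therefore suffices to show that $(\mathbf{c}_3,\mathbf{c}_5)$ for $\omega_0^{*}$ and for $\omega_1^{*}$ are each computationally close to a common distribution with uniform $\mathbf{c}_3$, and to chain the two through that midpoint.

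In $\mathbf{Game\ 1}$ I replace key generation as in Lemma~\ref{PL2}: sample $\mathbf{A}$ uniformly (later the LWE matrix), obtain $(\mathbf{G},\mathbf{T}_{\mathbf{G}})$ from $\mathsf{TrapGen}$, and set $\mathbf{B}_1=\mathbf{A}\mathbf{R}_1^{*}-H_1(\widetilde{ID_u^{*}})\mathbf{G}$, $\mathbf{B}_2=\mathbf{A}\mathbf{R}_2^{*}-H_1(t^{*})\mathbf{G}$, together with the abort-resistant embedding $\mathbf{C}_i=\mathbf{A}\mathbf{F}_i^{*}+c_i\mathbf{G}$ for uniform $c=(c_1,\dots,c_k)$, so that $\mathbf{B}_\omega=\mathbf{A}\mathbf{R}_\omega+H_c(\omega)\mathbf{G}$ with $\mathbf{R}_\omega=\sum_i b_i\mathbf{F}_i^{*}$ and $H_c(\omega)=1+\sum_i b_i c_i$ the cited hash. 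Because $H_1$ is FRD, for every $ID_u\neq ID_u^{*}$, every $t\neq t^{*}$, and every keyword with $H_c(\omega)\neq0$, the relevant concatenated matrix contains an invertible multiple of $\mathbf{G}$; hence all SerKG, UserKG, Token, UpdKG, FunKG, Revoke and dTrapdoor queries the game allows $\mathcal{A}_s$ can be answered by $\mathsf{SampleRight}$/$\mathsf{SampleBasisRight}$, and the sampling lemmas of Section~\ref{section:Pre} make the answers statistically close to $\mathbf{Game\ 0}$. The restrictions that UserKG and Token are forbidden on $ID_u^{*}$ and FunKG on $(ID_u^{*},t^{*})$ are exactly what guarantees $\mathcal{B}$ never needs the unavailable left trapdoor $\mathbf{T}_{\widetilde{ID_u^{*}}}$.

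$\mathbf{Game\ 2}$ adds the abort: $\mathcal{B}$ halts unless $H_c(\omega_\beta^{*})=0$ for the keyword $\omega_\beta^{*}$ being embedded, while $H_c(\omega)\neq0$ for every keyword in $Table_\omega$, i.e.\ queried via dTrapdoor at $(ID_u^{*},t^{*})$; these are disjoint from $\{\omega_0^{*},\omega_1^{*}\}$ by the challenge restriction, so this is precisely the abort-resistant setting and the non-abort probability lies in $[p_{\min},p_{\max}]$ with $p_{\min}=\tfrac1q(1-\tfrac{Q}{q})$, costing a factor $\Theta(1/q)$ after the usual artificial-abort smoothing; since the keywords are chosen adaptively at $\mathbf{Challenge}$, it is essential that the hash family handles an adaptively chosen target. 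In $\mathbf{Game\ 3}$ I draw the challenge from $\mathrm{LWE}$ on the augmented matrix $[\mathbf{A}\,|\,\mathbf{v}]$, giving $\mathbf{p}=\mathbf{A}^{T}\mathbf{s}_2+\mathbf{e}_4$ and $p'=\mathbf{v}^{T}\mathbf{s}_2+e$. Under non-abort, $\mathbf{B}_{\widetilde{ID_u^{*}}}=\mathbf{A}\mathbf{R}_1^{*}$, $\mathbf{B}_{\omega_\beta^{*}}=\mathbf{A}\mathbf{R}_{\omega_\beta^{*}}$, $\mathbf{B}_{t^{*}}=\mathbf{A}\mathbf{R}_2^{*}$ carry no gadget, so I fix the ephemeral matrices to the simulation matrices ($\mathbf{R}_5=\mathbf{R}_1^{*}$, $\mathbf{R}_6=\mathbf{R}_2^{*}$, $\mathbf{F}_i=\mathbf{F}_i^{*}$) and obtain $\mathbf{c}_3=[\mathbf{I}_m|\mathbf{R}_1^{*}|\mathbf{R}_{\omega_\beta^{*}}|\mathbf{R}_2^{*}]^{T}\mathbf{p}$, setting the $\mathbf{s}_2$-part of $\mathbf{c}_5$ from $p'$ and keeping its $\mathbf{s}_3$-part honest. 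By the leftover-hash lemma the reuse of the $\{1,-1\}$ matrices is statistically indistinguishable from honest ephemeral randomness, and the noise-distribution lemma reconciles the Gaussian widths under the setting $\sigma=2C\alpha q(\sqrt n+\sqrt m+\sqrt l)$. When the oracle is $\mathcal{O}_s$ this reproduces the faithful encryption of $\omega_\beta^{*}$; when it is $\mathcal{O}_s'$, $\mathbf{p}$ is uniform, so $\mathbf{c}_3$ and the $\mathbf{s}_2$-contribution to $\mathbf{c}_5$ are jointly uniform and independent of $\beta$. Running this for $\beta=0$ and $\beta=1$ and chaining through the uniform midpoint bounds $\mathcal{A}_s$'s advantage; collecting the $1/q$ abort loss with the constants from the two-step hybrid, the artificial abort, and the statistical gaps yields $\epsilon(\lambda)\ge\epsilon'(\lambda)/(8q)$.

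The step I expect to be the main obstacle is the joint, faithful simulation of $(\mathbf{c}_3,\mathbf{c}_5)$ against a \emph{server} adversary: since $\mathcal{A}_s$ may obtain $sk_{ID_s^{*}}=(\mathbf{z}_{ID_s^{*}},\mathbf{Z}_{ID_s^{*}})$, it can strip the $\mathbf{s}_3$ term and read $\mathbf{v}^{T}\mathbf{s}_2$ off $\mathbf{c}_5-\mathbf{z}_{ID_s^{*}}^{T}\mathbf{c}_4$, so the $\mathbf{s}_2$-parts of $\mathbf{c}_3$ and $\mathbf{c}_5$ cannot be randomized separately; this forces the $[\mathbf{A}\,|\,\mathbf{v}]$ augmentation and the coupled use of $\mathbf{p}$ and $p'$. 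Making the ephemeral-randomness reuse that collapses $\mathbf{c}_3$ to $[\mathbf{I}_m|\mathbf{R}_1^{*}|\mathbf{R}_{\omega_\beta^{*}}|\mathbf{R}_2^{*}]^{T}\mathbf{p}$ statistically faithful through the leftover-hash lemma, while simultaneously matching the error widths and keeping the abort bookkeeping consistent with an adaptively chosen challenge keyword, is where the delicate part of the argument lies.
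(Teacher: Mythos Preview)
Your proposal is essentially the paper's proof: ABB-style reprogramming of $\mathbf{B}_1,\mathbf{B}_2$ at $(\widetilde{ID_u^{*}},t^{*})$, the abort-resistant hash embedding $\mathbf{C}_i=\mathbf{A}\mathbf{F}_i^{*}+c_i\mathbf{G}$ for keywords, and the LWE challenge on the augmented matrix $[\mathbf{A}\,|\,\mathbf{v}]$ so that $(\mathbf{c}_3,\mathbf{c}_5)$ are simulated jointly via $\mathbf{c}_3=(\mathbf{I}_m|\mathbf{R}_1^{*}|\sum_i b_i\mathbf{F}_i^{*}|\mathbf{R}_2^{*})^{T}\mathbf{p}$ and $\mathbf{c}_5=p'+\mathbf{v}^{T}\mathbf{s}_3$, with $\mathbf{c}_0,\mathbf{c}_1,\mathbf{c}_2,\mathbf{c}_4$ generated honestly; you also correctly isolate the server-specific subtlety that $\mathbf{s}_3$ can be stripped.

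One small gap: you state that every $t\neq t^{*}$ leaves an invertible gadget and then claim all UpdKG queries are answerable by $\mathsf{SampleRight}$, but $\mathcal{A}_s$ is allowed UpdKG queries at $t=t^{*}$ where $\mathbf{A}_{t^{*}}=[\mathbf{A}\,|\,\mathbf{A}\mathbf{R}_2^{*}]$ carries no gadget. The paper inserts an extra hybrid reprogramming each $\mathbf{U}_{\theta,2}=\mathbf{A}_{t^{*}}\mathbf{Z}_{t^{*},\theta}$ with pre-sampled $\mathbf{Z}_{t^{*},\theta}$, so that UpdKG at $t^{*}$ is answered directly by $(\theta,\mathbf{Z}_{t^{*},\theta}\cdot\mathbf{x})$; you should add this step, but it is the standard fix and does not affect the rest of your argument or the $\epsilon'(\lambda)/(8q)$ bound.
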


\begin{proof}  Suppose that a PPT adversary $\mathcal{A}_s$ (a malicious server) attains a non-negligible advantage in breaking the KC-sIND-CPA security of our scheme. Then we can use $\mathcal{A}_s$ to construct an algorithm $\mathcal{B}$ that violates the LWE assumption. As outlined in Section~\ref{section:Pre}, the LWE instance is provided through an oracle $\mathcal{O}$, which is instantiated as either $\mathcal{O}_s$ or $\mathcal{O}_s'$.

    $\mathbf{Game\ 0:}$ The real game between $\mathcal{A}_s$ and $\mathcal{B}$ as shown 
    in section \ref{SM}. 

    $\mathbf{Game\ 1:}$ This game is the same as the above one, except for the following modifications. 
    Instead of running $\mathsf{TrapGen}$, we randomly select $\mathbf{A} \in \mathbb{Z}_q^{n\times m}$, 
    $\mathbf{R}_1^*,\mathbf{R}_2^*,\mathbf{F}_1^*,\ldots \mathbf{F}^*_k \in \{1,-1\}^{m\times m}$, 
    $h_1,\ldots h_k \in \mathbb{Z}_q$.
    Moreover, we  
    generate $\mathbf{G}$ by running $\mathsf{TrapGen}(n,m,q) \to (\mathbf{G},\mathbf{T}_{\mathbf{G}})$, 
    and sets $\mathbf{B}_1 = \mathbf{A} \mathbf{R}_1^* - H_1(\widetilde{ID_u^*})\mathbf{G}$, 
    $\mathbf{B}_2 = \mathbf{A} \mathbf{R}_2^* - H_1(t^*)\mathbf{G}$, 
    $\mathbf{C}_i  = \mathbf{A}\cdot \mathbf{F}^*_i + h_i \mathbf{G}$ for $i=1,\ldots k$. 


    $\mathbf{Game\ 2:}$ In this game, the generation of $\mathbf{U}_{\theta,1}$ and $\mathbf{U}_{\theta,2}$ is modified.

    For each node $\theta$, 
    we sample $\mathbf{Z}_{t^*,\theta} \in \mathcal{D}_{\mathbb{Z}^{m\times l},\rho}$, 
    compute $\mathbf{U}_{\theta,2} = \mathbf{A}_{t^*} \mathbf{Z}_{t^*,\theta}$, 
    $\mathbf{U}_{\theta,1} = \mathbf{U}-\mathbf{U}_{\theta,2}$. 


    $\mathbf{Game\ 3:}$ $\mathbf{Game\ 3}$ is identical to $\mathbf{Game\ 2}$ except that it introduces an artificial abort.
    In this game, $\mathcal{B}$ samples a hash function $H_{kw} \in \mathcal{H}$ uniformly at random and keeps it hidden.
    When $\mathcal{A}_s$ issues SerKG, UserKG, Token, UpdKG, FunKG, or dTrapdoor queries corresponding to $\omega_1,\ldots,\omega_Q$,
    $\mathcal{B}$ answers them exactly as in $\mathbf{Game\ 2}$. 
    After the above queries end, $\mathcal{A}_s$ submits challenge keywords $(\omega_0^*,\omega_1^*)$, which do not belong to
    $\{\omega_1, \ldots \omega_Q \}$. 
    $\mathcal{B}$ forwards challenge keyword ciphertexts 
    related to $\omega_{b}^*$, $ID_u^*$ and $t^*$, where $b\in \{0,1\}$ is randomly selected.
    Then, $\mathcal{A}_s$ returns $b'$ as his guess on $b$. After that, $\mathcal{B}$ validates the guess 
    by checking whether $H_{kw}(\omega_{i}) \neq 0$ for 
    $i=1,2,\ldots,Q$ and $H_{kw}(\omega_{b}^*) = 0$ hold. 
    If these conditions are not satisfied, 
    refreshes $b'$ as a random bit, and aborts the game. 
    Because adversaries do not see $H_{kw}$, they do not know if an abort event happened.
    Referring to Lemma 28 of \cite{agrawal2010efficient}, 
    we have 
    $Pr\left[\neg \mathrm{abort}\right] =  \frac{1}{4q}, 
    Pr\left[\mathbf{Game\ 2}\right] \leq \frac{1}{4q}  Pr\left[\mathbf{Game\ 3}\right].$
    
    $\mathbf{Game\ 4:}$ In this game, we alter the procedure used by $\mathcal{B}$ to generate the challenge ciphertexts as described below.
    The partial challenge ciphertexts related to keywords $(\mathbf{c}_3^*,\mathbf{c}_5^*)$ 
    are randomly chosen from $(\mathbb{Z}_q^{4m},\mathbb{Z}_q)$. 
    Since the partial challenge ciphertexts are  
    random elements, the advantage of $\mathcal{A}_s$ is zero 
    in this game.
    
    $\mathbf{Game\ 5:}$  
    In this game, we reduce the KC-sIND-CPA security against $\mathcal{A}_s$ 
    of our scheme to the LWE assumption.

    $\mathbf{Init.}$ $\mathcal{A}_s$
    chooses a challenged identity $ID_{u}^{*}$ and a timestamp $t^{*}$, and 
    submits $(ID_u^*,t^*)$ to $\mathcal{B}$. 
    
    $\mathbf{Setup.}$ $\mathcal{B}$ queries the LWE oracle $\mathcal{O}$, and obtains pairs $(\mathbf{u}_i,v_i)\in \mathbb{Z}_q^n\times \mathbb{Z}_q$ for  
    $i=0,\ldots m$.
    Let $\mathbf{v} = \mathbf{u}_0 \in \mathbb{Z}_q^n$ and $\mathbf{A} = \left[ \mathbf{u}_1,\ldots \mathbf{u}_m \right]\in \mathbb{Z}_q^{n\times m}$, and set $(\mathbf{B}_1,\mathbf{B}_2,\mathbf{G},
    \mathbf{C}_1,\ldots,\mathbf{C}_k,\mathbf{U},\mathbf{V})$ as $\mathbf{Game\ 1}$. 
    Then, $\mathcal{B}$ sends 
    $(\mathbf{A},\mathbf{B}_1,\mathbf{B}_2,\mathbf{G},
    \mathbf{C}_1,\ldots,\mathbf{C}_k,\mathbf{U},\mathbf{V},\mathbf{v})$ to 
    $\mathcal{A}_s$. 

    $\mathbf{Phase-1}$ 

   SerKG Query.
    $\mathcal{A}_s$ submits $ID_s$ to $\mathcal{B}$.
    Then, $\mathcal{B}$ runs 
    $\mathsf{SampleRight}(\mathbf{A}, \mathbf{G}',\mathbf{R}_1^*,\mathbf{T}_{\mathbf{G}},\mathbf{v},\rho)
    \to \mathbf{z}_{ID_s}$, 
    $\mathsf{SampleRight}(\mathbf{A}, \mathbf{G}',\mathbf{R}_1^*,\mathbf{T}_{\mathbf{G}},\mathbf{V},\rho)
    \to \mathbf{Z}_{ID_s}$
    where $\mathbf{G}' = (H_1(ID_s)-H_1(\widetilde{ID_u^*}))\cdot \mathbf{G}$,
    and sends $(\mathbf{z}_{ID_s},\mathbf{Z}_{ID_s})$ to $\mathcal{A}_s$. 
    Notably, we know that $H_1(ID_s)-H_1(ID_u^*)$ is full rank and 
    therefore $\mathbf{T}_{\mathbf{G}}$ is is also
    a trapdoor for the lattice $\Lambda_q^{\bot}(\mathbf{G}')$.   
    UserKG Query. 
    $\mathcal{A}_s$ submits $ID_u \neq ID_u^*$ to $\mathcal{B}$. 
    Then, 
    $\mathcal{B}$ runs $\mathsf{SampleBasisRight}(\mathbf{A},\mathbf{G}',\mathbf{R}_1^*,\mathbf{T}_{\mathbf{G}},\rho)
    \to \mathbf{T}_{\widetilde{ID_u}}$ 
    where $\mathbf{G}' = (H_1(\widetilde{ID_u})-H_1(\widetilde{ID^*_u}))\cdot \mathbf{G}$, and 
    sends $\mathbf{T}_{\widetilde{ID_u}}$
    to $\mathcal{A}_s$. 
    
    Token Query. 
    $\mathcal{A}_s$ submits $ID_u \neq ID_u^*$ to $\mathcal{B}$. 
    $\mathcal{B}$ sets $\mathbf{U}_{\theta,1}$, $\mathbf{U}_{\theta,2}$ 
    as $\mathbf{Game\ 2}$, runs 
    $\mathsf{SampleRight}(\mathbf{A},\mathbf{G}',\mathbf{R}^*_{1},\mathbf{T}_{\mathbf{G}},\mathbf{U}_{\theta,1},\rho)
    \to \mathbf{Z}_{ID_u,\theta}$ where 
    $\mathbf{G}' = (H_1(\widetilde{ID_u})-H_1(\widetilde{ID^*_u}))\cdot \mathbf{G}$, and 
    returns $(\theta,\mathbf{Z}_{ID_u,\theta})_{\theta\in \mathsf{path}(ID_u)}$ to $\mathcal{A}_s$.

    UpdKG Query. 
    $\mathcal{A}_s$ submits $(\mathbf{x},t)$ to $\mathcal{B}$. 
    If $t = t^*$, 
    $\mathcal{B}$ returns 
    $(\theta,\mathbf{Z}_{t^*,\theta} \cdot \mathbf{x})_{\theta\in \mathsf{KUNodes}(BT,RL_{\mathbf{x}},t^*)}$ 
    to $\mathcal{A}_s$; 
    otherwise, 
    for each $\theta \in \mathsf{KUNodes}(BT,RL_{\mathbf{x}},t)$, 
     $\mathcal{B}$ runs 
    $\mathsf{SampleRight}(\mathbf{A},\mathbf{G}',\mathbf{R}_{2}^*,\mathbf{T}_{\mathbf{G}},\mathbf{U}_{\theta,2},\rho) 
    \to \mathbf{Z}_{t,\theta}$, where $\mathbf{G}' = (H_1(t)-H_1(t^*))\cdot \mathbf{G}$, 
    and sends $(\theta,\mathbf{Z}_{t,\theta} \cdot \mathbf{x})_{\theta\in \mathsf{KUNodes}(BT,RL_{\mathbf{x}},t)}$ 
    to $\mathcal{A}_s$. 
    
    FunKG Query. 
    $\mathcal{A}_s$ submits $(ID_u,\mathbf{x},t)$ to $\mathcal{B}$, 
    with the restriction that $(ID_u,t)\neq (ID_u^*,t^*)$. 
    $\mathcal{B}$ runs $\mathsf{SampleRight}(\mathbf{A}, \mathbf{G}',\mathbf{R}',\mathbf{T}_{\mathbf{G}}, \mathbf{U},\rho)
    \to \mathbf{Z}_{\widetilde{ID_u},t}$ where 
    $\mathbf{R}' = \left[ \mathbf{R}_1^* | \mathbf{R}_2^* \right]$, 
    $\mathbf{G}' = (H_1(\widetilde{ID_u})-H_1(\widetilde{ID_u^*}) +H_1(t)-H_1(t^*))\cdot \mathbf{G}$, 
    and sends $fk_{ID_u,\mathbf{x},t} = \mathbf{Z}_{\widetilde{ID_u},t} \cdot \mathbf{x}$ to $\mathcal{A}_s$.  

    dTrapdoor Query. $\mathcal{A}_s$ submits $(ID_u,ID_s,\omega,t)$ to $\mathcal{B}$. 
    For $(ID_u,t,\omega)$, 
    $\mathbf{A}_{\widetilde{ID_u},\omega,t}  
    = \left[\mathbf{A} | \mathbf{A}\overline{\mathbf{R}} + \overline{\mathbf{G}} \right]$, 
    where $\overline{\mathbf{R}} = \left[ \mathbf{R}_1^* | \sum_{i = 1}^{k}(b_i\mathbf{F}_i^*)
    |\mathbf{R}_2^*\right]$ and 
    $\overline{\mathbf{G}} = (H_1(\widetilde{ID_u})-H_1(\widetilde{ID_u^*}) +1 + \sum_{i = 1}^{k}(b_ih_i) + H_1(t) -H_1(t^*)) \mathbf{G}$. 
    Then, $\mathcal{B}$ runs $\mathsf{SampleRight}(\mathbf{A},\mathbf{A}\overline{\mathbf{R}} + \overline{\mathbf{G}},
    \mathbf{T}_{\mathbf{G}},\mathbf{v},\rho) \to kt_{ID_u,\omega,t}$, 
    and 
    computes $dt_{ID_u,\omega,t}$ as $\mathbf{Game\ 0}$.
    Then, $\mathcal{B}$ sends 
    sends $dt_{ID_u,\omega,t}$ 
    to $\mathcal{A}_s$. 
    If $(ID_u,t) =(ID_u^*,t^*)$ and $\omega \notin Table_{\omega}$, $\mathcal{B}$ appends 
    $\omega$ to an initially empty table $Table_{\omega}$. 
    
    Revoke Query. $\mathcal{A}_s$ submits $RL_{\mathbf{x}},ID_u,t$ to $\mathcal{B}$. 
    Then, $\mathcal{B}$ runs $Revoke(ID_u,t,RL_{\mathbf{x}},st) \to RL_{\mathbf{x}}$, and sends $RL_{\mathbf{x}}$ to $\mathcal{A}_s$.

    $\mathbf{Challenge.}$ $\mathcal{A}_s$ submits $(\omega_0,\omega_1)$, with the restriction that $\omega_0,\omega_1 \notin Table_{\omega}$.  
    $\mathcal{B}$ flips a coin, gets a bit $b \in \{0,1\}$, 
    computes $\mathbf{c}_0,\mathbf{c}_1,\mathbf{c}_2$ as $\mathbf{Game\ 0}$, 
    and modifies the ciphertexts 
    $\mathbf{c}_3,\mathbf{c}_4,\mathbf{c}_5$ as follows. 
    Recall that $(\mathbf{u}_i,v_i)\in \mathbb{Z}_q^n\times \mathbb{Z}_q$ are entries from the LWE instance and let 
    $\mathbf{v}^* = \left[v_1,\ldots,v_m \right] \in \mathbb{Z}_q^{m}$.  
    Then, $\mathcal{B}$ selects $\mathbf{s}_3 \gets \mathbf{Z}_q^n$, $\mathbf{e}_5 \gets \mathcal{D}_{\mathbb{Z}^m,\sigma}$, 
    computes $\mathbf{c}_3 = (\mathbf{I}_m | \mathbf{R}_1^* | \sum_{i = 1}^{k} b_i \mathbf{F}_i^* | \mathbf{R}_2^*)^T 
    \cdot \mathbf{v}^*$, 
    $\mathbf{c}_4 = \mathbf{A}_{ID_s}^T \cdot \mathbf{s}_3 + 
    (\mathbf{I}_m | \mathbf{R}_7)^T \cdot\mathbf{e}_5$, 
    $\mathbf{c}_5 = v_0 + \mathbf{v}^T \cdot \mathbf{s}_3$, 
    and sends 
    $CT^* = (\mathbf{c}_0,\mathbf{c}_1,\mathbf{c}_2,\mathbf{c}_3,\mathbf{c}_4,\mathbf{c}_5)$ 
    to $\mathcal{A}_s$.  

    If $\mathcal{O} = \mathcal{O}_s$, 
    we have $\mathbf{v}^* = \mathbf{A}^T \cdot \mathbf{s}_2 + \mathbf{e}_4$ 
    and $v_0 = \mathbf{u}_0^T \cdot \mathbf{s}_2 +\mathbf{e}_6$, 
    where $\mathbf{e}_4 \gets \mathcal{D}_{\mathbb{Z}^{m},\sigma}$,
    $\mathbf{e}_6 \gets \mathcal{D}_{\mathbb{Z},\sigma}$. 
    When no abort happens, since $H_{kw}(\omega_b^*) = 0$, 
    we have 
    $\mathbf{A}_{\widetilde{ID_u^*},\omega_b^*,t^*} = 
    \left[\mathbf{A}|\mathbf{A}\mathbf{R}_1^*|\mathbf{A} 
    \cdot \sum^k_{i=1}b_i\mathbf{F}_i^*|\mathbf{A}\mathbf{R}_2^*\right]$. 
    Therefore, we selects $\mathbf{s}_3\gets \mathbb{Z}_q^n$, $\mathbf{e}_5 \gets \mathcal{D}_{\mathbb{Z}^m,\sigma}$, 
    $\mathbf{R}_7 \gets \{1,-1\}^{m \times m}$, and computes  
    $\mathbf{c}_3 = (\mathbf{I}_m | \mathbf{R}_1^* | \sum_{i = 1}^{k} b_i \mathbf{F}_i^* 
    | \mathbf{R}_2^*)^T \cdot \mathbf{v}^* 
    = \mathbf{A}_{\widetilde{ID_u^*},\omega_b^*,t^*}^T \cdot \mathbf{s}_2 
    + (\mathbf{I}_m | \mathbf{R}_1^* | \sum_{i = 1}^{k} b_i \mathbf{F}_i^* 
    | \mathbf{R}_2^*)^T \mathbf{e}_4, 
    \mathbf{c}_4 = \mathbf{A}_{ID_s}^T \cdot \mathbf{s}_3 + 
    (\mathbf{I}_m | \mathbf{R}_7)^T \cdot\mathbf{e}_5, 
    \mathbf{c}_5 = v_0 + \mathbf{v}^T \cdot \mathbf{s}_3 
    = \mathbf{v}^T \cdot \mathbf{s}_2 +\mathbf{e}_6 + \mathbf{v}^T \cdot \mathbf{s}_3
    = \mathbf{v}^T \cdot (\mathbf{s}_2+\mathbf{s}_3) +\mathbf{e}_6$.   
    
    Furthermore, the challenge ciphertexts are valid. 
    As a result, the challenge ciphertexts are the same as the challenge ciphertexts in $\mathbf{Game\ 3}$. 

    If $\mathcal{O} = \mathcal{O}'_s$, we have  
    $\mathbf{c}^*_3 = (\mathbf{I}_m | \mathbf{R}_1^* | \sum_{i = 1}^{k} b_i \mathbf{F}_i^* | \mathbf{R}_2^*)^T 
    \cdot \mathbf{v}^*$ is uniform 
    in $\mathbb{Z}_q^{4m}$ and  
    $\mathbf{c}^*_5 = v_0 + \mathbf{v}^T \cdot \mathbf{s}_3$
    is also uniform in $\mathbb{Z}_q$. 
    As a result, the challenge ciphertexts are the same as the challenge ciphertexts in $\mathbf{Game\ 4}$. 

    $\mathbf{Phase-2.}$
    $\mathcal{A}_s$ can continue to make SerKG, UserKG, Token, UpdKG, FunKG, dTrapdoor and Revoke queries, except 
    the UserKG query for $ID_u^*$ and 
    the dtrapdoor queries for $(ID_u^*,ID_s,\omega_0,t^*), (ID_u^*,ID_s,\omega_1,t^*)$.  
    $\mathcal{B}$ answers as $\mathbf{Phase-1}$.  
    
    $\mathbf{Guess.}$ $\mathcal{A}_s$ outputs a guess $b' \in \{0,1\}$ on $b$. 
    Then, $\mathcal{B}$ validates the guess 
    by checking whether $H_{kw}(\omega_{i}) \neq 0$ for 
    $i=1,2,\ldots,Q$ and $H_{kw}(\omega_{b}^*) = 0$ hold. 
    If these conditions are not satisfied, 
    refreshes $b'$ as a random bit, and aborts the game. 
    If $b=b'$, $\mathcal{B}$ returns guess: $\mathcal{O} = \mathcal{O}_s$;  
    otherwise, $\mathcal{B}$ returns guess: $\mathcal{O} = \mathcal{O}'_{s}$. 
    The $\mathcal{B}$'s advantage in breaking LWE assumption is 
    $\epsilon(\lambda) \geq
    Pr\left[ \neg \mathrm{abort} \right] \times 
    Pr\left[ \mathcal{B}\ \mathrm{wins} \right]
     =
    Pr\left[ \neg \mathrm{abort} \right] \times 
    (\frac{1}{2} \times Pr\left[ \mathcal{B}\ \mathrm{wins}|\mathcal{O} =\mathcal{O}_{s}\right] + 
    \frac{1}{2} \times Pr\left[\mathcal{B}\ \mathrm{wins}|\mathcal{O} =\mathcal{O}'_{s}\right]- 
    \frac{1}{2})
    =\frac{1}{4q } \times (\frac{1}{2}\times (\frac{1}{2} + \epsilon'(\lambda)) 
    + \frac{1}{2} \times \frac{1}{2} -\frac{1}{2})
    = \frac{\epsilon'(\lambda)}{8q}$. 
\end{proof}

\begin{theorem}
    Our scheme is $(t',\epsilon'(\lambda))$-KC-sIND-CPA security against $\mathcal{A}_o$ if the $(t,\epsilon(\lambda))$-LWE assumption holds, 
    where $t= O(t')$ and $\epsilon(\lambda) \geq \frac{\epsilon'(\lambda)}{2}$. 
\end{theorem}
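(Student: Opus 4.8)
The plan is to mirror the structure of the preceding $\mathcal{A}_s$ reduction but to exploit the fact that $\mathcal{A}_o$ is strictly weaker: in the $\mathcal{A}_o$-game there are no $dTrapdoor$ and no $FunKG$ queries, and $\mathcal{A}_o$ is forbidden from ever obtaining $sk_{ID_s^*}$. Consequently no keyword-partitioning (abort-resistant hash) step is required, which is exactly why the loss collapses from $\frac{1}{8q}$ to the bare LWE factor $\frac{1}{2}$. I would proceed through a short sequence of games: $\mathbf{Game\ 0}$ is the real $\mathcal{A}_o$-game; $\mathbf{Game\ 1}$ replaces $\mathbf{A}$ by a uniform matrix and generates $\mathbf{G}$ via $\mathsf{TrapGen}$ so as to hold $\mathbf{T}_{\mathbf{G}}$, puncturing at the \emph{server identity} and the \emph{timestamp} by setting $\mathbf{B}_1=\mathbf{A}\mathbf{R}_1^*-H_1(ID_s^*)\mathbf{G}$ and $\mathbf{B}_2=\mathbf{A}\mathbf{R}_2^*-H_1(t^*)\mathbf{G}$; $\mathbf{Game\ 2}$ re-derives the pairs $(\mathbf{U}_{\theta,1},\mathbf{U}_{\theta,2})$ from the punctured lattice exactly as in the earlier proofs so that $Token$ and $UpdKG$ remain consistent; and $\mathbf{Game\ 3}$ is the reduction to LWE.

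For query answering I would use the full-rank-difference property of $H_1$: since the prefixes $0,1,2,3$ keep the user, the $\widetilde{ID_u}$, the server and the time identity spaces disjoint, the matrices $H_1(ID_s)-H_1(ID_s^*)$, $H_1(\widetilde{ID_u})-H_1(ID_s^*)$ and $H_1(t)-H_1(t^*)$ are all full-rank for admissible queries, so $\mathbf{T}_{\mathbf{G}}$ is a usable trapdoor for the shifted lattices. Thus $SerKG$ (for $ID_s\neq ID_s^*$), $UserKG$ (for \emph{every} $ID_u$, including $ID_u^*$), $Token$, $UpdKG$ and $Revoke$ are simulated through $\mathsf{SampleRight}$ and $\mathsf{SampleBasisRight}$; crucially, because the puncture is at the server rather than the user, the simulator can always return $sk_{ID_u^*}=\mathbf{T}_{\widetilde{ID_u^*}}$. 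In $\mathbf{Game\ 3}$ I would take the LWE instance $(\mathbf{u}_i,v_i)$, set $\mathbf{v}=\mathbf{u}_0$ and $\mathbf{A}=[\mathbf{u}_1,\ldots,\mathbf{u}_m]$, answer all queries as above, and build the challenge by embedding LWE into the \emph{server-tied} components: $\mathbf{c}_0,\mathbf{c}_1,\mathbf{c}_2,\mathbf{c}_3$ are produced honestly with a self-chosen $\mathbf{s}_2$, while $\mathbf{c}_4$ is formed from $\mathbf{A}_{ID_s^*}=[\mathbf{A}\mid\mathbf{A}\mathbf{R}_1^*]$ and the LWE block $\mathbf{A}^T\mathbf{s}_3+\mathbf{e}_5$, and $\mathbf{c}_5=v_0+\mathbf{v}^T\mathbf{s}_2$ with $v_0=\mathbf{v}^T\mathbf{s}_3+\mathbf{e}_6$. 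When $\mathcal{O}=\mathcal{O}_s$ this is a faithful ciphertext; when $\mathcal{O}=\mathcal{O}_s'$ the pair $(\mathbf{c}_4,\mathbf{c}_5)$ is uniform, so the uniform mask $v_0$ one-time-pads $\mathbf{v}^T\mathbf{s}_2$ and the $Test$ quantity $\mu'$ becomes unrecoverable. Relaying $b'$ then yields the standard $\frac{1}{2}\big(\Pr[\text{win}\mid\mathcal{O}_s]+\Pr[\text{win}\mid\mathcal{O}_s']\big)-\frac{1}{2}=\frac{\epsilon'(\lambda)}{2}$ bound.

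The step I expect to be the main obstacle is arguing that the \emph{keyword} is genuinely hidden once the server channel is randomised, because the challenge keyword sits syntactically in $\mathbf{c}_3=\mathbf{A}_{\widetilde{ID_u^*},\omega_b^*,t^*}^T\mathbf{s}_2+(\mathbf{I}_m\mid\mathbf{R}_5\mid\mathbf{F}_{\omega_b^*}\mid\mathbf{R}_6)^T\mathbf{e}_4$, and — unlike the $\mathcal{A}_s$ case — the identity $ID_u^*$ is revealed only at the $\mathbf{Challenge}$ phase and must stay queryable, so its block $\mathbf{B}_{\widetilde{ID_u^*}}$ keeps a genuine $\mathbf{G}$-component and $\mathbf{c}_3$ cannot be folded into the LWE embedding the way it is for a malicious server. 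The resolution I would pursue is the designated-server masking argument: with $sk_{ID_u^*}$ a holder can form trapdoors $kt_{ID_u^*,\omega_j^*,t^*}$ and compute $kt_{\omega_j^*}^T\mathbf{c}_3=\mathbf{v}^T\mathbf{s}_2+\mathbf{p}_j^T\mathbf{s}_2+\text{noise}$ with $\mathbf{p}_b=\mathbf{0}$, but since $\mathbf{v}^T\mathbf{s}_2$ is recoverable only by cancelling $\mathbf{c}_4$ with $\mathbf{z}_{ID_s^*}$ — which $\mathcal{A}_o$ never possesses — the sole residual $b$-dependent quantity has the form $\mathbf{p}_j^T\mathbf{s}_2+\text{noise}$, which is pseudorandom under LWE in the fresh secret $\mathbf{s}_2$. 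Making this last point fully rigorous, i.e.\ closing every side channel through which $\mathbf{c}_3$ could betray $b$ while $ID_u^*$ stays non-punctured, is the delicate part; everything else is a routine adaptation of the $\mathcal{A}_s$ simulation with the puncture moved from the user to the server.
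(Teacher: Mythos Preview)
Your overall strategy---puncturing at $ID_s^*$ and $t^*$, simulating every key query via $\mathsf{SampleRight}/\mathsf{SampleBasisRight}$ using the FRD separation of the prefix tags, and embedding the LWE challenge into the server-tied pair $(\mathbf{c}_4,\mathbf{c}_5)$ while producing $\mathbf{c}_0,\ldots,\mathbf{c}_3$ honestly with a fresh $\mathbf{s}_2$---is exactly the paper's approach, down to the same four-game sequence (the paper additionally programs $\mathbf{C}_i=\mathbf{A}\mathbf{F}_i^*+h_i\mathbf{G}$ in $\mathbf{Game\ 1}$, which plays no role here). The paper's proof likewise forms $\mathbf{c}_3=\mathbf{A}^T_{\widetilde{ID_u},\omega_b,t^*}\mathbf{s}_2+(\mathbf{I}_m\mid\mathbf{R}_5\mid\sum b_i\mathbf{F}_i\mid\mathbf{R}_2^*)^T\mathbf{e}_4$, sets $\mathbf{c}_4=[\mathbf{I}_m\mid\mathbf{R}_1^*]^T\mathbf{v}^*$ and $\mathbf{c}_5=v_0+\mathbf{v}^T\mathbf{s}_2$, and computes the $\tfrac{\epsilon'}{2}$ bound directly---\emph{without} any separate argument that $b$ is hidden once $(\mathbf{c}_4,\mathbf{c}_5)$ is randomised. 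So the step you flag as ``the main obstacle'' is precisely the step the paper omits.

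Your proposed resolution, however, does not close it. Appealing to ``LWE in the fresh secret $\mathbf{s}_2$'' fails because in this game $\mathcal{A}_o$ is permitted to query $UserKG(ID_u^*)$ and obtain $\mathbf{T}_{\widetilde{ID_u^*}}$, a short basis for $\Lambda_q^\bot(\mathbf{A}_{\widetilde{ID_u^*}})$. Multiplying the first $2m$ coordinates of $\mathbf{c}_3$ on the left by $\mathbf{T}_{\widetilde{ID_u^*}}^T$ annihilates $\mathbf{A}_{\widetilde{ID_u^*}}^T\mathbf{s}_2$ modulo $q$ and leaves $\mathbf{T}_{\widetilde{ID_u^*}}^T[\mathbf{I}_m\mid\mathbf{R}_5]^T\mathbf{e}_4$, which the correctness bounds force to be short enough to lift to $\mathbb{Z}$; hence $\mathbf{e}_4$ and then $\mathbf{s}_2$ are recoverable. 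With $\mathbf{s}_2$ known, the third $m$-block of $\mathbf{c}_3$ equals $\mathbf{B}_{\omega_b^*}^T\mathbf{s}_2$ up to short noise, revealing $b$ regardless of what happens to $(\mathbf{c}_4,\mathbf{c}_5)$. In short: your reduction matches the paper's, and your instinct that $\mathbf{c}_3$ is the crux is correct, but the designated-server masking argument you sketch does not survive a $UserKG(ID_u^*)$ query---and the paper's own proof contains the same unaddressed gap.
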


\begin{proof}
Suppose a PPT adversary $\mathcal{A}_o$ (outside attacker, possibly including authorised users) has a non-negligible advantage against the KC-sIND-CPA security of our scheme. Then, $\mathcal{A}_o$ can be leveraged to construct an algorithm $\mathcal{B}$ that breaks the LWE assumption. As in Section \ref{section:Pre}, the LWE instance is provided via an oracle $\mathcal{O}$, which is either $\mathcal{O}_s$ or $\mathcal{O}_s'$.
    
    $\mathbf{Game\ 0:}$ The real game between $\mathcal{A}_o$ and $\mathcal{B}$ as shown 
    in section \ref{SM}. 

    $\mathbf{Game\ 1:}$ This game is the same as the above one, except for the following modifications. 
    Instead of running $\mathsf{TrapGen}$, we randomly select $\mathbf{A} \in \mathbb{Z}_q^{n\times m}$, 
    $\mathbf{R}_1^*,\mathbf{R}_2^*,\mathbf{F}_1^*,\ldots \mathbf{F}^*_k \in \{1,-1\}^{m\times m}$, 
    $h_1,\ldots h_k \in \mathbb{Z}_q$.
    Moreover, we  
    generate $\mathbf{G}$ by running $\mathsf{TrapGen}(n,m,q) \to (\mathbf{G},\mathbf{T}_{\mathbf{G}})$, 
    and sets $\mathbf{B}_1 = \mathbf{A} \mathbf{R}_1^* - H_1(ID_s^*)\mathbf{G}$, 
    $\mathbf{B}_2 = \mathbf{A} \mathbf{R}_2^* - H_1(t^*)\mathbf{G}$, 
    $\mathbf{C}_i  = \mathbf{A}\cdot \mathbf{F}^*_i + h_i \mathbf{G}$ for $i=1,\ldots k$. 

    $\mathbf{Game\ 2:}$ In this game, we change the way  
    $\mathbf{U}_{\theta,1}$ and $\mathbf{U}_{\theta,2}$ are generated. 

    For each node $\theta$, 
    we sample $\mathbf{Z}_{t^*,\theta} \in \mathcal{D}_{\mathbb{Z}^{m\times l},\rho}$, 
    compute $\mathbf{U}_{\theta,2} = \mathbf{A}_{t^*} \mathbf{Z}_{t^*,\theta}$, 
    $\mathbf{U}_{\theta,1} = \mathbf{U}-\mathbf{U}_{\theta,2}$. 

    $\mathbf{Game\ 3:}$ In this game, we reduce the KC-sIND-CPA security against $\mathcal{A}_o$ of 
    our scheme to the LWE assumption. 
    
    $\mathbf{Init.}$ $\mathcal{A}_o$ submits 
    chooses a challenged identity $ID_s^*$ and a timestamp $t^{*}$, and 
    submits $(ID_s^*,t^*)$ to $\mathcal{B}$. 
    
    $\mathbf{Setup.}$ $\mathcal{B}$ queries the LWE oracle $\mathcal{O}$, and obtains pairs $(\mathbf{u}_i,v_i)\in \mathbb{Z}_q^n\times \mathbb{Z}_q$ for  
    $i=0,\ldots m$.
    Let $\mathbf{v} = \mathbf{u}_0 \in \mathbb{Z}_q^n$ and $\mathbf{A} = \left[ \mathbf{u}_1,\ldots \mathbf{u}_m \right]\in \mathbb{Z}_q^{n\times m}$, and set  $(\mathbf{B}_1,\mathbf{B}_2,\mathbf{G},
    \mathbf{C}_1,\ldots, \mathbf{C}_k,\mathbf{U},\mathbf{V})$ as $\mathbf{Game\ 1}$.  
    Then, $\mathcal{B}$ sends $(\mathbf{A},\mathbf{B}_1,\mathbf{B}_2,\mathbf{G},\mathbf{C}_1,\ldots,
    \mathbf{C}_k,\mathbf{U},\mathbf{V},\mathbf{v})$ 
    to $\mathcal{A}_o$. 

    $\mathbf{Phase-1}$ 

    SerKG Query.
    $\mathcal{A}_o$ submits $ID_s \neq ID_s^*$ to $\mathcal{B}$.
    Then, $\mathcal{B}$ runs 
    $\mathsf{SampleRight}(\mathbf{A}, \mathbf{G}',\mathbf{R}_1^*,\mathbf{T}_{\mathbf{G}},\mathbf{v},\rho)
    \to \mathbf{z}_{ID_s}$, 
    $\mathsf{SampleRight}(\mathbf{A}, \mathbf{G}',\mathbf{R}_1^*,\mathbf{T}_{\mathbf{G}},\mathbf{V},\rho)
    \to \mathbf{Z}_{ID_s}$
    where $\mathbf{G}' = (H_1(ID_s)-H_1(ID_s^*))\cdot \mathbf{G}$,
    and sends $(\mathbf{z}_{ID_s},\mathbf{Z}_{ID_s})$ to $\mathcal{A}_o$. 
    Notably, we know that $H_1(ID_s)-H_1(ID_s^*)$ is full rank and 
    therefore $\mathbf{T}_{\mathbf{G}}$ is is also
    a trapdoor for the lattice $\Lambda_q^{\bot}(\mathbf{G}')$. 

    UserKG Query.  
    $\mathcal{A}_o$ submits $ID_u$ to $\mathcal{B}$. 
    Then, 
    $\mathcal{B}$ runs $\mathsf{SampleBasisRight}(\mathbf{A},\mathbf{G}',\mathbf{R}_1^*,\mathbf{T}_{\mathbf{G}},\rho)
    \to \mathbf{T}_{\widetilde{ID_u}}$ 
    where $\mathbf{G}' = (H_1(\widetilde{ID_u})-H_1(ID^*_s))\cdot \mathbf{G}$, and 
    sends $\mathbf{T}_{\widetilde{ID_u}}$
    to $\mathcal{A}_o$. 

    $\mathsf{Token\ Query}.$ 
    $\mathcal{A}_o$ submits $ID_u$ to $\mathcal{B}$. 
    $\mathcal{B}$ sets $\mathbf{U}_{\theta,1}$, $\mathbf{U}_{\theta,2}$ as
    $\mathbf{Game\ 2}$, runs 
    $\mathsf{SampleRight}(\mathbf{A},\mathbf{G}',\mathbf{R}^*_{1},\mathbf{T}_{\mathbf{G}},\mathbf{U}_{\theta,1},\rho)
    \to \mathbf{Z}_{ID_u,\theta}$ where 
    $\mathbf{G}' = (H_1(\widetilde{ID_u})-H_1(ID^*_s))\cdot \mathbf{G}$, and 
    returns $(\theta,\mathbf{Z}_{ID_u,\theta})_{\theta\in \mathsf{path}(ID_u)}$ to $\mathcal{A}_o$.

    UpdKG Query. 
    $\mathcal{A}_o$ submits $(\mathbf{x},t)$ to $\mathcal{B}$. 
    If $t = t^*$, $\mathcal{B}$ returns 
    $(\theta,\mathbf{Z}_{t^*,\theta} \cdot \mathbf{x})_{\theta\in \mathsf{KUNodes}(BT,RL_{\mathbf{x}},t^*)}$ 
    to $\mathcal{A}_o$; 
    otherwise, 
    for each $\theta \in \mathsf{KUNodes}(BT,RL_{\mathbf{x}},t)$, 
    $\mathcal{B}$ runs 
    $\mathsf{SampleRight}(\mathbf{A},\mathbf{G}',\mathbf{R}_{2}^*,\mathbf{T}_{\mathbf{G}},\mathbf{U}_{\theta,2},\rho) 
    \to \mathbf{Z}_{t,\theta}$, where $\mathbf{G}' = (H_1(t)-H_1(t^*))\cdot \mathbf{G}$, 
    and sends $(\theta,\mathbf{Z}_{t,\theta} \cdot \mathbf{x})_{\theta\in \mathsf{KUNodes}(BT,RL_{\mathbf{x}},t)}$ 
    to $\mathcal{A}_o$.  

    Revoke Query. $\mathcal{A}_o$ submits $RL_{\mathbf{x}},ID_u,t$ to $\mathcal{B}$. 
    Then, $\mathcal{B}$ runs $Revoke(ID_u,t,RL_{\mathbf{x}},st) \to RL_{\mathbf{x}}$, and sends $RL_{\mathbf{x}}$ to $\mathcal{A}_o$.

    $\mathbf{Challenge.}$ $\mathcal{A}_o$ submits two keywords $\omega_0$ and $\omega_1$ to $\mathcal{B}$.  
    $\mathcal{B}$ flips a coin, gets a bit $b \in \{0,1\}$, 
    computes $\mathbf{c}_0,\mathbf{c}_1,\mathbf{c}_2$ as $\mathbf{Game\ 0}$, 
    and modifies the ciphertexts 
    $\mathbf{c}_3,\mathbf{c}_4,\mathbf{c}_5$ as follows. 
    Recall that $(\mathbf{u}_i,v_i)$ are entries from the LWE instance 
    and let 
    $\mathbf{v}^* = \left[v_1,\ldots,v_m \right] \in \mathbb{Z}_q^{m}$. 
    Then, $\mathcal{B}$ selects $\mathbf{s}_2 \in \mathbf{Z}_q^n$, $\mathbf{R}_5 \gets \{1,-1\}^{m \times m}$ and $\mathbf{e}_4 \gets \mathcal{D}_{\mathbb{Z}^m,\sigma}$, and 
     computes $\mathbf{c}_3 = \mathbf{A}^T_{\widetilde{ID_u},\omega_b,t^*} \cdot \mathbf{s}_2 + (\mathbf{I}_m | \mathbf{R}_5 | \sum_{i = 1}^{k} b_i \mathbf{F}_i | \mathbf{R}^*_2)^T \cdot \mathbf{e}_4$, 
    $\mathbf{c}_4 = \left[ \mathbf{I}_m | \mathbf{R}_1^*  \right]^T \cdot \mathbf{v}^*$ 
    and $\mathbf{c}_5 = v_0 + \mathbf{v}^T \cdot \mathbf{s}_2$, and sends 
    $CT^* = (\mathbf{c}_0,\mathbf{c}_1,\mathbf{c}_2,\mathbf{c}_3,\mathbf{c}_4,\mathbf{c}_5)$ 
    to $\mathcal{A}_o$.

    $\mathbf{Phase-2.}$
    $\mathcal{A}_o$ can continue to make user SerKG, UserKG, Token, UpdKG and Revoke queries,  except for the SerKG
 query for $ID_s^*$, 
    and $\mathcal{B}$ answers as $\mathbf{Phase-1}$.  

    $\mathbf{Guess.}$ $\mathcal{A}_o$ outputs a guess $b' \in \{0,1\}$ on $b$. 
    If $b=b'$, $\mathcal{B}$ returns guess: $\mathcal{O} = \mathcal{O}_s$; 
    otherwise, $\mathcal{B}$ returns guess: $\mathcal{O} = \mathcal{O}'_{s}$.  
    The $\mathcal{B}$'s advantage in breaking LWE assumption is 
    $\epsilon(\lambda) \geq 
    Pr\left[ \mathcal{B}\ \mathrm{wins} \right]
     = 
    \frac{1}{2} \times Pr\left[ \mathcal{B}\ \mathrm{wins}|\mathcal{O} =\mathcal{O}_{s}\right] + 
    \frac{1}{2} \times Pr\left[\mathcal{B}\ \mathrm{wins}|\mathcal{O} =\mathcal{O}'_{s}\right]- 
    \frac{1}{2}
    = \frac{1}{2}\times (\frac{1}{2} + \epsilon'(\lambda)) 
    + \frac{1}{2} \times \frac{1}{2} -\frac{1}{2}
    = \frac{\epsilon'(\lambda)}{2}$. 
\end{proof}

\begin{theorem} \label{proof3}
    Our scheme is $(t',\epsilon'(\lambda))$-KT-sIND-CPA security if the $(t,\epsilon(\lambda))$-LWE assumption holds, 
    where $t= O(t')$ and $\epsilon(\lambda) \geq \frac{\epsilon'(\lambda)}{2}$. 
\end{theorem}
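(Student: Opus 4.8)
The plan is to mirror the reduction used for the $\mathcal{A}_o$ case of the previous (KC-sIND-CKA) theorem, since the target here is again a server identity $ID_s^*$ and the adversary is barred from obtaining $sk_{ID_s^*}$. The only keyword-dependent component of the challenge trapdoor $dt^* = (\mathbf{kt}_1,\mathbf{kt}_2,\mathbf{kt}_3)$ is $\mathbf{kt}_3 = H_2(kx) \oplus \mathsf{bin}(kt_{ID_u^*,\omega_b^*,t^*})$, whereas $\mathbf{kt}_1$ and $\mathbf{kt}_2$ form an LWE-style encapsulation of the one-time value $kx$ under the server's matrix $\mathbf{A}_{ID_s^*}$ and $\mathbf{V}$. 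So the whole argument reduces to showing that, without $sk_{ID_s^*}$, the pair $(\mathbf{kt}_1,\mathbf{kt}_2)$ hides $kx$, after which the mask $H_2(kx)$ conceals the keyword-bearing vector $\mathsf{bin}(kt_{ID_u^*,\omega_b^*,t^*})$.

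First I would set Game~0 to be the real KT-sIND-CKA game and, in Game~1, move to a punctured setup identical in spirit to Game~1 of the $\mathcal{A}_o$ proof: sample $\mathbf{A}$ uniformly, generate $(\mathbf{G},\mathbf{T}_{\mathbf{G}})$ by $\mathsf{TrapGen}$, and set $\mathbf{B}_1 = \mathbf{A}\mathbf{R}_1^* - H_1(ID_s^*)\mathbf{G}$, $\mathbf{B}_2 = \mathbf{A}\mathbf{R}_2^* - H_1(t^*)\mathbf{G}$, and $\mathbf{C}_i = \mathbf{A}\mathbf{F}_i^* + h_i\mathbf{G}$. Then $\mathbf{B}_{ID_s} = \mathbf{A}\mathbf{R}_1^* + (H_1(ID_s)-H_1(ID_s^*))\mathbf{G}$ is $\mathsf{SampleRight}$-invertible for every $ID_s \neq ID_s^*$ by the FRD property, so the SerKG queries (allowed only for $ID_s \neq ID_s^*$) are answerable, and UserKG, Token, UpdKG, and Revoke are answered exactly as in the $\mathcal{A}_o$ reduction, using $\mathbf{T}_{\mathbf{G}}$ and, in Game~2, the re-randomised splitting $\mathbf{U}_{\theta,2} = \mathbf{A}_{t^*}\mathbf{Z}_{t^*,\theta}$, $\mathbf{U}_{\theta,1} = \mathbf{U} - \mathbf{U}_{\theta,2}$. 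No $\mathsf{SampleRight}$ trapdoor for $\mathbf{A}_{ID_s^*}$ is ever needed, because the adversary may not query SerKG on $ID_s^*$ and this game issues no dTrapdoor queries other than the challenge.

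In the final hop I would plant the LWE instance inside the challenge trapdoor. Taking the LWE secret as $\mathbf{s}_4$ and the sample vectors to define $\mathbf{A}$ and $\mathbf{V}$, I form the value block $\mathbf{p} = \mathbf{A}^\top\mathbf{s}_4 + \mathbf{e}$ and set $\mathbf{kt}_1 = (\mathbf{I}_m \mid \mathbf{R}_1^*)^\top \mathbf{p}$, which equals $\mathbf{A}_{ID_s^*}^\top\mathbf{s}_4 + (\mathbf{I}_m \mid \mathbf{R}_1^*)^\top\mathbf{e}$ since $\mathbf{A}_{ID_s^*} = [\mathbf{A}\mid\mathbf{A}\mathbf{R}_1^*]$; the block of values accompanying $\mathbf{V}$ gives $\mathbf{V}^\top\mathbf{s}_4 + \mathbf{e}_8$, to which I add $\mathsf{bin}(kx)\lfloor q/2\rfloor$ to obtain $\mathbf{kt}_2$, and I compute $\mathbf{kt}_3$ honestly. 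When $\mathcal{O}=\mathcal{O}_s$ these match a real trapdoor (Game~0, up to the standard statistical closeness of the $\pm1$-mixed noise to $(\mathbf{I}_m\mid\mathbf{R}_8)^\top\mathbf{e}_7$); when $\mathcal{O}=\mathcal{O}_s'$ both blocks are uniform, so $(\mathbf{kt}_1,\mathbf{kt}_2)$ become independent of $kx$ and $kx$ is information-theoretically hidden. Letting $\mathcal{B}$ output ``real'' iff $\mathcal{A}$ guesses $b$ correctly, conditioning on the two oracle cases yields $\epsilon(\lambda) \ge \frac{1}{2}(\frac{1}{2}+\epsilon'(\lambda)) + \frac{1}{2}\cdot\frac{1}{2} - \frac{1}{2} = \frac{\epsilon'(\lambda)}{2}$.

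The step I expect to be the main obstacle is arguing that in the $\mathcal{O}_s'$ case the trapdoor is genuinely independent of $b$. Once $(\mathbf{kt}_1,\mathbf{kt}_2)$ are uniform, $kx$ is uniform and unseen, so the pad $H_2(kx)$ in $\mathbf{kt}_3 = H_2(kx)\oplus\mathsf{bin}(kt_{ID_u^*,\omega_b^*,t^*})$ must wash out the $\omega_b^*$-dependence of $\mathsf{bin}(kt_{ID_u^*,\omega_b^*,t^*})$; because the discrete-Gaussian sample $kt_{ID_u^*,\omega,t^*}$ has an $\omega$-dependent distribution (its coset lattice depends on $\mathbf{B}_\omega$), this forces me to treat $H_2$ on a hidden uniform input as a pseudorandom one-time pad, so that $\mathbf{kt}_3$ is (pseudo)uniform regardless of $b$. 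I would isolate this in a dedicated indistinguishability sub-step, check that the Parameters Setting noise bounds keep $\mathbf{kt}_1,\mathbf{kt}_2$ well-formed in the $\mathcal{O}_s$ branch, and confirm that the adversary's permitted access to $sk_{ID_u^*}$ gives no leverage, since recovering the exact sampled $kt_{ID_u^*,\omega_b^*,t^*}$ still requires $kx$.
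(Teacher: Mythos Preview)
Your proposal is correct and follows essentially the same route as the paper: the same punctured setup $\mathbf{B}_1 = \mathbf{A}\mathbf{R}_1^* - H_1(ID_s^*)\mathbf{G}$, $\mathbf{B}_2 = \mathbf{A}\mathbf{R}_2^* - H_1(t^*)\mathbf{G}$, the same $\mathbf{U}_{\theta,1},\mathbf{U}_{\theta,2}$ re-splitting, the same embedding of the LWE samples into $\mathbf{A}$ and $\mathbf{V}$ so that $\mathbf{kt}_1 = (\mathbf{I}_m\mid\mathbf{R}_1^*)^\top\mathbf{u}^*$ and $\mathbf{kt}_2 = \mathbf{v}^* + \mathsf{bin}(kx)\lfloor q/2\rfloor$, and the same $\epsilon(\lambda)\ge\epsilon'(\lambda)/2$ accounting. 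Your explicit isolation of the $\mathbf{kt}_3$ step (that $H_2(kx)$ on a now-hidden $kx$ must pad out the $\omega_b^*$-dependent preimage) is in fact more careful than the paper, which simply asserts the final advantage without spelling out this sub-argument.
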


\begin{proof}
Assume there exists a PPT adversary $\mathcal{A}$ that wins the KT-sIND-CPA game with non-negligible advantage. We show that such an adversary can be transformed into an algorithm $\mathcal{B}$ that violates the LWE assumption. As in Section \ref{section:Pre}, $\mathcal{B}$ receives its LWE instance via an oracle $\mathcal{O}$, which is instantiated as either $\mathcal{O}_s$ or $\mathcal{O}_s'$.

$\mathbf{Game\ 0:}$ The real game between $\mathcal{A}$ and $\mathcal{B}$ as shown 
    in section \ref{SM}. 

    $\mathbf{Game\ 1:}$ This game is the same as the above one, except for the following modifications. 
    Instead of running $\mathsf{TrapGen}$, we randomly select $\mathbf{v} \in \mathbb{Z}_q^{n}$, 
    $\mathbf{R}_1^*,\mathbf{R}_2^*,\mathbf{F}_1^*,\ldots \mathbf{F}^*_k \in \{1,-1\}^{m\times m}$, 
    $h_1,\ldots h_k \in \mathbb{Z}_q$.
    Moreover, we  
    generate $\mathbf{G}$ by running $\mathsf{TrapGen}(n,m,q) \to (\mathbf{G},\mathbf{T}_{\mathbf{G}})$, 
    and sets $\mathbf{B}_1 = \mathbf{A} \mathbf{R}_1^* - H_1(ID_s^*)\mathbf{G}$, 
    $\mathbf{B}_2 = \mathbf{A} \mathbf{R}_2^* - H_1(t^*)\mathbf{G}$, 
    $\mathbf{C}_i  = \mathbf{A}\cdot \mathbf{F}_i + h_i \mathbf{G}$ for $i=1,\ldots k$. 

    $\mathbf{Game\ 2:}$ In this game, we change the way  
    $\mathbf{U}_{\theta,1}$ and $\mathbf{U}_{\theta,2}$ are generated. 

    For each node $\theta$, 
    we sample $\mathbf{Z}_{t^*,\theta} \in \mathcal{D}_{\mathbb{Z}^{m\times l},\rho}$, 
    compute $\mathbf{U}_{\theta,2} = \mathbf{A}_{t^*} \mathbf{Z}_{t^*,\theta}$, 
    $\mathbf{U}_{\theta,1} = \mathbf{U}-\mathbf{U}_{\theta,2}$. 

    $\mathbf{Game\ 3:}$ In this game, we reduce the KT-sIND-CPA security of 
    our scheme against $\mathcal{A}$ to the LWE assumption. 

    $\mathbf{Init.}$ $\mathcal{A}$ submits $(ID_s^*,t^*)$ to $\mathcal{B}$. 
    
    $\mathbf{Setup.}$ 
    $\mathcal{B}$ queries the LWE oracle $\mathcal{O}$, and obtains pairs $(\mathbf{u}_i,v_i)\in \mathbb{Z}_q^n\times \mathbb{Z}_q$ for  
    $i=1,\ldots m+h_1$.
    Let $\mathbf{A} = \left[ \mathbf{u}_1,\ldots \mathbf{u}_m \right]\in \mathbb{Z}_q^{n\times m}$ 
    and $\mathbf{V} = \left[ \mathbf{u}_{m+1},\ldots \mathbf{u}_{m+h_1} \right]\in \mathbb{Z}_q^{n \times h_1}$, and set 
    $(\mathbf{B}_1,\mathbf{B}_2,\mathbf{G},\mathbf{C}_1,\ldots, \mathbf{C}_k,\mathbf{U},\mathbf{v})$ as $\mathbf{Game\ 1}$.  
    Then, $\mathcal{B}$ sends $(\mathbf{A},\mathbf{B}_1,\mathbf{B}_2,\mathbf{G},\mathbf{C}_1,\ldots,\mathbf{C}_k,\mathbf{U},\mathbf{V},\mathbf{v})$ 
    to $\mathcal{A}$. 

    $\mathbf{Phase-1}$ 

    SerKG Query.
    $\mathcal{A}$ submits $ID_s \neq ID_s^*$ to $\mathcal{B}$.
    Then, $\mathcal{B}$ runs 
    $\mathsf{SampleRight}(\mathbf{A}, \mathbf{G}',\mathbf{R}_1^*,\mathbf{T}_{\mathbf{G}},\mathbf{v},\rho)
    \to \mathbf{z}_{ID_s}$, 
    $\mathsf{SampleRight}(\mathbf{A}, \mathbf{G}',\mathbf{R}_1^*,\mathbf{T}_{\mathbf{G}},\mathbf{V},\rho)
    \to \mathbf{Z}_{ID_s}$
    where $\mathbf{G}' = (H_1(ID_s)-H_1(ID_s^*))\cdot \mathbf{G}$,
    and sends $(\mathbf{z}_{ID_s},\mathbf{Z}_{ID_s})$ to $\mathcal{A}_o$. 
    Notably, we know that $H_1(ID_s)-H_1(ID_s^*)$ is full rank and 
    therefore $\mathbf{T}_{\mathbf{G}}$ is is also
    a trapdoor for the lattice $\Lambda_q^{\bot}(\mathbf{G}')$.  

    UserKG Query.  
    $\mathcal{A}$ submits $ID_u$ to $\mathcal{B}$. 
    Then, 
    $\mathcal{B}$ runs $\mathsf{SampleBasisRight}(\mathbf{A},\mathbf{G}',\mathbf{R}_1^*,\mathbf{T}_{\mathbf{G}},\rho)
    \to \mathbf{T}_{\widetilde{ID_u}}$ 
    where $\mathbf{G}' = (H_1(\widetilde{ID_u})-H_1(ID^*_s))\cdot \mathbf{G}$, and 
    sends $\mathbf{T}_{\widetilde{ID_u}}$
    to $\mathcal{A}$. 

    $\mathsf{Token\ Query}.$ 
    $\mathcal{A}$ submits $ID_u$ to $\mathcal{B}$. 
    $\mathcal{B}$ sets $\mathbf{U}_{\theta,1}$, $\mathbf{U}_{\theta,2}$ as
    $\mathbf{Game\ 2}$, runs 
    $\mathsf{SampleRight}(\mathbf{A},\mathbf{G}',\mathbf{R}^*_{1},\mathbf{T}_{\mathbf{G}},\mathbf{U}_{\theta,1},\rho)
    \to \mathbf{Z}_{ID_u,\theta}$ where 
    $\mathbf{G}' = (H_1(\widetilde{ID_u})-H_1(ID^*_s))\cdot \mathbf{G}$, and 
    returns $(\theta,\mathbf{Z}_{ID_u,\theta})_{\theta\in \mathsf{path}(ID_u)}$ to $\mathcal{A}$. 

    UpdKG Query. 
    $\mathcal{A}$ submits $(\mathbf{x},t)$ to $\mathcal{B}$. 
    If $t = t^*$, $\mathcal{B}$ returns 
    $(\theta,\mathbf{Z}_{t^*,\theta} \cdot \mathbf{x})_{\theta\in \mathsf{KUNodes}(BT,RL_{\mathbf{x}},t^*)}$ 
    to $\mathcal{A}$; 
    otherwise, 
    for each $\theta \in \mathsf{KUNodes}(BT,RL_{\mathbf{x}},t)$, 
    $\mathcal{B}$ runs 
    $\mathsf{SampleRight}(\mathbf{A},\mathbf{G}',\mathbf{R}_{2}^*,\mathbf{T}_{\mathbf{G}},\mathbf{U}_{\theta,2},\rho) 
    \to \mathbf{Z}_{t,\theta}$, where $\mathbf{G}' = (H_1(t)-H_1(t^*))\cdot \mathbf{G}$, 
    and sends $(\theta,\mathbf{Z}_{t,\theta} \cdot \mathbf{x})_{\theta\in \mathsf{KUNodes}(BT,RL_{\mathbf{x}},t)}$ 
    to $\mathcal{A}$.  

    Revoke Query. $\mathcal{A}$ submits $RL_{\mathbf{x}},ID_u,t$ to $\mathcal{B}$. 
    Then, $\mathcal{B}$ runs $Revoke(ID_u,t,RL_{\mathbf{x}},st) \to RL_{\mathbf{x}}$, and sends $RL_{\mathbf{x}}$ to $\mathcal{A}$.



    $\mathbf{Challenge.}$ $\mathcal{A}$ submits two keywords $\omega^*_0$ and $\omega^*_1$ to $\mathcal{B}$.  
    $\mathcal{B}$ flips a coin, gets a bit $b \in \{0,1\}$, 
    and modifies the keyword trapdoor  
    $\mathbf{kt}_1,\mathbf{kt}_2,\mathbf{kt}_3$ as follows. 
    Recall that $(\mathbf{u}_i,v_i)$ are entries from the LWE instance, 
    $\mathbf{u}^* = \left[v_1,\ldots v_{m}\right] \in \mathbb{Z}_q^{m}$ and 
    $\mathbf{v}^* = \left[v_{m+1},\ldots v_{m+h_1}\right] \in \mathbb{Z}_q^{h_1}$. 
    $\mathcal{B}$ runs $\mathsf{SampleBasisRight}(\mathbf{A},\mathbf{B}_{\widetilde{ID^*_u}},\mathbf{T}_{\mathbf{G}},\rho)
    \to \mathbf{T}_{\widetilde{ID^*_u}}$, $\mathsf{SampleLeft}(\mathbf{A}_{\widetilde{ID^*_u}}, \left[ \mathbf{A}_{\omega_b^*} | \mathbf{A}_{t^*} \right],
    \mathbf{T}_{\widetilde{ID^*_u}},\mathbf{v},\rho) \to  kt_{ID^*_u,\omega_b^*,t^*}$, 
    selects $kx \gets \mathbb{Z}_q $, 
    computes $\mathbf{kt}_1 = \left[ \mathbf{I}_m | \mathbf{R}_1^*  \right]^T \cdot \mathbf{u}^*$, 
    $\mathbf{kt}_2 = \mathbf{v}^* + \mathsf{bin}(kx) \cdot \left\lfloor \frac{q}{2} \right\rfloor$,  
    $\mathbf{kt}_3 = H_2(kx) \oplus bin(kt_{ID_u,\omega,t})$, 
    and returns $dt^* = (\mathbf{kt}_1,\mathbf{kt}_2,\mathbf{kt}_3)$ 
    to $\mathcal{A}$.

    $\mathbf{Phase-2.}$
    $\mathcal{A}$ can continue to make SerKG, UserKG, Token, UpdKG and Revoke queries, except for the SerKG
 query for $ID_s^*$, 
    and $\mathcal{B}$ answers as $\mathbf{Phase-1}$.  

    $\mathbf{Guess.}$ $\mathcal{A}$ outputs a guess $b' \in {0,1}$ for the bit $b$.
If $b' = b$, then $\mathcal{B}$ concludes that $\mathcal{O} = \mathcal{O}_s$;
otherwise, $\mathcal{B}$ outputs that the oracle is $\mathcal{O} =\mathcal{O}'_{s}$. 
    
    The $\mathcal{B}$'s advantage in breaking LWE assumption is 
    $\epsilon(\lambda) \geq 
    Pr\left[ \mathcal{B}\ \mathrm{wins} \right]
     = 
    \frac{1}{2} \times Pr\left[ \mathcal{B}\ \mathrm{wins}|\mathcal{O} =\mathcal{O}_{s}\right] + 
    \frac{1}{2} \times Pr\left[\mathcal{B}\ \mathrm{wins}|\mathcal{O} =\mathcal{O}'_{s}\right]- 
    \frac{1}{2}
    = \frac{1}{2}\times (\frac{1}{2} + \epsilon'(\lambda)) 
    + \frac{1}{2} \times \frac{1}{2} -\frac{1}{2}
    = \frac{\epsilon'(\lambda)}{2}$. 
\end{proof}

\section{Conclusion} \label{section:Con}
In this paper, an EQDDA-RKS scheme was proposed, where data users can securely perform keyword searches over ciphertexts and compute the inner product values of encrypted data. 
In EQDDA-RKS, when a data user’s function key is compromised, it can be revoked. 
Meanwhile, data users can temporarily delegate their keyword search and function computation rights to others. 
We formalised the definition and security model of our EQDDA-RKS scheme and presented a concrete construction.  
Moreover, we conducted the theoretical and experimental analysis of our EQDDA-RKS scheme and 
proved its security.
Notably, in our scheme, data users only need to interact with the CA once, making it suitable for MCC scenarios.
For future work, we plan to design an EQDDA-RKS scheme under the Ring Learning with Errors (RLWE) assumption to further improve efficiency. 

\section*{Acknowledgments}
This work was supported by the National Natural Science Foundation of China (Grant No. 62372103), the Natural Science Foundation of Jiangsu Province (Grant No. BK20231149), and the Jiangsu Provincial Scientific Research Center of Applied Mathematics (Grant No.BK202330020).

\bibliographystyle{IEEEtran}
\bibliography{paper}

\end{document}